\Crefname{remark}{Remark}{Remarks}
\Crefname{observation}{Observation}{Observations}
\theoremstyle{plain}
\newtheorem{theorem}{Theorem}[section]
\newtheorem{lemma}[theorem]{Lemma}
\newtheorem{corollary}[theorem]{Corollary}
\newtheorem{observation}[theorem]{Observation}
\theoremstyle{definition}
\theoremstyle{plain}
\newcounter{open}
\newtheorem{oq}[open]{Open Problem}
\theoremstyle{remark}
\newcommand{\s}{~\mspace{1mu}\allowbreak}
\newcommand{\mybox}[1]{\mspace{2mu}{\setlength{\fboxsep}{1.5pt}\color{lightgray}\boxed{\color{black}\scriptstyle #1\vphantom{+}}}\mspace{2mu}}
\newcommand{\A}{\mathsf{A}}
\newcommand{\B}{\mathsf{B}}
\newcommand{\C}{\mathsf{C}}
\newcommand{\D}{\mathsf{D}}
\newcommand{\E}{\mathsf{E}}
\newcommand{\F}{\mathsf{F}}
\renewcommand{\L}{\mathsf{L}}
\newcommand{\T}{\mathsf{T}}
\renewcommand{\c}{\mathsf{c}}
\renewcommand{\a}{\mathsf{a}}
\renewcommand{\b}{\mathsf{b}}
\renewcommand{\S}{\mathsf{S}}
\renewcommand{\O}{\mathsf{O}}
\newcommand{\I}{\mathsf{I}}
\newcommand{\X}{\mathsf{X}}
\newcommand{\Y}{\mathsf{Y}}
\newcommand{\bX}{\mybox{\X}}
\newcommand{\bY}{\mybox{\Y}}
\newcommand{\bA}{\mybox{\A}}
\newcommand{\bB}{\mybox{\B}}
\newcommand{\bXY}{\mybox{\X\Y}}
\newcommand{\bAXY}{\mybox{\A\X\Y}}
\newcommand{\bBXY}{\mybox{\B\X\Y}}
\newcommand{\bABXY}{\mybox{\A\B\X\Y}}
\renewcommand{\S}{\mathsf{S}}
\newcommand{\fC}{\mathcal{C}}
\newcommand{\fF}{\mathcal{F}}
\newcommand{\fL}{\mathcal{L}}
\newcommand{\fT}{\mathcal{T}}
\newcommand{\fZ}{\mathcal{Z}}
\newcommand{\bbempty}{\mybox{\phantom{A}}}
\newcommand{\bbX}{\mybox{\X}}
\newcommand{\bbY}{\mybox{\Y}}
\newcommand{\bbAX}{\mybox{\A\X}}
\newcommand{\bbBY}{\mybox{\B\Y}}
\newcommand{\bbXY}{\mybox{\X\Y}}
\newcommand{\bbXYp}{\mybox{\X\Y+}}
\newcommand{\bbAXYp}{\mybox{\A\X\Y+}}
\newcommand{\bbBXYp}{\mybox{\B\X\Y+}}
\newcommand{\bbABXYp}{\mybox{\A\B\X\Y+}}
\newcommand{\bbC}{\mybox{\C}}
\newcommand{\bbCX}{\mybox{\C\X}}
\newcommand{\bbCY}{\mybox{\C\Y}}
\newcommand{\bbACX}{\mybox{\A\C\X}}
\newcommand{\bbBCY}{\mybox{\B\C\Y}}
\newcommand{\bbCXY}{\mybox{\C\X\Y}}
\newcommand{\bbCXYp}{\mybox{\C\X\Y+}}
\newcommand{\bbACXYp}{\mybox{\A\C\X\Y+}}
\newcommand{\bbBCXYp}{\mybox{\B\C\X\Y+}}
\newcommand{\bbABCXYp}{\mybox{\A\B\C\X\Y+}}
\DeclareMathOperator{\re}{\mathcal R}
\DeclareMathOperator{\rere}{\overline{\mathcal R}}
\DeclareMathOperator{\rank}{\operatorname{rank}}
\newcommand{\nodeconst}{\ensuremath{\mathcal{N}}}
\newcommand{\edgeconst}{\ensuremath{\mathcal{E}}}
\newcommand{\gen}[1]{\langle #1 \rangle}
\DeclareMathOperator{\poly}{poly}
\newcommand{\LOCAL}{\ensuremath{\mathsf{LOCAL}}\xspace}
\newcommand{\set}[1]{\left\{#1\right\}}
\newcommand{\ccs}{\mathfrak{C}}
\newcommand{\ccc}{\mathcal{C}}
\newcommand{\diaginf}[2]{\mathrm{Inf}(#1,#2)}
\newcommand{\diagsup}[2]{\mathrm{Sup}(#1,#2)}
\newcommand{\fpprocname}[0]{\mathrm{SubFP}}
\newcommand{\fpproc}[2]{\fpprocname(#1,#2)}
\newcommand{\fpp}{\mathrm{FixedPoint}}
\newenvironment{mycover}
{\list{}{\listparindent 0pt
        \itemindent    \listparindent
        \leftmargin    1cm
        \rightmargin   1cm
        \parsep        0pt}%
    \raggedright
    \item\relax}
{\endlist}
\newcommand{\myemail}[1]{\,$\cdot$\, {\small #1}}
\newcommand{\myaff}[1]{\,$\cdot$\, {\small #1}\par\smallskip}
\definecolor{darkgreen}{rgb}{0,0.5,0}
\definecolor{darkred}{rgb}{0.4,0,0}
\begin{document}

\setcounter{page}{0}
\thispagestyle{empty}

\begin{mycover}
    {\huge\bfseries\boldmath Towards Fully Automatic Distributed Lower Bounds \par}
    \bigskip
    \bigskip
    \bigskip

    \textbf{Alkida Balliu}
    \myemail{alkida.balliu@gssi.it}
    \myaff{Gran Sasso Science Institute}

    \textbf{Sebastian Brandt}
    \myemail{brandt@cispa.de}
    \myaff{CISPA Helmholtz Center for Information Security}
    
    \textbf{Fabian Kuhn}
    \myemail{kuhn@cs.uni-freiburg.de}
    \myaff{University of Freiburg}

    \textbf{Dennis Olivetti}
    \myemail{dennis.olivetti@gssi.it}
    \myaff{Gran Sasso Science Institute}
    
    \textbf{Joonatan Saarhelo}
    \myemail{joon.saar@gmail.com}
    \myaff{Unaffiliated}

\end{mycover}
\bigskip

\begin{abstract}
	In the past few years, a successful line of research has lead
        to lower bounds for several fundamental local graph problems
        in the distributed setting. These results were obtained via a
        technique called \emph{round elimination}. On a high level,
        the round elimination technique can be seen as a recursive
        application of a function that takes as input a problem $\Pi$
        and outputs a problem $\Pi'$ that is one round easier than
        $\Pi$. Applying this function recursively to concrete problems
        of interest can be highly nontrivial, which is one of the reasons that has made the technique difficult to approach. The contribution of our paper is threefold. 
	
	Firstly, we develop a new and fully automatic method for finding so-called \emph{fixed point relaxations} under round elimination. The detection of a non-$0$-round solvable fixed point relaxation of a problem
        $\Pi$ immediately implies lower bounds of $\Omega(\log_\Delta
        n)$ and $\Omega(\log_\Delta \log n)$ rounds for deterministic
        and randomized algorithms for $\Pi$, respectively.

	Secondly, we show that this automatic method is indeed useful, by obtaining lower bounds for defective coloring problems. More precisely, as an application of our procedure, we show that the problem of coloring the nodes of a graph with $3$ colors and defect at most $(\Delta - 3)/2$ requires  $\Omega(\log_\Delta n)$ rounds for deterministic algorithms and  $\Omega(\log_\Delta \log n)$ rounds for randomized ones.
        Additionally, we provide a simplified proof for an existing defective coloring lower bound.
        We note that lower bounds for coloring problems are notoriously challenging to obtain, both in general, and via the round elimination technique.

	Both the first and (indirectly) the second contribution build on our third contribution---a new and conceptually simple way to compute the one-round easier problem $\Pi'$ in the round elimination framework. This new procedure provides a clear and easy recipe for applying round elimination, thereby making a substantial step towards the greater goal of having a fully automatic procedure for obtaining lower bounds in the distributed setting.
\end{abstract}

\clearpage
\setcounter{page}{0}
\thispagestyle{empty}

\tableofcontents
\clearpage

\section{Introduction}

In the standard setting of distributed graph algorithms, known as the \LOCAL model~\cite{Linial1992,Peleg2000}, the nodes $V$ of a graph $G=(V,E)$ communicate over the edges $E$ of $G$ in synchronous rounds. Initially, the nodes do not know anything about $G$ (except for their own unique identifier and possibly some global parameters such as the number of nodes $n$ or the maximum degree $\Delta$) and at the end, each node must output its local part of the solution for the graph problem that needs to be solved. For example, if we intend to compute a vertex coloring of $G$, at the end, every node must output its own color in the final coloring. The time complexity of such a distributed algorithm is then measured as the number of rounds needed from the start until all nodes have terminated.

The study of the complexity of solving graph problems in the \LOCAL model and in related distributed models has been a highly active area of research with a variety of substantial results over the last years. Apart from very significant and insightful new algorithmic results for distributed graph problems (e.g., \cite{chang16exponential,ChangLP18,FOCS18-derand,Rozhon2020,FGGKR23,GGHIR23}), the last ten years in particular also brought astonishing progress on proving lower bounds for distributed graph problems in the \LOCAL model (e.g., \cite{Brandt2016,chang16exponential,Balliu2019,hideandseek}). Essentially all of this recent progress on lower bounds has been obtained by a technique known as \emph{round elimination}. The technique works for a class of problems known as \emph{locally checkable} problems~\cite{NaorS95,Brandt2019}, which encompasses many of the most fundamental problems studied in the context of the \LOCAL model.

\paragraph{Round Elimination.}
On a very high level, round elimination works as follows. Given a problem $\Pi$ provided in the proper language, the round elimination framework provides a way to mechanically construct a problem $\Pi'=\hat{\re}(\Pi)$ that is exactly one round easier (under some mild assumptions). That is, if $\Pi$ can be solved in $R$ rounds, then $\Pi'$ can be solved in $R-1$ rounds (and vice versa).\footnote{Formally, round elimination has to be performed on a weaker version of the \LOCAL model, which is known as the port numbering model. In the port numbering model, nodes do not have unique IDs, but they can distinguish their neighbors through different port numbers. Round elimination lower bounds in the port numbering model can then be lifted to lower bounds in the standard \LOCAL model~\cite{hideandseek}.} For proving an $R$-round lower bound on problem $\Pi$, one then has to show that the problem $\hat{\re}^{(R-1)}(\Pi)$, or a relaxation of it, is not trivial, i.e., it cannot be solved in $0$ rounds.

In its modern form, round elimination has first been used to show that the problems of computing a sinkless edge orientation or a $\Delta$-vertex coloring of $G$ require $\Omega(\log\log n)$ rounds with randomization and $\Omega(\log n)$ rounds deterministically~\cite{Brandt2016,chang16exponential}.\footnote{We remark that although phrased differently, the classic proofs that $3$-coloring a ring requires $\Omega(\log^* n)$ rounds~\cite{Naor1991,Linial1992} can also be seen as round elimination proofs.} Subsequently Brandt~\cite{Brandt2019} showed that round elimination can be applied to essentially every locally checkable problem and if a problem $\Pi$ is specified in the right language, the problem $\hat{\re}(\Pi)$ can be computed in a fully automatic way. \emph{Automatic round elimination} in the following lead to a plethora of new distributed lower bounds. We next list some of the highlights. In \cite{Balliu2019}, it was shown that even in regular trees, computing a maximal matching requires $\Omega(\min\set{\Delta, \log_\Delta\log n})$ rounds with randomized algorithms and $\Omega(\min\set{\Delta, \log_\Delta n})$ rounds with deterministic algorithms. Previously, the best known lower bound as a function of $\Delta$ for this problem was only $\Omega(\log\Delta/\log\log\Delta)$~\cite{KuhnMW16}. By a simple reduction, the same lower bound as for maximal matching also holds for computing a maximal independent set (MIS). In later work, the same lower bound was also proven directly for the MIS problem on trees and it was generalized in particular to the problems of computing ruling sets and of computing maximal matchings in hypergraphs, leading to tight (as a function of $\Delta$) lower bounds for those problems~\cite{balliurules,BBKOmis,hideandseek,Balliu0KO23}.

While round elimination has been extremely successful for proving many new lower bounds for computing locally checkable graph problems, the method has so far not been able to provide new lower bounds for many of the standard variants of \emph{distributed graph coloring} and thus for some of the most important and most well-studied locally checkable problems. When applying round elimination to standard $(\Delta+1)$-coloring and related graph coloring problems, the descriptions of the problems in the sequence obtained by applying $\hat{\re}(\cdot)$ iteratively grow doubly exponential in \emph{each} round elimination step (i.e., with each application of $\hat{\re}(\cdot)$) and thus even the one round easier problem $\hat{\re}(\Pi)$ often becomes too complex to understand.
We emphasize that all the recent progress on developing new lower bounds for locally checkable problems in the \LOCAL model has only been possible because the work of Brandt~\cite{Brandt2019} describes an \emph{automatic and generic} way to turn any locally checkable problem (given in the right formalism) into a locally checkable problem that is exactly one round easier.
Moreover, for much of the progress, it was crucial that there exists efficient software as described by Olivetti in \cite{Olivetti2019} that can be used to apply round elimination to concrete locally checkable problems.
We are convinced that in order to continue the present success story, further developing the existing automatic techniques will be indispensible and the main objective of this paper is to provide more efficient and more powerful methods for finding distributed lower bounds in an automatic fashion.

\paragraph{Distributed Coloring.}
As a concrete application, we aim to make
progress towards obtaining lower bounds for distributed coloring problems.
To achieve this, we consider the problem of computing a \emph{$d$-defective $c$-coloring}.
For two parameters $c$ and $d$, a $d$-defective $c$-coloring of a graph $G=(V,E)$ is a partition of $V$ into $c$ color classes so that every node $v\in V$ has at most $d$ neighbors of the same color.
Such colorings have become an important tool in many recent distributed coloring algorithms~\cite{barenboim14distributed,BarenboimE10,BarenboimE11,barenboim16sublinear,BEG18,Kuhn20,Balliu0KO22a,BalliuKO20,FuchsK23}. In \cite{FuchsK23}, it is also argued that further progress on defective coloring algorithms might be key towards obtaining faster distributed $(\Delta+1)$-coloring algorithms and proving hardness results on distributed defective coloring algorithms might therefore also provide insights into understanding the hardness of the standard $(\Delta+1)$-coloring problem.
To obtain proper colorings, defective colorings are commonly used as a subroutine in a recursive manner and to obtain efficient coloring algorithms using few colors, it would be particularly convenient to have algorithms that efficiently compute defective colorings with $c$ colors and defect only $(1+o(1))\Delta/c$. Such defective colorings always exist~\cite{lovasz66} and efficient distributed algorithms for computing such colorings would immediately lead to faster $O(\Delta)$-coloring algorithms and potentially also to faster $(\Delta+1)$-coloring algorithms. In fact, a generalized variant of $(1+o(1))\Delta/2$-defective $2$-colorings of line graphs have recently been used in a breakthrough result that obtains the first $\poly\log\Delta +O(\log^* n)$-round algorithm for computing a $(2\Delta-1)$-edge coloring of a graph~\cite{Balliu0KO22a}.

In contrast, the best known algorithms for computing an $O(\Delta)$ or $(\Delta+1)$-vertex coloring require time polynomial in $\Delta$~\cite{barenboim16sublinear,fraigniaud16local,BEG18,MausTonoyan20}. For vertex coloring, it is already known that computing $(1+o(1))\Delta/2$-defective $2$-colorings requires $\Omega(\log n)$ rounds even in bounded-degree graphs~\cite{BalliuHLOS19}. This raises the important question whether an increased number of $c > 2$ colors can admit the desired efficient $(1+o(1))\Delta/c$-defective $c$-colorings. Already the case of $c = 3$ was wide open previous to our work and an important open problem in its own right: obtaining the desired efficient defective coloring algorithm for $c = 3$ would have fundamental consequences by improving the complexity of $O(\Delta)$-coloring (and of $(\Delta + 1)$-coloring if extendable to list defective colorings~\cite{FuchsK23}), while proving a substantial lower bound for any such algorithm might pave the way for proving similar lower bounds for larger $c$ in the future. As one of the main technical results of this paper, we show that computing $(1+o(1))\Delta/3$-defective colorings (and in fact $(1-o(1))\Delta/2$-defective colorings) with $3$ colors requires $\Omega(\log n)$ rounds. We conjecture that a similar result should also hold for more than $3$ colors and we hope that such a result can be proven by extending the techniques that we introduce in this paper.

\subsection{Our Contributions}
\label{sec:ourcongen}

In the present paper, we take the task of automating round elimination and thus automating the search for distributed lower bounds one step further. In the following, we provide a high-level discussion of the contributions of the paper.

\subsubsection{An Automatic Way of Generating Round Elimination Fixed Points}\label{sec:contribution:fpprocedure}

Chang, Kopelowitz, and Pettie~\cite{chang16exponential} showed that in the \LOCAL model, every locally checkable problem $\Pi$ can either be solved deterministically in $f(\Delta) \cdot O(\log^* n)$ rounds (for some function $f(\cdot)$) or $\Pi$ has a deterministic $\Omega(\log_\Delta n)$ and a randomized $\Omega(\log_\Delta\log n)$ lower bounds. In the following, we call problems of the first type easy problems and problems of the second type hard problems.

\paragraph{Fixed Points Imply Hardness Results.}
A particularly elegant way to prove that a problem is of the second type is through round elimination fixed points. A locally checkable problem $\Pi$ is called a round elimination fixed point if $\hat{\re}(\Pi)=\Pi$, i.e., if the problem that is ``one round easier'' than $\Pi$ is $\Pi$ itself. We say that a problem $\Pi$ is a \emph{non-trivial fixed point} if $\Pi$ is a round elimination fixed point that cannot be solved in $0$ rounds. If a problem $\Pi$ is a non-trivial fixed point, existing standard techniques directly imply that $\Pi$ is a hard problem, i.e., that any deterministic \LOCAL algorithm to solve $\Pi$ requires at least $\Omega(\log_\Delta n)$ rounds and every randomized such algorithm requires at least $\Omega(\log_\Delta\log n)$ rounds (see, e.g., \cite{hideandseek}).
Moreover, we obtain the same lower bounds for $\Pi$ if $\Pi$ is not a fixed point itself but can be relaxed to a non-trivial fixed point $\tilde{\Pi}$.
In fact, while interesting problems exist that are non-trivial fixed points themselves (see, e.g., \cite{Brandt2016}), finding a non-trivial \emph{fixed point relaxation} $\tilde{\Pi}$ for $\Pi$ (which we may simply call a \emph{fixed point for $\Pi$}) is a more common way to prove lower bounds for a given problem $\Pi$ (see, e.g., \cite{binary,hideandseek,Balliu0KO23}).
Furthermore, as shown in~\cite{hideandseek,Balliu0KO23}, surprisingly, fixed points can also be used to prove lower bounds on the $\Delta$-dependency of easy problems, i.e., problems that can be solved in time $f(\Delta) \cdot O(\log^* n)$.

\paragraph{Fixed Points Can Be Large.}
In order to understand the distributed complexity of locally checkable problems, we therefore need methods to find non-trivial fixed points for such problems in case such fixed points exist. We argue that, similarly to performing and analyzing round elimination, also finding new fixed points will in many cases require some automated support for searching for fixed points. Note that in general, even for a relatively simple problem $\Pi$ with a small description, the smallest fixed point relaxation $\tilde{\Pi}$ of $\Pi$ might be much more complex and have a much larger description than the original problem $\Pi$. Consider for example the $\Delta$-coloring problem in $\Delta$-regular graphs. While the problem itself can be described\footnote{For an introduction to the description of problems, see \Cref{sec:moreround} or \Cref{sec:problems}.} with $\Delta$ different labels and $\Delta$ different node configurations (one for each possible color), the round elimination fixed point for $\Delta$-coloring that has been described in \cite{hideandseek} consists of $2^\Delta$ different labels and $2^\Delta-1$ different node configurations (and no smaller fixed point for $\Delta$-coloring is known or suspected to exist). Finding fixed points for problems that are not as symmetric and not as well-behaved as $\Delta$-coloring might quickly become infeasible when it has to be done by hand, even when using the support of existing software for performing single round elimination steps.

\paragraph{Our Contribution: a Procedure for Finding Fixed Points Automatically.}
As our first main contribution, we provide a method to automatically
generate relaxations $\tilde{\Pi}$ of a given locally checkable
problem $\Pi$ that are fixed points under the round elimination
framework. As input, the method takes a problem $\Pi$ and an extended
label set $\tilde{\Sigma}$ that satisfies $\Sigma\subseteq \tilde{\Sigma}$,
where $\Sigma$ is the set of labels of $\Pi$. In addition, the method
uses a diagram $D$ that determines certain relations between the labels in
$\tilde{\Sigma}$. Formally, $D$ is a directed acyclic graph with node
set $\tilde{\Sigma}$ and with certain additional properties. Based on
the original problem $\Pi$ and the diagram $D$, the problem
$\tilde{\Pi}$ is obtained in a way that is very similar to a novel
way of performing round elimination that we outline in \Cref{sec:moreround} and formally introduce in
\Cref{sec:newre}. Whether the generated fixed point $\tilde{\Pi}$ is
non-trivial (i.e., whether $\tilde{\Pi}$ is not $0$-round-solvable)
can depend on the diagram $D$ that we use. We introduce and formally
analyze our fixed point generation method in \Cref{sec:procedure} and
we discuss ways to select a good diagram for the method in
\Cref{sec:diagram}. 

\paragraph{Our Contribution: a First Simple Application of Our Procedure.}
As a first direct application we get a simpler
proof of a result of \cite{hideandseek}: By applying our method to the
$\Delta$-coloring problem, together with a simple diagram (which is
basically the Hasse diagram of the power set of the $\Delta$ labels of
$\Delta$-coloring), we directly get the $\Delta$-coloring fixed point
that was presented in \cite{hideandseek}.

\subsubsection{Lower Bounds for Defective Coloring Problems}\label{sec:contribution:defective}

As explained in the introduction, understanding whether $(1+o(1))\Delta/c$-defective $c$-coloring is an easy or a hard problem is of fundamental importance, since the complexity of such a problem may have direct implications on the complexity of $(\Delta+1)$-coloring, which is a major open question in the field.
As a more involved application of our fixed point generation method, we develop lower bounds for \emph{defective coloring problems}.

\paragraph{Our Contribution: Defective $2$-Coloring.}
Not many bounds on the complexity of defective colorings are known (we discuss known bounds in \Cref{sec:related}). An exception is the case of defective colorings with $2$ colors, which is understood. By computing an MIS (which can be done in $O(\Delta+\log^* n)$ rounds~\cite{barenboim14distributed}) and assigning the MIS nodes one of the colors and the remaining nodes the other color, one obtains a $(\Delta-1)$-defective $2$-coloring of the graph. Interestingly, the problem becomes hard if we try to just go one step further: in \cite{BalliuHLOS19}, it was shown that computing a $(\Delta-2)$-defective $2$-coloring is a hard problem. This result has been shown via a reduction from the hardness of sinkless orientation. However, this reduction is based on the construction of virtual graphs on which the defective coloring algorithm is executed in order to obtain a sinkless orientation on the original graph, and in particular the lower bounds are not proved by providing a non-trivial fixed point. As a second application of our fixed point generation method, we show the following.

\vspace{0.2cm}
\begin{mdframed}[innertopmargin=8pt,innerbottommargin=5pt,innermargin=5pt]
	\center
	There exists a non-trivial fixed point relaxation for $(\Delta-2)$-defective $2$-coloring.
\end{mdframed}
\vspace{0.1cm}

This result is significant in light of the fundamental open question stated in \cite{trulytight,hideandseek} asking whether, for every locally checkable problem $\Pi$ that has a deterministic $\Omega(\log_\Delta n)$ and a randomized $\Omega(\log_\Delta\log n)$ lower bound, such a lower bound can be proven via a round elimination fixed point. The $(\Delta-2)$-defective $2$-coloring problem was one of an only very small number of such problems for which previously no fixed point lower bound proof was known.

\paragraph{Our Contribution: Defective $3$-Coloring.}
As a main application of our automatic fixed point procedure, we study the defective coloring problem with $3$ colors. From the arbdefective coloring lower bound of \cite{hideandseek}, it is known that $d$-defective $3$-coloring is hard if $3(d+1)\leq \Delta$ and thus if $d\leq \frac{\Delta}{3}-1$. In \cite{BalliuHLOS19}, it was further shown that if $d\geq \frac{2\Delta-4}{3}$, $d$-defective $3$-coloring can be solved in $O(\Delta+\log^* n)$ rounds. By using our fixed point method, we manage to partially close this gap by proving the following statement (cf.~\Cref{thm:3col}).

\vspace{0.2cm}
\begin{mdframed}[innertopmargin=8pt,innerbottommargin=5pt,innermargin=5pt]
	For $d\leq\frac{\Delta - 3}{2}$, the $d$-defective $3$-coloring problem is a hard problem, i.e., it requires $\Omega(\log_\Delta n)$ rounds deterministically and $\Omega(\log_\Delta\log n)$ rounds with randomization.
\end{mdframed}
\vspace{0.1cm}

This in particular his implies that there is no $(1 + o(1))\Delta/3$-defective $3$-coloring algorithm that violating those time lower bounds, thereby ruling out the possibility of using defective $3$-coloring as an approach for attacking $O(\Delta)$ and $(\Delta + 1)$-coloring in the manner outlined before \Cref{sec:ourcongen}.

We note that the fixed point that we automatically generate for this problem is highly non-trivial, and that manually proving that the fixed point that we provide is indeed a fixed point would require to perform a case analysis over hundreds of cases. For this reason,  we do not manually prove that the fixed point that we provide is indeed a fixed point. Instead, we provide a way to automate this process, by reducing the problem of determining whether a problem is a fixed point to the problem of proving that certain systems of inequalities have no solution.
The remaining task of showing that said systems have no solution can be performed automatically via computer tools. This automatization process provides a partial answer to Open Question 9 in \cite{hideandseek}.
The details appear in \Cref{sec:3col}.

\subsubsection{A More Efficient Method for Performing Round Elimination}
\label{sec:moreround}

The procedure for finding fixed points automatically mentioned in \Cref{sec:contribution:fpprocedure} is based on a novel way for applying the round elimination technique. More in detail, such a result is obtained as follows. We first provide a novel way for applying round elimination, that is, a novel way for computing a locally checkable problem $\Pi'$ that is exactly one round easier than $\Pi$. Then, we show that, by applying such a procedure in a slightly modified way, instead of obtaining the problem $\Pi'$, we obtain some problem $\tilde{\Pi}$ which is guaranteed to be a fixed point relaxation of $\Pi$. While in some cases the obtained problem $\tilde{\Pi}$ may be solvable in $0$ rounds (i.e., this \emph{must} be the case when applying the procedure on an \emph{easy} problem), the results presented in \Cref{sec:contribution:defective} are obtained by proving that the fixed points that we get by applying the procedure on defective colorings are non-trivial.

While our new procedure for applying the round elimination technique has applications for finding fixed points, this procedure is interesting on its own. In order to better explain the reason, we first highlight the main issue of the standard way of applying round elimination.
While for a given locally checkable problem $\Pi$, the framework of Brandt~\cite{Brandt2019} gives a fully automatic way for computing a locally checkable problem $\Pi'$ that is exactly one round easier than $\Pi$, this computation is in general not computationally efficient. To illustrate why, we somewhat informally sketch how round elimination works (for a formal description we refer to \Cref{ssec:re}).

\paragraph{How Round Elimination Works.}For the automatic round elimination framework, a locally checkable problem on a $\Delta$-regular graph $G=(V,E)$ is formalized on the bipartite graph $H$ between the nodes $V$ and the edges $E$ of $G$.\footnote{More generally, round elimination can be defined on biregular bipartite graphs or hypergraphs (see \Cref{sec:preliminaries}).} That is, $H$ is obtained by adding an additional node in the middle of the edges in $E$. Each edge of $G$ is thus split into $2$ halfedges. A solution to a locally checkable problem is given by an assignment of labels from a finite alphabet $\Sigma$ to all edges of $H$ (i.e., to each halfedge of $G$). The validity of a solution is given by a set of allowed node and edge configurations, where a node configuration is a multiset of labels of size $\Delta$ and an edge configuration is a multiset of labels of size $2$. One step of round elimination on $G$ is done by performing two steps of round elimination on $H$ (note that one round on $G$ corresponds to two rounds on $H$). When starting from a node-centric problem $\Pi$ (i.e., a problem where the nodes in $H$ corresponding to \emph{nodes} in $G$ assign the labels to their incident half-edges), the first step transforms $\Pi$ into an edge-centric problem $\Pi'$ that is exactly one round easier on $H$ and the second step transforms the problem into a node-centric problem $\Pi''$ that is one round easier than $\Pi'$ on $H$ and thus one round easier than $\Pi$ on $G$. The label set $\Sigma'$ of $\Pi'$ is the power set $2^{\Sigma}$ of $\Sigma$ and the label set $\Sigma''$ of $\Pi''$ is the power set of $\Sigma'$.
The allowed edge configurations of $\Pi'$ are, roughly speaking, the multisets $\{\L_1,\L_2\}$ of labels $\L_1,\L_2\in \Sigma'=2^\Sigma$ such that \emph{for all} $\ell_1\in \L_1$ and $\ell_2\in \L_2$, $\{\ell_1,\ell_2\}$ is an allowed edge configuration of $\Pi$.\footnote{In the formally precise definition of the set of edge configurations of $\Pi'$ provided in \Cref{ssec:re}, we'll refine this definition slightly.}
The allowed node configurations of $\Pi'$ are all the multisets $\{\L_1,\dots,\L_{\Delta}\}$ of labels $\L_i\in \Sigma'$ (that appear in some allowed edge configuration of $\Pi'$) such that \emph{there exists} an allowed node configuration $\{\ell_1,\dots,\ell_\Delta\}$ with $\ell_i\in \L_i$ in problem $\Pi$. In the second step, $\Pi''$ is obtained in the same way from $\Pi'$, but by exchanging the roles of nodes and edges. That is, in the second step, the ``for all'' quantifier is applied to the allowed node configurations and the ``exists'' quantifier is applied to the allowed edge configurations (of $\Pi'$).

\paragraph{The Computationally Expensive Part.}
Note that from a computational point of view, it is mainly the
application of the ``for all'' quantifier on the edge side when going
from $\Pi$ to $\Pi'$ and even more importantly on the node side when
going from $\Pi'$ to $\Pi''$ that is challenging. When implemented
naively, one has to iterate over all possible size-$2$ multisets of
$\Sigma'$ in the first step and over all possible size-$\Delta$
multisets of $\Sigma''$ in the second step. While in general, the
problem $\Pi''$ that is one round easier than the original problem
$\Pi$ on $G$ can be doubly exponentially larger than $\Pi$, for
interesting problems this is often not the case. For such more
well-behaved problems, the ``for all'' case can potentially be
computed in a much more efficient way. 

\paragraph{Our Contribution.}As our final contribution, we
give a new elegant way to perform the application of the ``for all''
quantifier in round elimination. The method makes use of the fact that
often the node and edge configurations of a problem can be represented
by a relatively small number of \emph{condensed configurations}. A
condensed node or edge configuration is a multiset
$\{S_1,\dots,S_k\}$  (where $k=\Delta$ for nodes and $k=2$ for edges)
of sets $S_1, \dots, S_k\subseteq\Sigma$ of labels, representing the
set of all configurations $\{\ell_1,\dots,\ell_k\}$ for which
$\ell_i\in S_i$ for all $i\in\{1,\dots,k\}$. We prove that the ``for
all'' part of round elimination can be performed by a simple process
that consists of steps of the following kind. In each step, we take
two condensed configurations of the current problem and we combine
those condensed configurations in some way to generate new condensed
configurations. We then remove redundant configurations and continue
until such a step cannot generate any new condensed configurations. In
the end, each condensed configuration $\{S_1,\dots,S_k\}$ is
interpreted as a multiset of labels of the new problem. We formally
define the process and prove its correctness in \Cref{sec:newre}.

Informally, we prove that each configuration of the resulting problem can be described as a binary tree, where leaf nodes are condensed configurations of the original problem, and each internal node of the tree is the configuration obtained by combining its two children. 

Since our new procedure is mainly used as a tool for obtaining fixed points, we do not formally state the benefits of this new procedure. However, we informally highlight the following:
\begin{itemize}
	\item The new procedure avoids the cost of enumerating all possible size-$\Delta$ multisets of $\Sigma''$ in the second step, and its running time only depends on the number of input configurations, output configurations, and the height of the aforementioned trees. Such trees have height at most $\Delta \cdot |\Sigma'|$, and we observe that, for many natural problems, the height is much smaller. We thus obtain that, for many problems of interest, the running time of the new procedure is output-sensitive. 
	\item Thanks to the new procedure, we obtain that, in order to check whether a problem is a fixed point, it is sufficient to check whether the combination of pairs of condensed configurations does not create new configurations. While this drastically reduces the time complexity of checking whether a problem is a fixed point, this also makes it much easier to \emph{prove} that a problem is a fixed point. In fact, in the latter case, it is sufficient to consider two configurations at a time, instead of going through an exponential number of cases. We point out that, even if some friendly oracle gave us the fixed point for defective $3$-coloring presented in \Cref{sec:3col}, we believe that, without exploiting this new procedure, proving that such a problem is indeed a fixed point would not have been possible.
\end{itemize}

\subsection{Further Related Work.}\label{sec:related}
\paragraph{Existing Fixed Points.}
In \cite{Brandt2016,Brandt2019}, it is shown that if expressed in the right way, the problem of computing a sinkless orientation of the edges of a $\Delta$-regular graph (for $\Delta\geq 3$) is a non-trivial round elimination fixed point, which implies an $\Omega(\log_\Delta n)$ deterministic and an $\Omega(\log_\Delta\log n)$ randomized lower bound for the sinkless orientation problem.
An example for a problem that is not a fixed point itself but can be relaxed to a non-trivial fixed point is the $\Delta$-coloring problem. While successively applying round elimination to the $\Delta$-coloring problem results in a sequence of problems whose descriptions get exponentially larger in \emph{each} step, it is shown in \cite{hideandseek} that there exists a problem $\tilde{\Pi}$ that contains $\Delta$-coloring (i.e., solving $\Delta$-coloring solves $\tilde{\Pi}$, but not vice versa) such that $\tilde{\Pi}$ is a non-trivial round elimination fixed point. Non-trival round elimination fixed point relaxations for other locally checkable graph problems have been obtained in \cite{binary,Balliu0KO23}. 

\paragraph{Using Fixed Points For Proving Lower Bounds as a Function of $\Delta$.}
In \cite{hideandseek,Balliu0KO23}, it is shown that fixed points can also be used to determine lower bounds on the $\Delta$-dependency of problems that can be solved in time $f(\Delta) \cdot O(\log^* n)$. For example, when applying round elimination to the maximal independent set (MIS) problem, one essentially obtains problems that consist of an MIS on a part of the graph and a coloring with a certain number of colors on the remainder of the graph. However, as even the respective coloring problem alone grows exponentially in each round elimination step, the same is true for MIS. It is shown in \cite{hideandseek} that if one relaxes the problem sequence such that the problems in the sequence essentially consist of an MIS on one part of the graph and a fixed point relaxation of the coloring problem on the remainder of the graph, then one obtains a problem sequence that becomes manageable and that can be used to obtain tight (as a function of $\Delta$) lower bounds for MIS and also for many more general problems.

\paragraph{Defective Coloring.}
Not only we do not know the complexity of $d$-defective $c$-coloring for most of the values of $c$ and $d$, but also we do not even know in which cases it is an easy problem (i.e., it can be solved in $f(\Delta) \cdot O(\log^* n)$ rounds) and in which cases it is a hard problem (i.e., deterministic algorithms require $\Omega(\log_\Delta n)$ rounds and randomized algorithms require $\Omega(\log_\Delta\log n)$ rounds). It is known that a $d$-defective $O\big(\big(\frac{\Delta}{d+1}\big)^2\big)$-coloring can be computed in $O(\log^* n)$ rounds (with no additional dependency on $\Delta$)~\cite{Kuhn2009}. By using a simpler version of an algorithm described in \cite{BalliuHLOS19}, it is further possible to compute a $d$-defective $p^2$-coloring in $O(\Delta+\log^* n)$ rounds as long as $(d+1)p>\Delta$. For $d$-arbdefective $c$-coloring, which is a relaxation of $d$-defective $c$-coloring, it is further known that the problem is easy if and only if $c(d+1)>\Delta$~\cite{hideandseek}. This in particular implies that $d$-defective $c$-coloring is a hard problem if $c(d+1)\leq \Delta$. We therefore know that the problem is hard if the number of colors is at most $\frac{\Delta}{d+1}$ and that it is easy if the number of colors is more than $\big(\frac{\Delta}{d+1}\big)^2$.

\section{Road Map}\label{sec:roadmap}
\paragraph{Preliminaries.} In \Cref{sec:preliminaries}, we provide some preliminaries. We first define the model of computation and the language that we use to formally describe problems. Then, we describe the round elimination framework.

\paragraph{A new way of applying round elimination.}
On a high level, round elimination allows us to start from a problem $\Pi$ and to compute a problem $\Pi'$ that, under some assumptions, is exactly one round easier (in the distributed setting) than $\Pi$. As it will become clear in \Cref{sec:preliminaries}, computing $\Pi'$ as a function of $\Pi$ can be a tricky process. 
In \Cref{sec:newre} we provide a novel and simplified way to compute $\Pi'$ as a function of $\Pi$.

\paragraph{Fixed point generation.}
A problem $\Pi'$ is a non-trivial fixed point relaxation of $\Pi$ if it satisfies the following:
\begin{itemize}
	\item $\Pi'$ can be solved in $0$ rounds if we are given a solution for $\Pi$;
	\item $\Pi'$ cannot be solved in $0$ rounds in the so-called port numbering model (see \Cref{sec:preliminaries} for the definition of this model);
	\item By applying round elimination on $\Pi'$, we obtain $\Pi'$ itself.
\end{itemize}
It is known by prior work (see \Cref{thm:lifting}) that, if there exists a non-trivial fixed point relaxation for a problem $\Pi$, then $\Pi$ requires $\Omega(\log_\Delta n)$ rounds for deterministic algorithms and $\Omega(\log_\Delta \log n)$ rounds for randomized ones. Finding non-trivial fixed point relaxations is one of the very few ways that we have to prove such lower bounds. In \Cref{sec:procedure}, we provide an automatic way to obtain non-trivial fixed point relaxations. More in detail, we provide a procedure $\fpp$ that takes in input a problem $\Pi$ and an object $D$ (called diagram), and it produces a problem $\Pi'$ that is always guaranteed to be a fixed point. Whether such a fixed point is non-trivial depends on $\Pi$ and on the choice of $D$.

\paragraph{Selecting the right diagram.}
As mentioned before, the choice of the diagram may affect the triviality of the obtained fixed point. 
In \Cref{sec:diagram}, we first provide a generic way to construct a diagram as a function of $\Pi$, that we call default diagram. 
Then, we show possible ways to modify the default diagram in the case in which the fixed-point obtained with the default diagram is a trivial one.

\paragraph{An alternative proof for the hardness of $\Delta$-coloring.}
In \Cref{sec:deltacol}, we show a first application of our fixed point procedure, by providing a non-trivial fixed point relaxation for the $\Delta$-coloring problem.
Such a fixed point was already shown in \cite{hideandseek}, but here we show a much easier proof.
 While this section is not the main contribution of our work, its main purpose is to warm-up the reader for what comes later.

\paragraph{An alternative proof for the hardness of defective $2$-coloring.}
In \Cref{sec:2col}, we show another application of our fixed point procedure, by providing a non-trivial fixed point relaxation for the $(\Delta-2)$-defective $2$-coloring problem. This is one of the few problems for which an $\Omega(\log_\Delta n)$ lower bound is known by prior work \cite{BalliuHLOS19}, but a non-trivial fixed point relaxation for this problem was unknown. Whether a non-trivial fixed point relaxation exists for all problems that require $\Omega(\log_\Delta n)$ deterministic rounds is one of the major open questions about round elimination, and hence in this section we make progress in understanding it. Again, this section is not the main contribution of our work, and its main purpose is to prepare the reader for what comes next.

\paragraph{Defective $3$-coloring.}
In \Cref{sec:3col}, we use our fixed point procedure to show a lower bound for defective $3$-coloring. 
While the proofs in \Cref{sec:deltacol} and \Cref{sec:2col} require a relatively short case analysis, the proof in \Cref{sec:3col} requires to analyze hundreds of cases.
For this reason, in this section, we prove that such a case analysis can be performed automatically by using computer tools. In particular, we reduce the task of checking whether a given problem $\Pi$ is the result of applying our fixed point procedure, to proving that all systems of inequalities belonging to a certain finite set have no solution, which can be checked automatically via computer tools.

\paragraph{Open questions.}We conclude, in \Cref{sec:open}, with some open questions.

\section{Preliminaries}\label{sec:preliminaries}
\subsection{The \boldmath\LOCAL Model}
The computational model that we
consider is the standard \LOCAL model of distributed
computing~\cite{Linial1992,Peleg2000}, where the nodes $V$ of a graph
$G=(V,E)$ communicate over the edges $E$. 
More precisely, time is divided into synchronous rounds, and in each round each node can send an arbitrarily large message to each neighbor.
Moreover, between sending messages, nodes can perform any internal computation on the information they gathered so far.
In the beginning of the computation, each node $v$ is aware of its own degree $\deg(v)$, and has an internal ordering of its incident edges represented by the \emph{ports} $1, \dots, \deg(v)$ being assigned bijectively to $v$'s incident edges.
We also assume that each node is aware of the number $n$ of nodes and the maximum degree $\Delta$ of the input graph.
As we will prove lower bounds in this work, this assumption makes our results only stronger.
Moreover, each node is equipped with some \emph{symmetry-breaking information} to avoid trivial impossibilities: in the case of deterministic algorithms, each node is assigned some globally unique ID of length $O(\log n)$ bits; in the case of randomized algorithms, each node instead has access to an unlimited amount of private random bits.
Each node executes the same algorithm that governs which messages a node sends (depending on the accumulated knowledge of the node) and what the node outputs at the end of the computation.
Each node has to terminate at some point and then provide a local output; all local outputs together form the global solution to the problem.
The (\emph{round} or \emph{time}) \emph{complexity} of a distributed algorithm is the number of rounds until the last node terminates. 
In the randomized setting, as usual, the algorithms are required to be Monte-Carlo algorithms that produce a correct solution with high probability, i.e., with probability at least $1-1/n$.

While the lower bounds we prove hold in the \LOCAL model, for technical reasons we will also make use of the \emph{port numbering} model along the way.
The (deterministic) port numbering model is the same as the deterministic \LOCAL model apart from two differences:
\begin{enumerate}
	\item No symmetry-breaking information is provided, i.e., nodes are not equipped with IDs.
	\item For each hyperedge $e$, a total order on the set of incident nodes is provided (which can be formalized via a bijection between this node set and the set $\{ 1, \dots, k \}$, where $k$ denotes the number of nodes contained in $e$). 
\end{enumerate}
The second difference can be seen as an analog (on the hyperedge side) of the port numbers via which the nodes can distinguish between incident hyperedges.

\subsection{Problems}
\label{sec:problems}
The problems we study in this work fall into the class of \emph{locally checkable problems}.
Locally checkable problems are problems that can be defined via local constraints and encompass the vast majority of problems studied in the \LOCAL model.
A modern formalism to define these problems is given by the so-called black-white formalism that we will also use in this paper.
In fact, as we will see, this formalism captures locally checkable problems not only on graphs, but more generally on hypergraphs (where we will denote the maximum number of nodes in a hyperedge by $\delta$).
Note that \Cref{ssec:exso} provides an example illustrating (some of) the definitions provided in this section.

\paragraph{The black-white formalism.}
In the black-white formalism, a locally checkable problem is given as a triple $\Pi = (\Sigma_\Pi, \nodeconst_\Pi, \edgeconst_\Pi)$.
Here, $\Sigma_\Pi$ is a finite set of elements, called \emph{labels}, $\nodeconst_\Pi = (\nodeconst_1, \dots, \nodeconst_\Delta)$ and $\edgeconst_\Pi = (\edgeconst_1, \dots, \edgeconst_\delta)$, where each $\nodeconst_i$ and $\edgeconst_i$ is a collection of multisets of cardinality $i$ with labels from $\Sigma_\Pi$.
We call $\nodeconst_\Pi$ the \emph{node constraint} of $\Pi$ and $\edgeconst_\Pi$ the \emph{edge constraint} of $\Pi$.
On a hypergraph, a correct solution for $\Pi$ is an assignment of labels from $\Sigma_\Pi$ to the incident node-hyperedge pairs such that for each node $v$, the multiset of labels corresponding to $v$ is contained in $\nodeconst_{\deg(v)}$, and analogously for hyperedges w.r.t.\ the respective $\edgeconst_i$.
More formally, let $\fF$ denote the set of pairs $(v,e)$ where $e$ is a hyperedge incident to $v$.
A correct solution for $\Pi$ on a hypergraph $G = (V,E)$ is a mapping $f \colon \fF \rightarrow \Sigma_\Pi$ such that, for each $v \in V$, we have $\{ f(v,e') \mid e' \ni v \} \in \nodeconst_{\deg(v)}$, and, for each $e \in E$, we have $\{ f(v',e) \mid v' \in e \} \in \edgeconst_{\rank(e)}$.
Here, the \emph{rank} $\rank(e)$ of a hyperedge $e$ is the number of nodes contained in $e$, and the displayed sets are to be understood as multisets.

When solving a locally checkable problem in the distributed setting, each node $v$ has to output one label for each ``incident'' node-hyperedge pair in $\fF$ such that the induced global solution is correct.
While the improvements for the general round elimination technique (discussed below) that we will obtain in this work apply to the general hypergraph setting, for the results about concrete problems that we provide we can restrict attention to the special case of graphs.
In this special case, each hyperedge is of rank $2$, and consequently we will replace the edge constraint $(\edgeconst_1, \dots, \edgeconst_\delta)$ by $\edgeconst_2$.
Moreover, to simplify notation, in this case, we will set $\edgeconst := \edgeconst_2$.

We remark that besides providing a formalism for graphs by considering them as a special case of hypergraphs, the black-white formalism provides a (different) way to encode and study problems on \emph{bipartite} graphs, by identifying the ``black'' nodes in the bipartition with the \emph{nodes} in the above formalism, and the ``white'' nodes with the \emph{hyperedges}.
This relation to bipartite graphs is also where the name ``black-white formalism'' comes from.

As can be observed, the definition of the problems in this formalism depends on $\Delta$ (and $\delta$), which provides the power to also describe important problems like $(\Delta + 1)$-coloring in this formalism.
If we are to be very precise, in this formalism each problem is a collection of problems indexed by $\Delta$ (and, if considered on hypergraphs, $\delta$).
Throughout the paper, we implicitly assume that some (arbitrary) $\Delta$ (and, if required, some $\delta$) is fixed.
Note that this does not impact the generality of our results.

Finally, we remark that, for simplicity, we consider two locally checkable problems given in the black-white formalism as identical if one can be obtained from the other by renaming the labels used to describe the latter.

\paragraph{Configurations.}
We will use the term \emph{configuration} to refer to a multiset of labels, and write it in either of the two equivalent forms $\{ \ell_1, \dots, \ell_i\}$ and $\ell_1 \dots \ell_i$.
Note that the order of the $\ell_j$ does not matter (also in the second form): all configurations that can be obtained from a configuration by reordering are considered to be the same configuration.
When referring to the multiset of labels assigned to the pairs $(v,e')$ incident to a fixed node $v$, we will use the term \emph{node configuration}; when referring to the multiset of labels assigned to the pairs $(v',e)$ corresponding to a fixed (hyper)edge $e$, we will use the term \emph{edge configuration}.
Moreover, for simplicity we may slightly abuse notation by writing $\{ \ell_1, \dots, \ell_i\} \in \L_1 \times \dots \times \L_i$ if $\L_1, \dots, \L_i$ are sets containing the labels $\ell_1, \dots, \ell_i$, respectively.

It will be convenient to refer to certain collections of configurations in a condensed manner.
A \emph{condensed configuration} $\fC$ is a configuration $\{ \L_1, \dots, \L_i \}$ of sets of labels.
Configuration $\fC$ is to be understood as the set of all configurations $\{\ell_1, \dots, \ell_i\} \in \L_1 \times \dots \times \L_i$ (though we will also consider the condensed configuration $\fC$ as a configuration of sets when convenient).
To indicate that a configuration of sets represents a condensed configuration, we will often write each set in the configuration in the form $[\ell_1 \s \dots \s \ell_j]$ (unless the set only contains one element $\ell$, in which case we will simply write the set as $\ell$).

\paragraph{Diagrams.}
A useful way of capturing certain aspects of problems is via so-called diagrams.
A \emph{diagram} $D = (\Sigma_D, E_D)$ is nothing else than a directed acyclic graph with node set $\Sigma_D$ and edge set $E_D$.
The \emph{edge diagram} of a problem $\Pi = (\Sigma_\Pi, \nodeconst_\Pi, \edgeconst_\Pi)$ is the diagram $D$ obtained by setting $\Sigma_D := \Sigma_\Pi$ and defining $E_D$ as the set of those directed edges $(\ell, \ell')$ that satisfy that $\ell' \neq \ell$ and, for every configuration $\{\ell_1, \dots, \ell_{\delta}\} \in \edgeconst_\Pi$ with $\ell_i = \ell$ for some $1 \leq i \leq \delta$, also $\{\ell_1, \dots, \ell_{i-1}, \ell', \ell_{i+1}, \dots, \ell_{\delta}\} \in \edgeconst_\Pi$.
When displaying a diagram, we often omit arrows that can be obtained as the composition of displayed arrows.
We call a subset $S \subseteq \Sigma_D$ \emph{right-closed} (w.r.t.\ $D$) if, for any edge $(\ell, \ell') \in E_D$, $\ell \in S$ implies $\ell' \in S$.

\subsection{The Round Elimination Technique}\label{ssec:re}
In this section, we give a formal introduction to round elimination.
As some of the definitions provided in this section are fairly technical, the reader is encouraged to consult the illustrating example provided in \Cref{ssec:exso} alongside reading the definitions.

For technical reasons, round elimination requires the considered input (hyper)graphs to be regular (and uniform).
As such, we will assume throughout the paper that every node of the input (hyper)graph has the same degree $\Delta$ and every (hyper)edge has the same rank $\delta$ (which, in the case of graphs, is simply $2$).
This also simplifies the representation of locally checkable problems $\Pi = (\Sigma_\Pi, \nodeconst_\Pi, \edgeconst_\Pi)$: now we can assume that $\nodeconst_\Pi$ and $\edgeconst_\Pi$ are collections of multisets of cardinalities $\Delta$ and $\delta$, respectively, instead of sequences of similar collections.
Note that, as we will prove lower bounds in this work, the inherent restriction to regular graphs makes our results only stronger.

\paragraph{$\re(\cdot)$ and $\rere(\cdot)$.}
At the heart of the round elimination technique lie the round elimination operators $\re$ and $\rere$, which are functions that take a locally checkable problem in the black-white formalism as input and return such a problem.
More precisely, for a locally checkable problem $\Pi = (\Sigma_\Pi, \nodeconst_\Pi, \edgeconst_\Pi)$, the locally checkable problem $\re(\Pi) = (\Sigma_{\re(\Pi)}, \nodeconst_{\re(\Pi)}, \edgeconst_{\re(\Pi)})$ is defined as follows.

The label set $\Sigma_{\re(\Pi)}$ of $\re(\Pi)$ is simply the set of non-empty subsets of $\Sigma_\Pi$, i.e., $\Sigma_{\re(\Pi)} := 2^{\Sigma_\Pi} \setminus \{\{\}\}$.
For the definition of the edge constraint $\edgeconst_{\re(\Pi)}$ of $\re(\Pi)$, we need the notion of a \emph{maximal configuration}.
Let $\fZ$ be a collection of configurations of sets of labels.
Then, a configuration $\L_1 \dots \L_i \in \fZ$ is maximal (in $\fZ$) if there is no configuration $\L'_1 \dots \L'_i \in \fZ$ (of the same length) such that there exists a bijection $\phi \colon \{ 1, \dots, i \} \rightarrow \{ 1, \dots, i \}$ satisfying $\L_j \subseteq \L'_{\phi(j)}$ for all $1 \leq j \leq i$ and $\L_j \subsetneq \L'_{\phi(j)}$ for at least one $1 \leq j \leq i$.
In other words, a configuration of sets is maximal if no other configuration in the considered configuration space can be reached by enlarging (some of) the sets (and reordering the sets).

Now we can define $\edgeconst_{\re(\Pi)}$ as follows.
Let $\edgeconst$ denote the collection of all configurations $\L_1 \dots \L_\delta$ such that $\L_1, \dots, \L_\delta \in \Sigma_{\re(\Pi)}$ and for all choices $(\ell_1, \dots, \ell_\delta) \in \L_1 \times \dots \times \L_\delta$ of labels we have $\{ \ell_1, \dots, \ell_\delta \} \in \edgeconst_\Pi$.
Then, $\edgeconst_{\re(\Pi)}$ is obtained from $\edgeconst$ by removing all configurations that are not \emph{maximal} in $\edgeconst$.
Finally, the node constraint $\nodeconst_{\re(\Pi)}$ of $\re(\Pi)$ is defined as the collection of all configurations $\L_1 \dots \L_\Delta$ such that each $\L_i$ appears in at least one configuration from $\edgeconst_{\re(\Pi)}$ and there exists a choice $(\ell_1, \dots, \ell_\Delta) \in \L_1 \times \dots \times \L_\Delta$ of labels satisfying $\{ \ell_1, \dots, \ell_\Delta \} \in \nodeconst_\Pi$.

The problem $\rere(\Pi) =  (\Sigma_{\rere(\Pi)}, \nodeconst_{\rere(\Pi)}, \edgeconst_{\rere(\Pi)})$ is defined dually to $\re(\Pi)$, where the role of nodes and hyperedges are reversed.
More precisely, we have the following.
As before, $\Sigma_{\rere(\Pi)} = \Sigma_{\re(\Pi)} = 2^{\Sigma_{\Pi}}  \setminus \{\{\}\}$.
The node constraint $\nodeconst_{\rere(\Pi)}$ of $\rere(\Pi)$ is the collection of \emph{maximal} configurations $\L_1 \dots \L_\Delta$ such that $\L_1, \dots, \L_\Delta \in \Sigma_{\rere(\Pi)}$ and for all choices $(\ell_1, \dots, \ell_\Delta) \in \L_1 \times \dots \times \L_\Delta$ of labels we have $\{ \ell_1, \dots, \ell_\Delta \} \in \nodeconst_\Pi$.
The edge constraint $\edgeconst_{\rere(\Pi)}$ of $\rere(\Pi)$ is the collection of all configurations $\L_1 \dots \L_\delta$ such that each $\L_i$ appears in at least one configuration from $\nodeconst_{\rere(\Pi)}$ and there exists a choice $(\ell_1, \dots, \ell_\delta) \in \L_1 \times \dots \times \L_\delta$ of labels satisfying $\{ \ell_1, \dots, \ell_\delta \} \in \edgeconst_\Pi$.

We will refer to the operation of deriving $\edgeconst_{\re(\Pi)}$ from $\edgeconst_{\Pi}$ (and $\nodeconst_{\rere(\Pi)}$ from $\nodeconst_{\Pi}$) as \emph{applying the universal quantifier (to $\edgeconst_{\Pi}$ and $\nodeconst_{\Pi}$, respectively)} and say that a problem \emph{satisfies the universal quantifier} if it is the result of such an operation.

The hard part in computing $\re(\Pi)$ and $\rere(\Pi)$ is applying the universal quantifier. In fact, consider the problem $\re(\Pi)$. There is an easy way to compute $\nodeconst_{\re(\Pi)}$, that is the following. Start from all the configurations in $\nodeconst_{\Pi}$, and for each configuration add to $\nodeconst_{\re(\Pi)}$ the condensed configuration obtained by replacing each label $\ell$ by the set that contains all label sets in $\Sigma_{\re(\Pi)}$ containing $\ell$.

\paragraph{The round elimination sequence.}
In the round elimination framework, the two operators $\re$ and $\rere$ are used to define a sequence of problems that is essential for obtaining complexity lower bounds via round elimination.
This sequence $\Pi_0, \Pi_1, \Pi_2, \dots$ is defined via $\Pi_{i+1} := \rere(\re(\Pi_i))$ for all $i \geq 0$, where $\Pi_0$ is the given problem of interest.
The following theorem provides a way to obtain lower bounds for the complexity of $\Pi_0$ via analyzing the $0$-round-solvability of the problems in the sequence.
It is a simplified version of Theorem 7.1 from~\cite{hideandseek}.

\begin{theorem}\label{thm:lifting}
	Let $\Pi_0, \Pi_1, \ldots, \Pi_t$ be a sequence of problems satisfying $\Pi_{i+1} = \rere(\re(\Pi_i))$ for all $0 \leq i \leq t - 1$.
	Moreover, let $B$ be an integer (that may depend on $n$ and/or $\Delta$) such that $|\Sigma_{\Pi_i}| \leq B$ for all $0 \leq i \leq t$, and $|\Sigma_{\re(\Pi_i)}| \leq B$ for all $0 \leq i \leq t - 1$.
	Then, if $\Pi_t$ is not $0$-round-solvable in the port numbering model, $\Pi_0$ has lower bounds of $\Omega(\min\{t, \log_\Delta n - \log_\Delta \log B\})$ rounds in the deterministic \LOCAL model and $\Omega(\min\{t, \log_\Delta \log n - \log_\Delta \log B\})$ rounds in the randomized \LOCAL model.
\end{theorem}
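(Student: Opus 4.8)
This is Theorem 7.1 from [hideandseek], simplified: given a round elimination sequence $\Pi_0,\dots,\Pi_t$ with bounded label-set sizes, if $\Pi_t$ is not $0$-round-solvable in the port numbering model, then $\Pi_0$ has the stated $\Omega(\min\{t,\log_\Delta n - \log_\Delta\log B\})$ deterministic and $\Omega(\min\{t,\log_\Delta\log n - \log_\Delta\log B\})$ randomized lower bounds.

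\textbf{Proof plan.} The statement is a repackaging of the round elimination machinery, so the plan is to assemble it from two standard ingredients plus some bookkeeping. The first ingredient is the \emph{one-round speedup lemma} for the operators $\re$ and $\rere$: if $\Pi_i$ can be solved in $T\ge 1$ rounds in the randomized port numbering model with failure probability at most $p$ at every node and every hyperedge, then $\Pi_{i+1}=\rere(\re(\Pi_i))$ can be solved in $T-1$ rounds in the same model with per-node and per-hyperedge failure probability at most $f(p)$, where $f$ depends only on $\Delta$, $\delta$, and the label-set sizes $|\Sigma_{\Pi_i}|$ and $|\Sigma_{\re(\Pi_i)}|$; for deterministic algorithms the same holds with $p=0$ and $f(0)=0$. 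This is exactly what the definitions of $\re(\cdot)$ and $\rere(\cdot)$ are engineered to support, and I would take it as known. The quantitative point I would check — by a routine union bound over the $O(\Delta)$ ``directions'' a node or hyperedge must simulate and the at most $B$ candidate labels — is that, since our hypothesis bounds both $|\Sigma_{\Pi_i}|$ and $|\Sigma_{\re(\Pi_i)}|$ by $B$, one may use a single map $f=f_{\Delta,\delta,B}$ for all $t$ steps, of the shape $f(p)\le B^{O(1)}\cdot p^{1/q}$ with $q=q(\Delta,\delta)=\poly(\Delta)$. I would also record the boundary fact that $0$-round solvability in the port numbering model propagates forward along the sequence (a $0$-round algorithm ``sped up'' is still $0$-round), so that if $\Pi_t$ is not $0$-round solvable then neither is $\Pi_{t'}$ for any $t'\le t$.

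The second ingredient bridges the port numbering model and $\LOCAL$. For the randomized bound I would use that a randomized $\LOCAL$ algorithm for $\Pi_0$ running in $T$ rounds with global success probability at least $1-1/n$ becomes, once each node generates its identifier from private randomness, a $T$-round randomized port numbering algorithm for $\Pi_0$ with per-node and per-hyperedge failure probability at most $1/n$; and that if a problem whose label set has size at most $B$ is \emph{not} $0$-round solvable in the port numbering model, then every $0$-round randomized algorithm for it fails somewhere with probability at least an explicit threshold $\theta=\theta(\Delta,B)$ satisfying $\log(1/\theta)\le\poly(\Delta)\cdot\log B$. For the deterministic bound I would instead use the exact (error-free) speedup lemma together with the Naor--Stockmeyer/high-girth lifting: a $T$-round deterministic $\LOCAL$ algorithm on $\Delta$-regular graphs on $n$ nodes yields a $T$-round deterministic port numbering algorithm whenever $T\le\Omega(\log_\Delta n-\log_\Delta\log B)$, the two subtracted terms accounting respectively for needing a $\Delta$-regular graph of girth larger than $2T$ on $n$ nodes and for the Ramsey-type collapse of the identifier-dependence when $n$ is large relative to the number of labels.

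With these in hand the argument is a short computation. Suppose $\Pi_0$ has an algorithm running in $T$ rounds; if $T\ge t$ there is nothing to prove, so assume $T<t$, and (using forward propagation of $0$-round solvability to pass to a length-$t'$ prefix for a suitable $t'\le t$) assume also that $t$ is at most the target bound. Applying the speedup lemma $t$ times — once the round budget hits $0$ it stays $0$, while the error is still composed with $f$ — produces a $0$-round port numbering algorithm for $\Pi_t$ with per-node and per-hyperedge failure probability at most $f^{(t)}(p_0)$, where $p_0=0$ deterministically and $p_0\le 1/n$ in the randomized case. Deterministically $f^{(t)}(0)=0$, so $\Pi_t$ would be $0$-round solvable in the port numbering model, a contradiction; hence $\Pi_0$ needs $\ge t$ deterministic port numbering rounds, which the high-girth lifting upgrades to $\Omega(\min\{t,\log_\Delta n-\log_\Delta\log B\})$ rounds in deterministic $\LOCAL$. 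In the randomized case, $f^{(t)}(p)$ is at most roughly $p^{q^{-t}}$ up to $B^{O(1)}$ factors, so $\log(1/f^{(t)}(1/n))\ge q^{-t}\log n-O(\log B)$, and this exceeds $\log(1/\theta)$ as soon as $t\le c\,(\log_\Delta\log n-\log_\Delta\log B)$ for a small constant $c$; for such $t$ the derived $0$-round algorithm beats the threshold $\theta$, contradicting that $\Pi_t$ is not $0$-round solvable, and we again obtain a lower bound of $\Omega(\min\{t,\log_\Delta\log n-\log_\Delta\log B\})$ rounds.

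The hard part will be the quantitative bookkeeping rather than any conceptual step: pinning down the exact form of the failure-probability map $f$ so that its $t$-fold iteration starting from $1/n$ lands precisely at the claimed $\log_\Delta\log n-\log_\Delta\log B$ threshold, and, on the deterministic side, extracting the $-\log_\Delta\log B$ term from the port-numbering-to-$\LOCAL$ lifting. Both are standard in the round elimination literature (this statement is, after all, a simplified form of Theorem~7.1 of \cite{hideandseek}), so in a writeup I would cite the relevant speedup and lifting lemmas and spend the remaining space only on verifying that the uniform label-set bound $B$ lets one use a single speedup map across all $t$ steps.
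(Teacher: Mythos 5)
The paper does not prove \Cref{thm:lifting}: it states explicitly that it ``is a simplified version of Theorem 7.1 from~\cite{hideandseek}'' and offers no argument of its own, so there is no in-paper proof to compare against. Your sketch correctly reconstructs the standard two-ingredient argument behind the cited result — the per-step speedup for $\re(\cdot)$ and $\rere(\cdot)$ with its error amplification (roughly $p \mapsto p^{1/\poly(\Delta)}$ up to $B^{O(1)}$ factors, hence the $\log_\Delta\log n - \log_\Delta\log B$ threshold after $t$-fold iteration), and the port-numbering-to-\LOCAL{} lifting via high-girth graphs and the Ramsey/Naor--Stockmeyer collapse of ID-dependence (hence the $\log_\Delta n - \log_\Delta\log B$ deterministic term) — and you are right that the remaining work is bookkeeping the failure-probability map uniformly across all $t$ steps using the single bound $B$. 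One point worth making explicit in a writeup: the forward propagation of $0$-round solvability you mention is needed precisely so that the ``assume $t$ is at most the target bound'' reduction is legitimate (otherwise one cannot truncate the sequence without checking the hypothesis still applies at the new endpoint). Given that the paper itself treats the theorem as a black box from \cite{hideandseek}, your plan to cite the speedup and lifting lemmas rather than re-derive them is the appropriate level of detail.
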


\paragraph{Fixed points.}
As implied by~\Cref{thm:lifting}, it is crucial for proving lower bounds via round elimination to be able to determine the $0$-round solvability of problems in the round elimination sequence produced by the studied problem $\Pi_0$.
A class of problems that produces very simple sequences are so-called fixed points.
A locally checkable problem $\Pi$ is called a \emph{fixed point} if $\rere(\re(\Pi)) = \Pi$.
Moreover, for a fixed point $\Pi$, the problem $\Pi' := \re(\Pi)$ is called the \emph{intermediate problem}.
Note that such an intermediate problem $\Pi'$ satisfies $\re(\rere(\Pi')) = \Pi'$. We get the following corollary from \Cref{thm:lifting}.

\begin{corollary}\label{cor:lifting}
	Let $\Pi$ be a fixed point in the round elimination framework.
	Then, if $\Pi$ is not $0$-round-solvable in the port numbering model, $\Pi$ has lower bounds of $\Omega(\log_\Delta n)$ rounds in the deterministic \LOCAL model and $\Omega(\log_\Delta\log n)$ rounds in the randomized \LOCAL model.

\end{corollary}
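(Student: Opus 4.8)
The plan is to obtain the statement as a direct specialization of \Cref{thm:lifting}. First I would set $\Pi_0 := \Pi$ and consider the round elimination sequence defined by $\Pi_{i+1} := \rere(\re(\Pi_i))$. Since $\Pi$ is a fixed point, i.e.\ $\rere(\re(\Pi)) = \Pi$, an immediate induction on $i$ gives $\Pi_i = \Pi$ for all $i \geq 0$, and hence also $\re(\Pi_i) = \re(\Pi)$ for all $i \geq 0$; in other words, the sequence is constant, alternating between $\Pi$ and its intermediate problem $\re(\Pi)$.

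Next I would pin down the bound $B$ required by \Cref{thm:lifting}. Since $\Sigma_{\re(\Pi)} = 2^{\Sigma_\Pi} \setminus \{\{\}\}$, we have $|\Sigma_{\re(\Pi)}| = 2^{|\Sigma_\Pi|} - 1 \geq |\Sigma_\Pi|$, so I would take $B := 2^{|\Sigma_\Pi|} - 1$. By the previous paragraph, $|\Sigma_{\Pi_i}| = |\Sigma_\Pi| \leq B$ for all $i$ and $|\Sigma_{\re(\Pi_i)}| = B$ for all $i$, so the label-size hypotheses of \Cref{thm:lifting} hold for every choice of $t$. The key observation is that, with $\Delta$ fixed, $\Sigma_\Pi$ is a fixed finite set, so $B$ is a constant and therefore $\log_\Delta \log B = O(1)$.

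Finally, I would apply \Cref{thm:lifting} with, say, $t := \lceil \log_\Delta n \rceil$. As $\Pi_t = \Pi$ is not $0$-round-solvable in the port numbering model by hypothesis, the theorem yields deterministic and randomized lower bounds of $\Omega(\min\{t, \log_\Delta n - \log_\Delta \log B\})$ and $\Omega(\min\{t, \log_\Delta \log n - \log_\Delta \log B\})$, respectively. Since $\log_\Delta \log B = O(1)$ and $t \geq \log_\Delta n \geq \log_\Delta \log n$, for all sufficiently large $n$ the first minimum equals $\Theta(\log_\Delta n)$ and the second equals $\Theta(\log_\Delta \log n)$, which is exactly the claimed pair of lower bounds.

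There is no genuinely hard step here: the corollary is a bookkeeping consequence of \Cref{thm:lifting}, made possible by the fact that a fixed point freezes the entire round elimination sequence. The only point meriting a moment of care is checking that the additive correction $\log_\Delta \log B$, which stems from the exponential blow-up of the label set under $\re(\cdot)$, is asymptotically negligible — and it is, precisely because the frozen sequence keeps $B$ constant for fixed $\Delta$.
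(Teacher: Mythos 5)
Your proof is correct and is precisely the bookkeeping the paper leaves implicit (the paper states the corollary as an immediate consequence of \Cref{thm:lifting} without giving the details). Setting $\Pi_i = \Pi$ for all $i$, taking $B = 2^{|\Sigma_\Pi|} - 1$ (a constant for fixed $\Delta$), and choosing $t \approx \log_\Delta n$ is exactly the intended specialization.
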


\paragraph{0-round-solvability.}
Due to \Cref{thm:lifting}, we are interested in determining whether a problem can be solved in $0$ rounds or not.
For technical reasons, throughout the paper, whenever we consider the $0$-round-solvability of a problem, we will consider it in the \emph{port numbering model}.
In the port numbering model, $0$-round-solvability admits a simple characterization: a problem $\Pi$ is $0$-round-solvable if and only if there is a configuration $\ell_1 \s \dots \s \ell_{\Delta} \in \nodeconst_\Pi$ such that, for any $\delta$ (not necessarily distinct) labels $\ell'_1, \dots, \ell'_{\delta} \in \{ \ell_1, \dots, \ell_{\Delta} \}$, it holds that $\ell'_1 \s \dots \s \ell'_{\delta}\in \edgeconst_\Pi$.
We will use the terms \emph{trivial} and \emph{non-trivial} to refer to $0$-round-solvable and non-$0$-round-solvable problems, respectively.
In particular, we will be interested in trivial and non-trivial fixed points.

\subsection{Example: Sinkless Orientation}
\label{ssec:exso}
To illustrate the definitions provided above, we will consider the problem of \emph{sinkless orientation}, introduced in~\cite{Brandt2016}, on $3$-regular graphs.
In this problem, the task is to orient the edges of the input graph such that no node is a sink, i.e., each node has at least one outgoing incident edge.
Sinkless orientation can be encoded as a problem $\Pi$ in the black-white formalism by setting
\begin{itemize}
	\item $\Sigma_{\Pi} := \{ \I, \O \}$,
	\item $\nodeconst_{\Pi} := \{ \I \s \I \s \O, \quad \I \s \O \s \O, \quad \O \s \O \s \O \}$, and
	\item $\edgeconst_{\Pi} := \{ \I \s \O \}$.
\end{itemize}
Here, the label $\I$ assigned to a node-edge pair $(v,e) \in \fF$ indicates that edge $e$ is oriented towards $v$, whereas the label $\O$ assigned to $(v,e)$ would indicate that $e$ is oriented away from $v$.
The edge constraint $\edgeconst_\Pi$ simply represents the requirement of a proper orientation, i.e., that each edge has to be oriented away from exactly one endpoint and oriented towards the other endpoint.
The node constraint $\nodeconst_\Pi$ represents that each node has at least one outgoing edge, by requiring that at least one incident node-edge pair is assigned the label $\O$.

The node constraint $\nodeconst_\Pi$ can also be written as the condensed configuration $[\I \s \O] \s [\I \s \O] \s \O$, as the latter represents the set $\{ \I \s \I \s \O,\s \I \s \O \s \O,\s \O \s \I \s \O,\s \O \s \O \s \O \}$, which is exactly $\nodeconst_\Pi$ (since $\I \s \O \s \O = \O \s \I \s \O$).
The edge diagram of $\Pi$ is simply the directed graph with node set $\{ \I, \O \}$ and no edges, as replacing $\I$ with $\O$ (or $\O$ with $\I$) in the edge configuration $\I \s \O \in \edgeconst_\Pi$ does not result in an configuration contained in $\edgeconst_\Pi$.

In the following we illustrate the application of $\rere$ to $\Pi$ (which in the case of $\Pi$ being sinkless orientation is a bit more interesting than applying $\re$).
For the problem $\rere(\Pi)$ we obtain the following.
\begin{itemize}
	\item $\Sigma_{\rere(\Pi)} = \{ \emptyset, \{ \I \}, \{ \O \}, \{ \I, \O \} \}$. 
	\item For the node constraint $\nodeconst_{\rere(\Pi)}$, we first compute the set $\nodeconst$ of all configurations $\L_1 \s \L_2 \s \L_3$ (with labels from $\Sigma_{\rere(\Pi)}$) such that $\ell_1 \s \ell_2 \s \ell_3 \in \nodeconst_\Pi$ for all choices $(\ell_1, \ell_2, \ell_3) \in \L_1 \times \L_2 \times \L_3$.
		From the definition of $\nodeconst_\Pi$, we can infer that $\nodeconst$ is precisely the set of configurations $\L_1 \s \L_2 \s \L_3$ such that one of $\L_1, \L_2, \L_3$ is a subset of  $\{ \O \}$ and the other two subsets of $\{ \I, \O \}$.
		Now, we obtain $\nodeconst_{\rere(\Pi)}$ from $\nodeconst$ by removing all non-maximal configurations.
		As is straightforward to verify, the only configuration that is maximal is $\{ \O \} \s \{ \I, \O \} \s \{ \I, \O \}$ (and its permutations).
		As such, we obtain $\nodeconst_{\rere(\Pi)} = \{ \{ \O \} \s \{ \I, \O \} \s \{ \I, \O \} \}$.
	\item By the definition of the edge constraint $\edgeconst_{\rere(\Pi)}$, we obtain $\edgeconst_{\rere(\Pi)} = \{ \{ \O \} \s \{ \I, \O \}, \s \{ \I, \O \} \s \{ \I, \O \} \}$.
		Again, we can write this set of configurations as the condensed configuration $[ \{ \O \}, \{ \I, \O \} ] \s \{ \I, \O \}$.
\end{itemize}

We remark that since the two labels $\emptyset$ and $\{ \I \}$ contained in $\Sigma_{\rere(\Pi)}$ do not occur in $\nodeconst_{\rere(\Pi)}$ or $\edgeconst_{\rere(\Pi)}$, we can, for simplicity, remove them from $\Sigma_{\rere(\Pi)}$, resulting in $\Sigma_{\rere(\Pi)} = \{ \{ \O \}, \{ \I, \O \} \}$.
Moreover, for convenience, we may rename the labels $\{ \O \}$ and $\{ \I, \O \}$ to $\O$ and $\I$, respectively.
In this case, using condensed configurations, the problem $\rere(\Pi)$ would be given by $\Sigma_{\rere(\Pi)} = \{ \O, \I \}$, $\nodeconst_{\rere(\Pi)} = \{ \O \s \I \s \I \}$, and $\edgeconst_{\rere(\Pi)} = [ \O, \I ] \s \I$.

Using the characterization of $0$-round-solvability given in \Cref{ssec:re}, it is straightforward to verify that $\rere(\Pi)$ cannot be solved in $0$ rounds as $\O$ is a label in the only configuration contained in $\nodeconst_{\rere(\Pi)}$ but $\O \s \O \notin \edgeconst_{\rere(\Pi)}$.

\section{A New Way of Applying Round Elimination}\label{sec:newre}
\newcommand{\cond}[0]{\Gamma}

In this section, we describe a novel and simple way for applying the round elimination technique. As already discussed in \Cref{ssec:re}, the hard and error-prone part in applying the $\re(\cdot)$ and $\rere(\cdot)$ operators consists in applying the universal quantifier. Let $\Pi = (\Sigma_{\Pi},\nodeconst_{\Pi},\edgeconst_{\Pi})$ be the problem of interest, where $\nodeconst_{\Pi}$ contains multisets of size $\Delta$ and $\edgeconst_{\Pi}$ contains multisets of size $\delta$. Also, let $\Pi' = \re(\Pi) = (\Sigma_{\Pi'}, \nodeconst_{\Pi'},\edgeconst_{\Pi'})$. Recall that applying the universal quantifier means computing $\edgeconst_{\Pi'}$ as follows.
First, let $\ccs$ be the maximal set such that for all $\L_1 \s \ldots \s \L_\delta \in \ccs$ it holds that, for all $i$, $\L_i \in 2^{\Sigma_{\Pi}}  \setminus \{\{\}\}$, and all multisets $\{\ell_1, \ldots, \ell_\delta\} \in \L_1 \times \ldots \times \L_\delta$ are in $\edgeconst_\Pi$. Then,  $\edgeconst_{\Pi'}$ is obtained by removing all non-maximal configurations from $\ccs$.
This definition, if implemented in a naive manner, requires considering all possible configurations from labels in $2^{\Sigma_{\Pi}}$, and then, for each of them, checking if all possible configurations obtained by selecting one label from each set in the configuration are contained in $\edgeconst_\Pi$.

\subsection{A new way to compute $\edgeconst_{\Pi'}$.}
We show a drastically simplified way of applying the universal quantifier, that, at each point in time, requires to consider only two configurations and to perform elementary operations on those.

\paragraph{Input of the new procedure.}
While, formally, the given constraint $\edgeconst_{\Pi}$ is described as a set of multisets, in some cases the given constraint is described in a more compact form, that is, by providing \emph{condensed configurations}. The procedure that we describe does not need to unpack condensed configurations into a set of non-condensed ones, and this feature allows to apply this new procedure more easily. 
For this reason, we assume that $\edgeconst_{\Pi}$ is described as a set $\cond_{\Pi}$ of condensed configurations, that is, $\cond_{\Pi}$ contains multisets, where each multiset $\fL \in \cond_{\Pi}$ is of the form $\{\L_1, \ldots, \L_\delta\}$, and for all $1 \le i \le \delta$ it holds that $\L_i \subseteq \Sigma_{\Pi}$. Clearly, if we are given $\edgeconst_{\Pi}$ as a list of non-condensed configurations, we can convert it into this form by replacing each label with a singleton set.
While we assume that the input is described as a set of condensed configurations, the output of the procedure is going to be a set of non-condensed configurations. 
We call the condensed configurations in $\cond_{\Pi}$ \emph{input configurations}.

\paragraph{Combining configurations.}
At the heart of our procedure lies an operation that \emph{combines} two given configurations of sets.
We now formally define what it means to combine two such configurations. Let $\fL = \{\L_1, \ldots, \L_\delta\}$ and $\fL' = \{\L'_1, \ldots, \L'_\delta\}$ be two configurations, where $\L_i$ and $\L'_i$ are sets. Let $\phi \colon \set{1,\ldots,\delta} \rightarrow \set{1,\ldots,\delta}$ be a bijection, i.e., a permutation of $\set{1,\ldots,\delta}$. Let $u \in \set{1,\ldots,\delta}$. Combining $\L$ and $\L'$ w.r.t.\ $\phi$ and $u$ means constructing the configuration $\fC = \{\C_1, \ldots, \C_\delta\}$ where $\C_i = \L_i \cup \L'_{\phi(i)}$ if $i = u$ and $\C_i = \L_i \cap \L'_{\phi(i)}$ otherwise.
In other words, we consider an arbitrary perfect matching between the sets of the two configurations, and we take the union for one matched pair and the intersection for the remaining matched pairs. In \Cref{fig:combining}, we show an example of a combination of two configurations.
\begin{figure}[h]
	\begin{align*}
		\{\I, \O\} &\cup \{\O\} = \{\I, \O\} \\
		\{\I, \O\} &\cap \{\I, \O\} = \{\I, \O\} \\
		\{\O\} &\cap \{\I, \O\} = \{\O\}
	\end{align*}
	\caption{One possible way to combine $\set{\I, \O} \set{\I, \O} \set{\O}$ with itself. The resulting configuration is  $\set{\I, \O} \set{\I, \O} \set{\O}$.}
	\label{fig:combining}
\end{figure}

\paragraph{The New Procedure.}
\newcommand{\newre}{\mathrm{NewRE}}
In the following, we construct a sequence $(\Psi_i)$ of sets of configurations until certain desirable properties are obtained.
The first step of the procedure is setting $\Psi_0 = \cond_{\Pi}$. The next step is to apply a subroutine that creates $\Psi_{i+1}$ as a function of $\Psi_i$, and this subroutine is repeatedly applied until we get that $\Psi_{i+1} = \Psi_i$. Let the final result be $\edgeconst_{\Pi'}^*$.

The subroutine computes all possible combinations of pairs of configurations (including a configuration with itself) that are in $\Psi_i$, for all possible permutations $\phi$ and for all possible choices of $u$. If a resulting configuration contains an empty set, the configuration is discarded.
Let $\Psi_{i+1}$ be the set of configurations obtained by starting from the configurations in $\Psi_i$, adding the newly computed configurations, and then removing the non-maximal ones.
We call the defined procedure $\newre$, which is described more formally in \Cref{algo:newre}.

\begin{algorithm}
	\caption{The new procedure.}	\label{algo:newre}
	\begin{algorithmic}
		\LComment{Applies the procedure to the input configurations $\cond$}
		\Procedure{NewRe}{$\cond$, $\delta$}
			\State $\Psi_0 \gets \cond$
			\For{$i \gets 0,1,2,\ldots$}
				\State $\Psi \gets \Psi_{i}$
				\ForAll{$\fL \in \Psi_i$}
					\ForAll{$\fL' \in \Psi_i$}
						\ForAll{permutations $\phi$ over the integers $\{1,\ldots,\delta\}$}
							\ForAll{$1 \le u \le \delta$}
								\State $\fC \gets \textsc{Combine}$($\fL$, $\fL'$, $\delta$, $\phi$, $u$)
								\If{ $\{\} \notin \fC$}
									\State $\Psi \gets \Psi \cup \{\fC\}$
								\EndIf
							\EndFor
						\EndFor
					\EndFor
				\EndFor
				\State $\Psi_{i+1} \gets \textsc{DiscardNonMaximal}$($\Psi$)
				\If{$\Psi_{i+1} = \Psi_i$}
					\State \textbf{break}
				\EndIf
			\EndFor
			\State \Return $\Psi_{i}$
		\EndProcedure
		
		\LComment{Combines two configurations w.r.t.\ a given permutation $\phi$ and position $u$}
		\Procedure{Combine}{$\fL = \{\L_1,\ldots,\L_\delta\}$, $\fL' = \{\L'_1,\ldots,\L'_\delta\}$, $\delta$, $\phi$, $u$}
			\For{$i \gets 1,\ldots,\delta$}
				\If{$i = u}$
					\State $\C_i = \L_i \cup \L'_{\phi(i)}$
				\Else
					\State $\C_i = \L_i \cap \L'_{\phi(i)}$ 
				\EndIf
			\EndFor
			\State $\fC \gets \{\C_1,\ldots,\C_\delta\}$
			\State \Return $\fC$
		\EndProcedure
		
		\LComment{Returns the set of maximal configurations of $\Psi$}
		\Procedure{DiscardNonMaximal}{$\Psi$}
			\State $S \gets \{\}$
			\ForAll{$\fL \in \Psi$}
				\If{$\lnot$($\exists \fL' \in \Psi$ s.t. $\fL' \neq \fL$ and $\textsc{Dominates}$($\fL'$, $\fL$))}
					\State $S \gets S \cup \{\fL\}$
				\EndIf
			\EndFor
			\Return $S$
		\EndProcedure
		
		\LComment{Determines whether $\fL'$ dominates $\fL$}
		\Procedure{Dominates}{$\fL' = \{\L'_1,\ldots,\L'_\delta\}$, $\fL = \{\L_1,\ldots,\L_\delta\}$}
			\State \Return $\exists$ permutation $\phi$ such that, for all $1 \le i \le \delta$, $\L_i \subseteq \L'_{\phi(i)}$
		\EndProcedure
	\end{algorithmic}
\end{algorithm}

In the rest of the section, we will prove that the constraint $\edgeconst_{\Pi'}^*$ returned by $\newre$ is equal to the constraint $\edgeconst_{\Pi'}$ as defined according to the definition of round elimination given in \Cref{sec:preliminaries}, that is, we prove the following theorem.
\begin{theorem}\label{thm:newreiscorrect}
	 $\edgeconst_{\Pi'}^* = \edgeconst_{\Pi'}$.
\end{theorem}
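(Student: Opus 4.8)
The goal is to show $\edgeconst^*_{\Pi'} = \edgeconst_{\Pi'}$, where $\edgeconst_{\Pi'}$ is obtained from $\ccs$ (all configurations of non-empty label sets whose every selection lies in $\edgeconst_\Pi$) by discarding non-maximal configurations, and $\edgeconst^*_{\Pi'}$ is what $\newre$ outputs. I would prove the equality via two inclusions, after first isolating the key closure property of $\ccs$. The plan is to work not with $\edgeconst_{\Pi'}$ and $\edgeconst^*_{\Pi'}$ directly, but with their "downward closures" under the domination order (i.e.\ the sets of \emph{all} configurations of non-empty sets that are dominated by some configuration in the respective set); since both are obtained by taking maximal elements, it suffices to show these downward closures coincide, and the passage to maximal elements is then routine. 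Call the downward closure of $\edgeconst^*_{\Pi'}$ simply $\fD^*$, and note the downward closure of $\edgeconst_{\Pi'}$ is exactly $\ccs$ (every element of $\ccs$ is dominated by a maximal one, since the sets are finite).

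\textbf{Soundness ($\fD^* \subseteq \ccs$).} Here I would show by induction on $i$ that every configuration in $\Psi_i$ lies in $\ccs$, i.e.\ each $\{\C_1,\dots,\C_\delta\}$ appearing satisfies: all sets are non-empty (guaranteed by the discard step) and every selection $(\ell_1,\dots,\ell_\delta)\in \C_1\times\dots\times\C_\delta$ gives $\{\ell_1,\dots,\ell_\delta\}\in\edgeconst_\Pi$. The base case $\Psi_0 = \cond_\Pi$ holds because each input condensed configuration represents a subset of $\edgeconst_\Pi$ by hypothesis. For the inductive step, the crucial lemma is that \emph{$\ccs$ is closed under the \textsc{Combine} operation}: if $\fL,\fL'\in\ccs$ and $\fC$ is obtained by combining them w.r.t.\ some $\phi$ and $u$, then $\fC\in\ccs$ (provided $\fC$ has no empty set). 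This is the heart of soundness and I expect it to be the main obstacle. The point is that a selection from $\fC$ picks, at the "union" coordinate $u$, a label lying in $\L_u$ \emph{or} in $\L'_{\phi(u)}$, and at every other coordinate $i$ a label lying in $\L_i\cap\L'_{\phi(i)}$, hence in both $\L_i$ and $\L'_{\phi(i)}$. Thus any selection from $\fC$ is \emph{either} a selection from $\fL$ (read off via the identity matching, if the $u$-label came from $\L_u$) \emph{or} a selection from $\fL'$ (read off via $\phi$, if the $u$-label came from $\L'_{\phi(u)}$) — in either case it is in $\edgeconst_\Pi$. Discarding non-maximal configurations clearly keeps us inside $\ccs$. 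This gives $\Psi_i\subseteq\ccs$ for all $i$, hence $\edgeconst^*_{\Pi'}\subseteq\ccs$, and since $\edgeconst^*_{\Pi'}$ consists of maximal configurations and $\edgeconst_{\Pi'}$ is exactly the maximal part of $\ccs$, every element of $\edgeconst^*_{\Pi'}$ is dominated by an element of $\edgeconst_{\Pi'}$.

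\textbf{Completeness ($\ccs\subseteq\fD^*$).} Conversely I must show every $\fC=\{\C_1,\dots,\C_\delta\}\in\ccs$ is dominated by some configuration eventually produced. The natural strategy: exhibit $\fC$ as the root of a \emph{combination tree}, i.e.\ a binary tree whose leaves are input configurations from $\cond_\Pi$ and each internal node is the \textsc{Combine} of its two children. Given such a tree of height $h$, an easy induction on $h$ shows the root's value is present (up to domination) in $\Psi_h$ — one must check that the fixed-point iteration doesn't prematurely delete a needed intermediate configuration, which follows because a configuration that is dominated at some stage stays dominated (domination is preserved when we only enlarge the set of available configurations, and \textsc{Combine} is monotone under the domination order in each argument), so the dominating configuration can be used in place of it at the next step. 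It remains to \emph{build} the tree for an arbitrary $\fC\in\ccs$. I would do this by induction on $\sum_i |\C_i|$ (or on $\delta\cdot|\Sigma_\Pi| - \sum_i|\C_i|$, counting "missing" labels): if some $\C_i$ can still be enlarged while staying in $\ccs$, replace $\fC$ by the larger configuration — it suffices to dominate that. Otherwise, for each label $\ell\in\C_u$ (for a chosen coordinate $u$) and each coordinate $i$, pick, using the definition of $\ccs$, enough "small" members of $\ccs$ (ultimately refinable down to singletons, hence to input configurations, since $\cond_\Pi$ with singletons spells out all of $\edgeconst_\Pi$) whose combination at coordinate $u$ reconstructs $\C_u$ while the intersections at the other coordinates reconstruct the $\C_i$; intersecting over the right family and unioning at $u$ recovers $\fC$. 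Concretely, $\C_i=\bigcap$ (over selections through $\C_u$) of the singleton-fattened-to-$\C_i$ pieces, and a single union step at $u$ reassembles $\C_u$. The height bound $\Delta\cdot|\Sigma'|$ mentioned in the introduction will drop out of this construction's depth. Assembling both inclusions and passing to maximal elements yields $\edgeconst^*_{\Pi'}=\edgeconst_{\Pi'}$.
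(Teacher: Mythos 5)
Your overall plan matches the paper's: the soundness direction (your closure-under-\textsc{Combine} argument for $\ccs$) is exactly \Cref{lem:sound}, and your treatment of why discarding non-maximal configurations at each iteration is harmless (monotonicity of \textsc{Combine} under domination, dominators can stand in for dominated configurations) is essentially \Cref{lem:biggood,lem:biggermissing,thm:justtwo}. Where your proposal has a real gap is in the completeness direction, in the step ``it remains to build the tree for an arbitrary $\fC\in\ccs$.'' You reduce to maximal $\fC$ and then assert a construction in one sentence---``pick enough `small' members of $\ccs$ \dots intersecting over the right family and unioning at $u$ recovers $\fC$''---but \textsc{Combine} performs exactly one union per step, so ``a single union step at $u$ reassembles $\C_u$'' cannot be literally right when $|\C_u|>2$, and you give no argument that the intersections at the other coordinates preserve those sets as you merge (they do so only if you choose the family to agree coordinate-by-coordinate, which forces a careful recursive scheme you do not spell out). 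Neither proposed induction measure handles both the enlargement step and the subsequent assembly cleanly.

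The paper closes exactly this gap with a \emph{splitting lemma} (\Cref{lem:linesplit}) run in the opposite direction to your intuition: instead of merging singletons upward, it observes that any non-singleton $\fC$ can be written as the \textsc{Combine} of two configurations obtained by deleting a single label $\a$ (resp.\ $\b$) from one set $\S\subseteq\fC$ with $|\S|\ge 2$; these are strictly dominated by $\fC$ and remain in $\ccs$. Pairing this with a well-founded induction on the strict-domination order (\Cref{lem:wellfounded}, via a total-cardinality weight) gives a terminating recursion whose leaves are singleton configurations, each of which is dominated by some input condensed configuration; \Cref{lem:biggood} then lets you replace those singletons by the dominating inputs. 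This top-down split is the idea your proposal is missing, and it is what turns your sketch of ``ultimately refinable down to singletons'' into an actual argument.
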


\paragraph{Example of \boldmath $\newre$.}
Before proving \Cref{thm:newreiscorrect}, we provide an example of the application of the procedure $\newre$. Consider the problem of $3$-coloring in $3$-regular graphs. This problem can be defined, in the black-white formalism, as follows (we call this problem $\Pi$).
\begin{equation*}
	\begin{aligned}
		\begin{split}
			\nodeconst_{\Pi}\text{:}\\
			&\A &\s& \A & \s &\A \\
			&\B &\s& \B & \s &\B \\
			&\C &\s& \C & \s &\C \\
		\end{split}
		\qquad
		\begin{split}
			\edgeconst_{\Pi}\text{:}\\
			&\A &\s& [\B \s \C]\\
			&\B &\s& \C\\
		\end{split}
	\end{aligned}
\end{equation*}
The constraints can be interpreted as follows. A node is of color $\A$, $\B$, or $\C$, and the edge constraint forbids nodes of the same color to be neighbors. For the purpose of this example, we will first provide the problem $\re(\Pi)$ without showing how it is obtained, and then, we will show how to apply the new procedure on $\re(\Pi)$, in order to obtain the node constraint of  $\rere(\re(\Pi))$. The problem $\re(\Pi)$, after renaming, can be defined as follows.
\begin{equation*}
	\begin{aligned}
		\begin{split}
			\nodeconst_{\re(\Pi)}\text{:}\\
			&[\A \s \C \s \E] &\s& [\A \s \C \s \E ] & \s &[\A \s \C \s \E] \\
			&[\B \s \C \s \F] &\s& [\B \s \C \s \F ] & \s &[\B \s \C \s \F] \\
			&[\D \s \E \s \F] &\s& [\D \s \E \s \F ] & \s &[\D \s \E \s \F] \\
		\end{split}
		\qquad
		\begin{split}
			\edgeconst_{\re(\Pi)}\text{:}\\
			&\A &\s& \F\\
			&\B &\s& \E\\
			&\D &\s& \C\\
		\end{split}
	\end{aligned}
\end{equation*}
We now show how to obtain the node constraint of $\rere(\re(\Pi))$ by applying the procedure $\newre$ on the node constraint of $\re(\Pi)$. The first step is computing $\cond$, which is obtained by replacing each condensed configuration of $\nodeconst_{\re(\Pi)}$ with a single configuration. Hence, 
\begin{align*}
	\cond &= \{  \\
	 	&\{\{\A,\C,\E\},\{\A,\C,\E\},\{\A,\C,\E\}\},   \\ 
	 	&\{\{\B,\C,\F\},\{\B,\C,\F\},\{\B,\C,\F\}\},   \\
	 	&\{\{\D,\E,\F\},\{\D,\E,\F\},\{\D,\E,\F\}\}    \\
	 \}.&
\end{align*}
Then, the procedure $\newre$ initializes $\Psi_0$ to $\cond$, and in order to compute $\Psi_i$ it considers all possible pairs of lines, all possible permutations $\phi$, and all possible positions $1 \le u \le 3$. Consider the following choice of parameters:
\begin{align*}
	\fL  &= \{\{\A,\C,\E\},\{\A,\C,\E\},\{\A,\C,\E\}\}\\
	\fL' &= \{\{\B,\C,\F\},\{\B,\C,\F\},\{\B,\C,\F\}\}\\
	&\phi(1) = 1, \phi(2) = 2, \phi(3) = 3, u = 1.
\end{align*}
By combining these configurations w.r.t.\ these parameters, we obtain the following configuration:
\[
	\fC = \{\{\A,\B,\C,\E,\F\},\{\C\},\{\C\}\}.
\]
Observe that this configuration is not dominated by any configuration that is already present, and that any configuration that is already present is not dominated by this configuration. One can check that $\Psi_1$ contains exactly the following configurations.
\begin{align*}
	&\{\{\A, \C, \E\},  \{\A, \C, \E\}, \{\A, \C, \E\}\} \\
	&\{\{\B, \C, \F\},  \{\B, \C, \F\}, \{\B, \C, \F\}\} \\
	&\{\{\D, \E, \F\},  \{\D, \E, \F\}, \{\D, \E, \F\}\} \\
	&\{\{\A, \B, \C,\E,\F\}, \{\C\}, \{\C\}\} \\
	&\{\{\A, \C, \D,\E,\F\}, \{\E\}, \{\E\}\} \\
	&\{\{\B, \C, \D,\E,\F\},  \{\F\}, \{\F\}\} \\
\end{align*}
Moreover, it is possible to check that, by computing $\Psi_2$ starting from $\Psi_1$, no new configurations are obtained, and hence $\Psi_1 = \Psi_2$ and the procedure terminates. For example, consider the following parameters:
\begin{align*}
	\fL  &= \{\{\A,\C,\E\},\{\A,\C,\E\},\{\A,\C,\E\}\}\\
	\fL' &= \{\{\A,\B,\C,\E,\F\},\{\C\},\{\C\}\}\\
	&\phi(1) = 1, \phi(2) = 2, \phi(3) = 3, u = 2.
\end{align*}
By combining these configurations w.r.t.\ these parameters, we obtain the following configuration:
\[
\fC = \{\{\A,\C,\E\},\{\A,\C,\E\},\{\C\}\}.
\]
This configuration is dominated by $\fL$. Another interesting example is given by the following parameters:
\begin{align*}
	\fL  &= \{\{\A,\C,\E\},\{\A,\C,\E\},\{\A,\C,\E\}\}\\
	\fL' &= \{\{\B,\C,\D,\E,\F\},\{\F\},\{\F\}\}\\
	&\phi(1) = 1, \phi(2) = 2, \phi(3) = 3, u = 1.
\end{align*}
By combining these configurations w.r.t.\ these parameters, we obtain the following configuration:
\[
\fC = \{\{\A,\B,\C,\D,\E,\F\},\{\},\{\}\}.
\]
This configuration contains an empty set, and hence it is discarded.

\subsection{Soundness and Completeness of $\newre$}
We now prove that $\newre$ generates all and only the maximal configurations that satisfy the universal quantifier, that is, \Cref{thm:newreiscorrect}.

\subsubsection{Procedure Soundness}
By the definition of condensed configurations, $\Psi_0$ is initialized with configurations that satisfy the universal quantifier.
We now show that any combination of valid configurations (i.e., that satisfy the universal quantifier) generates configurations that are also valid, implying that we never obtain invalid configurations.

\begin{lemma}[Combination is sound]
	\label{lem:sound}	
	Given two configurations $\fL = \{\L_1, \ldots, \L_\delta\}$ and $\fL' = \{\L'_1, \ldots, \L'_\delta\}$ that satisfy the universal quantifier, any combination $\fC$ of $\fL$ and $\fL'$ also satisfies the universal quantifier.
\end{lemma}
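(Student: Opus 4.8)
The plan is to unwind the definition of "satisfies the universal quantifier" on both sides of the combination operation and push the quantifier manipulation through the set-theoretic identities defining $\textsc{Combine}$. Recall that $\fL = \{\L_1,\dots,\L_\delta\}$ satisfying the universal quantifier means: for every choice $(\ell_1,\dots,\ell_\delta) \in \L_1 \times \dots \times \L_\delta$, the multiset $\{\ell_1,\dots,\ell_\delta\}$ lies in $\edgeconst_\Pi$ (and similarly for $\fL'$). Given a bijection $\phi$ and an index $u$, the combined configuration is $\fC = \{\C_1,\dots,\C_\delta\}$ with $\C_u = \L_u \cup \L'_{\phi(u)}$ and $\C_i = \L_i \cap \L'_{\phi(i)}$ for $i \neq u$. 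I want to show every selection from $\C_1 \times \dots \times \C_\delta$ is an edge configuration of $\Pi$.

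The key step is the following case split on the selection $(c_1,\dots,\c_\delta) \in \C_1 \times \dots \times \C_\delta$. For $i \neq u$ we have $c_i \in \L_i \cap \L'_{\phi(i)}$, so $c_i$ lies simultaneously in $\L_i$ and in $\L'_{\phi(i)}$. For the special coordinate, $c_u \in \L_u \cup \L'_{\phi(u)}$, so either $c_u \in \L_u$ or $c_u \in \L'_{\phi(u)}$. In the first case, set $\ell_i := c_i$ for all $i$; then $\ell_i \in \L_i$ for $i \neq u$ (from the intersection) and $\ell_u = c_u \in \L_u$, so $(\ell_1,\dots,\ell_\delta) \in \L_1 \times \dots \times \L_\delta$, and since $\fL$ satisfies the universal quantifier, $\{c_1,\dots,c_\delta\} = \{\ell_1,\dots,\ell_\delta\} \in \edgeconst_\Pi$. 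In the second case, I instead build a selection from $\fL'$: define $\ell'_{\phi(i)} := c_i$ for all $i$; for $i \neq u$ we have $c_i \in \L'_{\phi(i)}$ and $c_u \in \L'_{\phi(u)}$ by assumption, so as $\phi$ ranges over all indices this yields a genuine element of $\L'_1 \times \dots \times \L'_\delta$, whence $\{c_1,\dots,c_\delta\} \in \edgeconst_\Pi$ because $\fL'$ satisfies the universal quantifier. Either way the combined selection is a valid edge configuration, which is exactly what it means for $\fC$ to satisfy the universal quantifier.

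The only mild subtlety — and the part to state carefully rather than an actual obstacle — is bookkeeping with the permutation $\phi$: one must check that the map $i \mapsto \phi(i)$ being a bijection guarantees that assigning $\ell'_{\phi(i)} := c_i$ across all $i$ defines $\ell'_1,\dots,\ell'_\delta$ with no collisions and no omissions, so that $(\ell'_1,\dots,\ell'_\delta)$ is indeed a well-defined element of the product $\L'_1 \times \dots \times \L'_\delta$; and that the multiset $\{c_1,\dots,c_\delta\}$ does not depend on whether we view it via the identity indexing or via $\phi$. Both are immediate since multisets are unordered and $\phi$ is a bijection. I expect no genuine obstacle here: the lemma is the clean "one union, rest intersections" design choice paying off, and the proof is a two-line case analysis once the definitions are spelled out.
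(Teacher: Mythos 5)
Your proof is correct and takes essentially the same approach as the paper: case-split on whether $c_u$ lands in $\L_u$ or $\L'_{\phi(u)}$ and use the intersections at the other coordinates to lift the selection into $\L_1 \times \dots \times \L_\delta$ or $\L'_1 \times \dots \times \L'_\delta$. The paper compresses the two cases with a ``w.l.o.g.'' and leaves the $\phi$-reindexing implicit, whereas you spell both out, but the argument is the same.
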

\begin{proof}
	Let $\fC = \{\C_1, \ldots, \C_\delta \}$ be a configuration obtained by combining $\fL$ and $\fL'$ w.r.t.\ some $\phi$ and $u$, such that $\{\} \notin \fC$. Consider an arbitrary choice $\{\c_1, \ldots, \c_\delta\} \in \C_1 \times \ldots \times \C_\delta$.
Observe that $\C_u = \L_u \cup \L'_{\phi(u)}$. Hence, $\c_u$ is contained in $\L_u$ or in $\L'_{\phi(u)}$. W.l.o.g., let $\c_u$ be in $\L_u$.
Observe that, for each $i \neq u$, $\c_i \in \L_i \cap \L'_{\phi(i)}$, and hence $\c_i \in \L_i$. Hence, the configuration $\{\c_1, \ldots, \c_\delta\}$ is in $\L_1 \times \ldots \times \L_\delta$.
\end{proof}

\subsubsection{Procedure Completeness}
In the rest of the section we show that $\newre$ is also \emph{complete}, that is, the resulting $\edgeconst^*_{\Pi'}$ contains all the configurations required by the definition. Combined with \Cref{lem:sound}, we obtain that $\newre$ generates exactly the configurations required by the definition of $\edgeconst_{\Pi'}$.

\paragraph{Domination relation.}
The notion of maximality implicitly defines a notion of domination between configurations: a configuration $\{\L_1, \ldots, \L_\delta\}$ is dominated by a configuration $\{\L'_1, \ldots, \L'_\delta\}$ if there exists a permutation $\phi$ such that $\L_i \subseteq \L'_{\phi(i)}$ for all $i$.
\begin{lemma}[Transitivity]
	\label{lem:domtrans}
	The domination relation is transitive.
\end{lemma}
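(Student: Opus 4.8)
The plan is to unwind the definition of domination directly: transitivity will follow immediately from the fact that the composition of two permutations is again a permutation, together with the transitivity of set inclusion.

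Concretely, I would proceed as follows. Suppose $\fL = \{\L_1, \ldots, \L_\delta\}$ is dominated by $\fL' = \{\L'_1, \ldots, \L'_\delta\}$, and that $\fL'$ is dominated by $\fL'' = \{\L''_1, \ldots, \L''_\delta\}$. By definition, there is a permutation $\phi$ of $\{1,\ldots,\delta\}$ with $\L_i \subseteq \L'_{\phi(i)}$ for all $i$, and a permutation $\psi$ of $\{1,\ldots,\delta\}$ with $\L'_j \subseteq \L''_{\psi(j)}$ for all $j$. First I would observe that $\psi \circ \phi$ is again a permutation of $\{1,\ldots,\delta\}$. Then, instantiating the second inclusion at $j = \phi(i)$ and chaining with the first, I get $\L_i \subseteq \L'_{\phi(i)} \subseteq \L''_{\psi(\phi(i))} = \L''_{(\psi \circ \phi)(i)}$ for every $i$. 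Hence $\fL$ is dominated by $\fL''$ via the permutation $\psi \circ \phi$, which is exactly the claim.

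There is essentially no obstacle here; the only point requiring a little care is to keep the order of composition straight (the witnessing permutation for the composite domination is $\psi \circ \phi$, not $\phi \circ \psi$) and to note that one is working with multisets, so the index sets genuinely have size $\delta$ and the bijections compose cleanly. No nontrivial calculation is involved.
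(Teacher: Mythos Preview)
Your proof is correct and follows essentially the same approach as the paper: both arguments compose the two witnessing permutations and use transitivity of set inclusion to obtain the required containments.
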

\begin{proof}
	Assume that we are given the configurations $\fL_1 = \{\L_{1,1}, \ldots, \L_{1,\delta}\}$, $\fL_2 = \{\L_{2,1}, \ldots, \L_{2,\delta}\}$, and $\fL_3 = \{\L_{3,1}, \ldots, \L_{3,\delta}\}$ such that $\fL_1$ is dominated by $\fL_2$ and $\fL_2$ is dominated by $\fL_3$. Let $\phi$ (resp.\ $\phi'$) be the permutation satisfying that $\L_{1,i} \subseteq \L_{2,\phi(i)}$ (resp.\ $\L_{2,i} \subseteq \L_{3,\phi'(i)}$) for all $i$. We obtain that $\L_{1,i} \subseteq \L_{3,\phi'(\phi(i))}$ for all $i$, and hence that $\fL_1$ is dominated by $\fL_3$.
\end{proof}
We say that a configuration $\fL'$ is strictly dominated by $\fL$ if $\fL'$ is dominated by $\fL$ and $\fL$ is not dominated by $\fL'$. We prove that the strict domination relation is \emph{well-founded}. This property will be used later to prove that our procedure is complete. 
Recall that a relation is well-founded if, in any non-empty set, there is a minimal element. In the case of the strict domination relation this means that, given any set of configurations, there exists at least one configuration that is not strictly dominated by any other configuration. 

\begin{lemma}[Well-foundedness]\label{lem:wellfounded}
	The strict domination relation is well-founded.
\end{lemma}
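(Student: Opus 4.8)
The strict domination relation is well-founded.

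The plan is to exhibit a function from configurations to a well-ordered set that is strictly decreasing along the strict domination relation; the existence of such a monotone map into a well-founded order immediately yields well-foundedness of the source relation. A configuration here is a multiset of $\delta$ subsets of the finite label set $\Sigma_\Pi$, so every configuration lives in a finite universe, but the collection of all configurations is still infinite only because we do not bound anything — actually, since $\delta$ is fixed and $\Sigma_\Pi$ is finite, there are only finitely many configurations of sets over $\Sigma_\Pi$ of length $\delta$ up to the multiset identification. This is the cleanest route: the set of all length-$\delta$ configurations of subsets of $\Sigma_\Pi$ is finite, hence any relation on it is trivially well-founded (a nonempty finite set has a minimal element with respect to any relation, by taking a maximal-length decreasing chain, which must terminate). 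So first I would observe that, because $\delta$ and $\Sigma_\Pi$ are fixed and finite, there are at most $(2^{|\Sigma_\Pi|})^\delta$ distinct configurations, hence the entire configuration space is finite.

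Given finiteness, I would argue as follows. Let $S$ be any nonempty set of configurations; I must produce a configuration in $S$ not strictly dominated by any other element of $S$. Since $S$ is finite, pick $\fL \in S$ maximizing the quantity $\sum_{j=1}^{\delta} |\L_j|$ (the total number of label-occurrences, summed over the $\delta$ sets). I claim such an $\fL$ is not strictly dominated within $S$: if $\fL' \in S$ strictly dominates $\fL$, then by definition $\fL$ is dominated by $\fL'$ via some permutation $\phi$ with $\L_j \subseteq \L'_{\phi(j)}$ for all $j$, so $\sum_j |\L_j| \le \sum_j |\L'_{\phi(j)}| = \sum_j |\L'_j|$, and by maximality of $\fL$ this forces equality, hence $\L_j = \L'_{\phi(j)}$ for every $j$ (a subset with equal cardinality is the whole set); but then $\fL'$ is also dominated by $\fL$ (via $\phi^{-1}$), contradicting strictness. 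Thus $\fL$ is a minimal element of $S$ under strict domination, establishing well-foundedness.

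The only subtle point, and the one I would state carefully, is the multiset identification: configurations are multisets of sets, and domination is defined up to a permutation $\phi$ matching the sets of one configuration to those of the other. I must make sure the potential function $\sum_j |\L_j|$ is well-defined on multisets (it is, since reordering does not change the sum) and that the argument respects the permutation in the definition of domination (it does, as shown above). There is no genuine obstacle here — the statement is essentially immediate once one notes the configuration space is finite — but if the paper wishes to avoid invoking finiteness (e.g., to keep the argument robust to a setting with infinitely many labels, which does not occur here), the same potential-function argument works verbatim as a direct proof of well-foundedness, since $\sum_j |\L_j|$ takes values in $\mathbb{N}$ and is non-increasing under domination and strictly would-be-increasing in the reverse direction; I would present it in that form so the proof reads uniformly regardless of the finiteness observation.
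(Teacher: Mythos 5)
Your proof is built on the same potential function as the paper's --- the weight of a configuration, i.e., the sum of the cardinalities of its sets --- but your extremizer points in the opposite direction, and for this lemma the two directions are not interchangeable. You take the \emph{highest}-weight configuration in $S$ and show it is not \emph{strictly dominated by} any other configuration in $S$; the paper takes the \emph{lowest}-weight configuration in $S$ and shows it does not \emph{strictly dominate} any other configuration in $S$. These are dual minimality statements for the same partial order, and only the latter is the one needed for the well-founded induction in \Cref{lem:explode}, which descends from a configuration $\fC$ to the configurations that $\fC$ strictly dominates. (To your partial defense, the paper's own prose preceding \Cref{lem:wellfounded} glosses ``minimal'' as ``not strictly dominated by any other configuration,'' which matches your reading, but the proof the paper actually gives, and the induction it later performs, are in the other direction.)

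Your opening finiteness observation does rescue the statement, but your blanket justification --- ``a nonempty finite set has a minimal element with respect to any relation, by taking a maximal-length decreasing chain'' --- is not correct for an arbitrary relation: a two-cycle on a two-element set has no minimal element and admits arbitrarily long decreasing walks. What you actually need, and what holds here, is that strict domination is irreflexive and transitive (transitivity of domination is \Cref{lem:domtrans}, and strict domination inherits both), so that a decreasing chain has pairwise-distinct entries and its length is bounded by the (finite) number of configurations. The cleanest repair, which also avoids any appeal to finiteness of $\Sigma_\Pi$, is simply to flip your extremum: pick the lowest-weight configuration in $S$ and observe that anything it strictly dominated would have strictly smaller weight, contradicting minimality. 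That is exactly the paper's argument.
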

\begin{proof}
	Define the weight of a configuration as the sum of the cardinalities of its sets. Obviously no configuration can have negative weight. We prove that, if a configuration $\fL = \{\L_1, \ldots, \L_\delta\}$ strictly dominates a configuration $\fL' = \{\L'_1, \ldots, \L'_\delta\}$, then the weight of $\fL'$ is strictly less than the weight of $\fL$. In fact, let $\phi$ be the permutation witnessing the strict domination relation between $\fL$ and $\fL'$, that is, $\L'_i \subseteq L_{\phi(i)}$ for all $1 \le i \le \delta$, and the inclusion is strict for at least one value of $i$. Observe that $|\L'_i| \le |L_{\phi(i)}|$ for all $1 \le i \le \delta$, and there is at least one value of $i$ such that $|\L'_i| < |L_{\phi(i)}|$. Hence, the weight of $\fL$ is strictly larger than the weight of $\fL'$.
	
	A suitable minimal configuration for any set is a configuration with the lowest weight. If there was a configuration strictly dominated by a lowest-weight configuration, that configuration would have even lower weight, which is a contradiction.
\end{proof}

\paragraph{Configuration construction from the input  configurations.}
We call a configuration $\fL = \{\L_1, \ldots, \L_\delta\}$ a \emph{singleton configuration} if $|\L_i| = 1$ for all $i$.
\begin{lemma}[Configuration splitting]
	\label{lem:linesplit}
	For any non-singleton configuration $\fC$ there exist two configurations strictly dominated by $\fC$ that can be combined into $\fC$.
\end{lemma}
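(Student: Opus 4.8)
The plan is to build the two configurations that combine into $\fC$ by essentially reverse-engineering a single \textsc{Combine} step. Let $\fC = \{\C_1, \ldots, \C_\delta\}$ be non-singleton, so there is some index $u$ with $|\C_u| \geq 2$; fix such a $u$ and pick two distinct labels $\ell, \ell' \in \C_u$. The idea is to split $\C_u$ into two strictly smaller sets $\A = \C_u \setminus \{\ell'\}$ and $\B = \C_u \setminus \{\ell\}$, and leave every other coordinate untouched. Concretely, define $\fL = \{\L_1, \ldots, \L_\delta\}$ with $\L_u = \A$ and $\L_i = \C_i$ for $i \neq u$, and $\fL' = \{\L'_1, \ldots, \L'_\delta\}$ with $\L'_u = \B$ and $\L'_i = \C_i$ for $i \neq u$. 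Then combining $\fL$ and $\fL'$ with respect to the identity permutation $\phi = \id$ and position $u$ yields, in coordinate $u$, $\L_u \cup \L'_u = (\C_u \setminus \{\ell'\}) \cup (\C_u \setminus \{\ell\}) = \C_u$ (since $\ell \neq \ell'$, no single label is removed from both sides), and in every other coordinate $i \neq u$, $\L_i \cap \L'_i = \C_i \cap \C_i = \C_i$. So the combination is exactly $\fC$.

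Next I would verify the domination claims. Both $\fL$ and $\fL'$ are dominated by $\fC$ via the identity permutation, since $\L_u = \A \subseteq \C_u$, $\L'_u = \B \subseteq \C_u$, and all other coordinates agree. The domination is strict in each case: $\fC$ is not dominated by $\fL$, because under any permutation $\phi$ witnessing $\C_i \subseteq \L_{\phi(i)}$ for all $i$, a counting/weight argument (as in the proof of \Cref{lem:wellfounded}) forces $\sum_i |\C_i| \leq \sum_i |\L_i| = \big(\sum_i |\C_i|\big) - 1$, a contradiction; and symmetrically for $\fL'$. Thus $\fC$ strictly dominates both $\fL$ and $\fL'$, and they combine into $\fC$, which is what the lemma asserts.

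One subtlety I would double-check is that $\fL$ and $\fL'$ are themselves legitimate configurations in the sense required downstream — in particular that none of their sets is empty. Since $\C_u$ has at least two elements, both $\A$ and $\B$ are nonempty; and for $i \neq u$, $\L_i = \C_i$, which is nonempty as long as $\fC$ itself has no empty set. If $\fC$ arises as an element of some $\Psi_i$ this holds by construction (empty-set configurations are discarded), but to keep the lemma self-contained I would either add the hypothesis that $\fC$ contains no empty set or simply note that the statement is only ever applied to such $\fC$. I do not anticipate a genuine obstacle here; the only mild care needed is the bookkeeping that union in coordinate $u$ really recovers $\C_u$ exactly (it does precisely because we removed \emph{different} labels on the two sides), and that the strictness of domination follows from the same monotone-weight observation already established for \Cref{lem:wellfounded}. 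The completeness argument that presumably follows will then iterate this splitting, using well-foundedness of strict domination to guarantee the recursion on singleton configurations terminates.
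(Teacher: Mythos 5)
Your proof is correct and takes the same route as the paper: split the non-singleton coordinate $\C_u$ by removing one of two distinct labels on each side, observe the union recovers $\C_u$ while all other coordinates pass through intersection unchanged, and conclude strict domination from the strictly smaller total weight. The extra care you take about empty sets and the explicit weight-based strictness argument are fine but not needed beyond what the paper already states.
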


\begin{proof}
	Since $\fC$ is not a singleton configuration, it must contain some set $\S = \{\a, \b, \dots\}$ that contains at least two elements. We create two configurations from $\fC$: one where $\S$ is replaced with $\S \setminus \set{\a}$ and another where $\S$ is replaced with $\S \setminus \set{\b}$. Observe that $(\S \setminus \set{\a}) \cup (\S \setminus \set{\b}) = \S$, so the two created configurations can be combined into $\fC$ and $\fC$ strictly dominates them, as they have strictly fewer labels in them.
\end{proof}

\begin{lemma}[Configuration construction]
	\label{lem:explode}
	Any configuration can be built by combining singleton configurations that it dominates.
\end{lemma}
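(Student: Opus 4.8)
The plan is to prove this by induction on the weight of a configuration (the sum of cardinalities of its sets), combining the two preceding lemmas. The base case is when $\fC$ is a singleton configuration: then $\fC$ can trivially be built from the single singleton configuration $\fC$ itself (indeed, combining $\fC$ with itself using the identity permutation and any $u$ yields $\fC$, since $\C_u \cup \C_u = \C_u$ and $\C_i \cap \C_i = \C_i$). So every singleton configuration dominates itself and is built from a singleton configuration it dominates.

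For the inductive step, suppose $\fC$ is not a singleton configuration. By \Cref{lem:linesplit} (Configuration splitting), there exist two configurations $\fC_1$ and $\fC_2$, each strictly dominated by $\fC$, that can be combined into $\fC$. Since strict domination decreases weight (as shown in the proof of \Cref{lem:wellfounded}), both $\fC_1$ and $\fC_2$ have strictly smaller weight than $\fC$, so the induction hypothesis applies: each $\fC_j$ can be built by combining singleton configurations that $\fC_j$ dominates. By transitivity of domination (\Cref{lem:domtrans}), and since $\fC$ strictly dominates each $\fC_j$ which in turn dominates each of its constituent singleton configurations, $\fC$ dominates all of those singleton configurations. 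Combining the build of $\fC_1$ and the build of $\fC_2$ and then combining the two resulting configurations (which yields $\fC$) exhibits $\fC$ as built from singleton configurations it dominates.

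The one point requiring a little care is what exactly ``built by combining'' means as a recursive/tree notion: a configuration is built from a set of singleton configurations if it is either itself one of them, or it is the combination (w.r.t.\ some $\phi$ and $u$) of two configurations each built from singleton configurations in the set. With this reading, the inductive step above goes through verbatim, since ``$\fC$ is the combination of $\fC_1$ and $\fC_2$'' is precisely the recursive clause. The main (and only mild) obstacle is ensuring the domination bookkeeping is consistent — that the singleton configurations appearing at the leaves of the resulting binary tree are all dominated by $\fC$ — which is handled cleanly by \Cref{lem:domtrans}. Everything else is a straightforward strong induction on weight, with well-foundedness (\Cref{lem:wellfounded}) guaranteeing the recursion bottoms out.
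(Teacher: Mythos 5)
Your proof is correct and follows essentially the same route as the paper's: well-founded/weight induction on configurations, with the configuration splitting lemma (\Cref{lem:linesplit}) providing the inductive step. Your version is slightly more explicit—spelling out induction on weight directly rather than invoking well-founded induction abstractly, and explicitly citing \Cref{lem:domtrans} for the domination bookkeeping that the paper leaves implicit—but the core decomposition and argument are identical.
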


\begin{proof}
	By \Cref{lem:wellfounded}, the strict domination relation is well-founded, and hence we can perform well-founded induction on configurations. Thus, it suffices to show that if all configurations strictly dominated by $\fC$ can be built from dominated singleton configurations, $\fC$ can be, too.
	
	The lemma clearly holds if $\fC$ is a singleton configuration. If $\fC$ is not a singleton configuration, we can use the configuration splitting lemma (\Cref{lem:linesplit}) to show that it can be built out of two configurations that it strictly dominates. The induction hypothesis tells us that those configurations in turn can be built from dominated singleton configurations.
\end{proof}

\begin{lemma}[Property of dominating configurations]\label{lem:biggood}
	If configurations $\fL_1 = \{\L_{1,1}, \ldots, \L_{1,\delta}\}$ and $\fL_2 = \{\L_{2,1}, \ldots, \L_{2,\delta}\}$ can be combined into some configuration $\fC = \{\C_1, \ldots, \C_\delta\}$, then any two configurations $\fL'_1 = \{\L'_{1,1}, \ldots, \L'_{1,\delta}\}$  and $\fL'_2 = \{\L'_{2,1}, \ldots, \L'_{2,\delta}\}$ such that $\fL'_1$ dominates $\fL_1$ and $\fL'_2$ dominates $\fL_2$ can be combined into some configuration $\fC' =  \{\C'_1, \ldots, \C'_\delta\}$ that dominates $\fC$.
\end{lemma}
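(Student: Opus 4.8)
The plan is to construct the required combination $\fC'$ explicitly and to exhibit the permutation that witnesses $\fC' $ dominating $\fC$. First I would unpack all the data. Let $\phi$ and $u$ be the permutation and position realizing the combination of $\fL_1$ and $\fL_2$ into $\fC$, so that $\C_u = \L_{1,u}\cup\L_{2,\phi(u)}$ and $\C_i = \L_{1,i}\cap\L_{2,\phi(i)}$ for $i\neq u$. Let $\psi_1$ (resp.\ $\psi_2$) be the permutation witnessing that $\fL'_1$ dominates $\fL_1$ (resp.\ $\fL'_2$ dominates $\fL_2$), i.e.\ $\L_{1,i}\subseteq\L'_{1,\psi_1(i)}$ and $\L_{2,i}\subseteq\L'_{2,\psi_2(i)}$ for all $i$.

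Next I would define $\fC'$ to be the combination of $\fL'_1$ and $\fL'_2$ with respect to the permutation $\phi' := \psi_2\circ\phi\circ\psi_1^{-1}$ and the position $u' := \psi_1(u)$. The reason for this particular choice is that the indices then line up: one checks directly that $\phi'(\psi_1(i)) = \psi_2(\phi(i))$ for every $i$, that $\psi_1$ sends the distinguished position $u$ to the distinguished position $u'$ of the new combination, and that, $\psi_1$ being a bijection, it sends every $i\neq u$ to some position $\psi_1(i)\neq u'$, i.e.\ to an intersection slot of $\fC'$.

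Finally I would verify that $\fC'$ dominates $\fC$ via the permutation $\psi_1$ itself, that is, that $\C_i\subseteq\C'_{\psi_1(i)}$ for all $i$, splitting into two cases. For $i=u$: here $\C'_{u'} = \L'_{1,u'}\cup\L'_{2,\phi'(u')}$, and since $\L'_{1,\psi_1(u)}\supseteq\L_{1,u}$ and $\phi'(u') = \psi_2(\phi(u))$ with $\L'_{2,\psi_2(\phi(u))}\supseteq\L_{2,\phi(u)}$, we get $\C'_{u'}\supseteq\L_{1,u}\cup\L_{2,\phi(u)} = \C_u$. For $i\neq u$: here $\psi_1(i)\neq u'$, so $\C'_{\psi_1(i)} = \L'_{1,\psi_1(i)}\cap\L'_{2,\phi'(\psi_1(i))}$, and using $\L'_{1,\psi_1(i)}\supseteq\L_{1,i}$ together with $\phi'(\psi_1(i)) = \psi_2(\phi(i))$ and $\L'_{2,\psi_2(\phi(i))}\supseteq\L_{2,\phi(i)}$ we obtain $\C'_{\psi_1(i)}\supseteq\L_{1,i}\cap\L_{2,\phi(i)} = \C_i$. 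This proves the lemma.

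The only place requiring care — more a bookkeeping matter than a genuine obstacle — is getting the composition $\phi' = \psi_2\circ\phi\circ\psi_1^{-1}$ and the relocated distinguished position $u' = \psi_1(u)$ exactly right, so that $\psi_1$ simultaneously matches the union slot of $\fC$ to the union slot of $\fC'$ and the intersection slots to intersection slots; once the indices are aligned, the rest is just monotonicity of $\cup$ and $\cap$ under inclusion. Note also that no hypothesis about empty sets is needed: the combination operation and the domination relation are both defined purely combinatorially, so the argument goes through verbatim.
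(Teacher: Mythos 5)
Your proof is correct and takes essentially the same approach as the paper's: the paper proves the lemma by assuming w.l.o.g. that the two dominance permutations are the identity and then combining $\fL'_1$ and $\fL'_2$ with the original $\phi$ and $u$, while you unfold that ``w.l.o.g.'' explicitly by conjugating to get $\phi' = \psi_2\circ\phi\circ\psi_1^{-1}$ and $u' = \psi_1(u)$. The index bookkeeping checks out, and $\psi_1$ is indeed the witnessing permutation for the final domination.
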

\begin{proof}
	Assume that $\fC$ is obtained by combining $\fL_1$ and $\fL_2$ w.r.t.\ $\phi$ and $u$. Let $\phi_1$ be the permutation satisfying $\L_{1,i} \subseteq \L'_{1,\phi_1(i)}$ for all $i$, and let $\phi_2$ be the permutation satisfying $\L_{2,i} \subseteq \L'_{2,\phi_2(i)}$ for all $i$. W.l.o.g., assume that $\phi_1$ and $\phi_2$ are the identity function.
	Let $\fC'$ be the combination of $\fL'_1$ and $\fL'_2$ w.r.t.\ $\phi$ and $u$. Observe that $\C_i \subseteq \C'_i$ for all $i$, and hence $\fC'$ dominates $\fC$.
\end{proof}

\begin{lemma}[Combination is complete]
	\label{lem:complete}
	Let $\fC$ be an arbitrary configuration that satisfies the universal quantifier. A configuration dominating $\fC$ can be obtained by combining input configurations.
\end{lemma}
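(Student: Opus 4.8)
The plan is to prove the statement by well-founded induction on $\fC$ with respect to the strict domination relation, which is a legitimate induction principle by \Cref{lem:wellfounded}. Before running the induction I would record one simple but crucial fact: satisfying the universal quantifier is preserved under passing to a dominated configuration. Indeed, if $\fC = \{\C_1, \ldots, \C_\delta\}$ satisfies the universal quantifier and $\fC'' = \{\C''_1, \ldots, \C''_\delta\}$ is dominated by $\fC$ via a permutation $\phi$ with $\C''_i \subseteq \C_{\phi(i)}$, then any choice of one label from each set of $\fC''$ is, after relabelling indices through $\phi$, also a choice of one label from each set of $\fC$, hence an allowed edge configuration of $\Pi$; so $\fC''$ satisfies the universal quantifier as well. (We may also assume throughout that $\fC$ contains no empty set, as otherwise the claim is either vacuous or irrelevant to $\edgeconst_{\Pi'}$.)

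With this in hand the induction splits according to \Cref{lem:linesplit}. If $\fC$ is a singleton configuration, say $\C_i = \{\c_i\}$, then the only element of $\C_1 \times \cdots \times \C_\delta$ is $\{\c_1, \ldots, \c_\delta\}$ itself, so the universal-quantifier hypothesis forces $\{\c_1, \ldots, \c_\delta\} \in \edgeconst_\Pi$; since $\cond_\Pi$ describes $\edgeconst_\Pi$, some input configuration $\fL \in \cond_\Pi$ has $\{\c_1, \ldots, \c_\delta\}$ among the configurations it represents, which is precisely to say that $\fL$ dominates $\fC$, and an input configuration trivially counts as ``obtained by combining input configurations''. If $\fC$ is not a singleton, \Cref{lem:linesplit} furnishes two configurations $\fC_1, \fC_2$, each strictly dominated by $\fC$, that can be combined into $\fC$. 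By the preservation fact above, $\fC_1$ and $\fC_2$ both still satisfy the universal quantifier, so the induction hypothesis applies to each and yields configurations $\fC'_1, \fC'_2$, both obtained by combining input configurations, with $\fC'_j$ dominating $\fC_j$. Finally \Cref{lem:biggood}, applied to the combination of $\fC_1$ and $\fC_2$ into $\fC$, produces a combination of $\fC'_1$ and $\fC'_2$ whose result $\fC'$ dominates $\fC$; and $\fC'$ is obtained by combining input configurations, being the result of one further combination applied to the constructions for $\fC'_1$ and $\fC'_2$. This closes the induction. (The construction produced this way is exactly a binary tree of combinations with input configurations at its leaves, matching the informal description given earlier.)

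I expect the only genuinely delicate point to be the preservation fact in the first paragraph: it is exactly what makes the two smaller configurations produced by \Cref{lem:linesplit} legitimate inputs to the induction hypothesis, and without it the recursion breaks down. Everything else is routine: matching the hypotheses of \Cref{lem:biggood} to the configurations at hand, observing that ``obtained by combining input configurations'' is closed under one more combination, and reading that phrase correctly in the singleton base case, where no combination is performed at all. Together with \Cref{lem:sound}, this lemma shows that the combination closure of $\cond_\Pi$ consists of exactly the configurations satisfying the universal quantifier, up to domination; pruning to maximal elements (using transitivity, \Cref{lem:domtrans}) then identifies the output of $\newre$ with $\edgeconst_{\Pi'}$, i.e.\ \Cref{thm:newreiscorrect}.
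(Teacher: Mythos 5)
Your proof is correct and uses the same ingredients as the paper's: well-foundedness (\Cref{lem:wellfounded}), the splitting lemma (\Cref{lem:linesplit}), and the monotonicity of combination under domination (\Cref{lem:biggood}). The only organizational difference is that the paper first proves \Cref{lem:explode} as a standalone statement (every configuration is built from dominated singletons, via the same well-founded induction on \Cref{lem:linesplit}) and then, in \Cref{lem:complete}, applies the preservation-under-domination observation only at the singleton leaves before invoking \Cref{lem:biggood} to push input configurations up the tree; you instead inline that induction directly into the proof of the lemma, which forces you to observe that the universal quantifier is preserved under domination at every level, not just at the leaves. Both readings are sound, and yours is arguably a bit more explicit about the iterated use of \Cref{lem:biggood} that the paper summarizes in a single sentence.
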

\begin{proof}
	According to \Cref{lem:explode}, the configuration $\fC$ can be built from singleton configurations that it dominates. By the definition of the domination relation, since $\fC$ satisfies the universal quantifier, those singleton configurations also satisfy the universal quantifier. Moreover, all singleton configurations are dominated by at least one condensed configuration that is part of the input. By \Cref{lem:biggood}, we can replace the singleton configurations required by \Cref{lem:explode} with the ones that dominate them and that are part of the input.
\end{proof}

\begin{corollary}
	All the maximal configurations can be built by combining input configurations.
\end{corollary}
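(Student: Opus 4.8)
The plan is to obtain the corollary almost immediately from \Cref{lem:complete}, together with \Cref{lem:sound} and the definition of maximality. Recall that a \emph{maximal configuration} here is one that satisfies the universal quantifier and is not strictly dominated by any other configuration satisfying the universal quantifier, i.e., precisely a configuration that should appear in $\edgeconst_{\Pi'}$. So fix such a configuration $\fC$.

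First I would apply \Cref{lem:complete} to obtain a configuration $\fC'$ that dominates $\fC$ and that is built by combining input configurations. Since the input configurations satisfy the universal quantifier (as observed just before \Cref{lem:sound}, this is exactly what it means for $\cond_{\Pi}$ to encode $\edgeconst_{\Pi}$) and, by \Cref{lem:sound}, any combination of two configurations satisfying the universal quantifier again satisfies it, a short induction along the sequence (or tree) of combine operations producing $\fC'$ shows that $\fC'$ itself satisfies the universal quantifier.

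The key step is then to invoke maximality: $\fC'$ is a valid configuration that dominates $\fC$, so maximality of $\fC$ forbids $\fC'$ from \emph{strictly} dominating $\fC$, whence $\fC$ must in turn dominate $\fC'$. Two configurations of the same length $\delta$ that dominate each other coincide as multisets of sets up to reordering --- otherwise, by the weight computation in the proof of \Cref{lem:wellfounded}, one of them would strictly dominate the other, a contradiction. Since configurations are identified up to reordering, $\fC = \fC'$, and hence $\fC$ is built by combining input configurations, as claimed. I do not anticipate a genuine obstacle; the only points requiring a little care are that soundness must be propagated along the whole sequence of combine steps rather than a single one, and that mutual domination between configurations of equal length forces equality --- and both follow from results already established (\Cref{lem:sound} and the weight argument in \Cref{lem:wellfounded}).
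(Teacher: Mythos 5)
Your proof is correct and follows essentially the same route as the paper's: invoke \Cref{lem:complete} to get a dominating configuration built from input configurations, then use maximality to conclude it must coincide with the given one. You are simply more explicit than the paper about two points it leaves implicit — that the constructed configuration satisfies the universal quantifier (by propagating \Cref{lem:sound} along the combination tree) and that mutual domination forces multiset equality (via the weight count from \Cref{lem:wellfounded}).
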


\begin{proof}
	A configuration is maximal if it is not strictly dominated by any other configuration satisfying the universal quantifier. Thus, a configuration satisfying the universal quantifier and dominating a maximal configuration must be the configuration itself, and hence by \Cref{lem:complete} it is possible to obtain it by combining input configurations.
\end{proof}

\paragraph{Procedure completeness.}
We have shown that it is possible to start from the input configurations and to repeatedly combine them in order to obtain any maximal configuration from $\edgeconst_{\Pi'}$. However, $\newre$ works slightly differently: at each step, non-maximal configurations are discarded (this makes the procedure easier to apply and more efficient in practice). We now prove that $\newre$ is anyways complete. We denote with \emph{missing configuration} a configuration satisfying the universal quantifier that is not dominated by any of the already computed configurations.

\begin{lemma}
	\label{lem:biggermissing}
	A configuration that dominates a missing configuration is also missing.
\end{lemma}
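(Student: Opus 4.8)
The plan is to reduce the claim to transitivity of the domination relation, which is already available as \Cref{lem:domtrans}. Throughout, I read the statement as concerning configurations that satisfy the universal quantifier (as required for the notion of a missing configuration to be meaningful): so let $\fC$ be a missing configuration and let $\fC'$ be a configuration satisfying the universal quantifier that dominates $\fC$. By the definition of a missing configuration, what remains to be shown is only that $\fC'$ is not dominated by any of the already computed configurations.

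I would argue by contradiction. Suppose some already computed configuration $\fD$ dominates $\fC'$. Since $\fC'$ dominates $\fC$ and $\fD$ dominates $\fC'$, \Cref{lem:domtrans} yields that $\fD$ dominates $\fC$. But $\fD$ is one of the already computed configurations, so this contradicts the assumption that $\fC$ is missing, which asserts precisely that $\fC$ is dominated by none of the already computed configurations. Hence no already computed configuration dominates $\fC'$, and since $\fC'$ satisfies the universal quantifier, $\fC'$ is missing.

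The only point worth flagging is the hypothesis that $\fC'$ itself satisfies the universal quantifier: this does not follow from $\fC'$ dominating $\fC$, since enlarging the sets of a configuration can only make the universal quantifier harder to satisfy, not easier. It must therefore be understood as part of the hypothesis, which is harmless, as in the completeness argument the lemma is only ever applied to configurations already known to satisfy the universal quantifier (e.g., via \Cref{lem:sound}, or because they are dominated by and hence inherit validity from a valid configuration). Given that, there is no genuine obstacle here: the entire proof is the single application of transitivity above, and the lemma merely packages it for later use in propagating the existence of a missing maximal configuration through the discarding steps of $\newre$.
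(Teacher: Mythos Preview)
Your proof is correct and follows exactly the same route as the paper: assume the dominating configuration is not missing, obtain a computed configuration dominating it, and apply transitivity (\Cref{lem:domtrans}) to contradict the missingness of the original configuration. Your explicit remark that the dominating configuration must be assumed to satisfy the universal quantifier is a useful clarification that the paper's proof leaves implicit.
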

\begin{proof}
	Let $\fL_1$ and $\fL_2$ be any configuration such that $\fL_2$ dominates $\fL_1$ and $\fL_1$ is missing. Suppose that $\fL_2$ is not missing. Then there is a configuration $\fL_3$ that dominates $\fL_2$. Because the domination relation is transitive (\Cref{lem:domtrans}), $\fL_3$ dominates $\fL_1$ as well, so $\fL_1$ is not missing, which is a contradiction.
\end{proof}

\begin{lemma}
	\label{thm:justtwo}
	Suppose that at least one configuration is missing. Then, some missing configuration can be obtained by combining two already computed configurations.
\end{lemma}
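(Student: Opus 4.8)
The plan is to induct on a suitable weight measure, mirroring the well-foundedness argument already used in the paper. Define the weight of a configuration as the sum of the cardinalities of its sets, exactly as in the proof of \Cref{lem:wellfounded}. Weights are non-negative integers, so among all missing configurations there is one of minimum weight; fix such a configuration $\fC$. I would show that some configuration that dominates $\fC$ --- which is again missing by \Cref{lem:biggermissing} --- can be obtained by combining two already-computed configurations, which proves the statement.

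First I would rule out that $\fC$ is a singleton configuration. If it were, then, since $\fC$ is missing, it satisfies the universal quantifier, and any singleton configuration satisfying the universal quantifier is dominated by some input configuration (this is precisely the observation used in the proof of \Cref{lem:complete}). Every input configuration is in turn dominated by some already-computed configuration: the \textsc{DiscardNonMaximal} step only ever removes configurations that are strictly dominated by another configuration of the current set, so by transitivity (\Cref{lem:domtrans}) and well-foundedness (\Cref{lem:wellfounded}) of domination one can follow the chain of ``discarded because dominated'' configurations up to a surviving one. Hence, by transitivity, $\fC$ would be dominated by an already-computed configuration, contradicting that it is missing; so $\fC$ is not a singleton.

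Next I would invoke the configuration-splitting lemma (\Cref{lem:linesplit}) to write $\fC$ as a combination of two configurations $\fL_1, \fL_2$ that $\fC$ strictly dominates. The same weight bookkeeping as in \Cref{lem:wellfounded} shows that $\fL_1$ and $\fL_2$ both have weight strictly smaller than that of $\fC$. They also satisfy the universal quantifier: $\fC$ does and dominates each of them, so this follows from the definition of domination (as used in \Cref{lem:complete}); alternatively it follows from \Cref{lem:sound}. Since $\fC$ was chosen of minimum weight among missing configurations, neither $\fL_1$ nor $\fL_2$ can be missing, i.e., each of them is dominated by some already-computed configuration, say $\fD_1$ and $\fD_2$ respectively. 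Applying \Cref{lem:biggood} to the combination $\fC$ of $\fL_1$ and $\fL_2$ and to the dominating pair $\fD_1, \fD_2$, we obtain a combination $\fC'$ of $\fD_1$ and $\fD_2$ that dominates $\fC$. By \Cref{lem:biggermissing}, $\fC'$ is missing as well, and $\fC'$ is a combination of two already-computed configurations, which is exactly what we need.

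I expect the main obstacle to be the base case, that is, arguing cleanly that singleton missing configurations cannot occur: one must be sure that every input configuration is covered (up to domination) by an already-computed configuration even after several \textsc{DiscardNonMaximal} steps, which is where transitivity and well-foundedness of domination are needed to chase a chain of discarded-because-dominated configurations up to a surviving one. Everything else is a direct application of the already-established lemmas about combining configurations and about domination, and the whole argument hinges on picking a minimum-weight missing configuration, which is what makes the descent step go through.
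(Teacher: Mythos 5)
Your proof is correct, and it takes a noticeably different route from the paper's. The paper's proof starts from \Cref{lem:complete}: it fixes a whole combination \emph{tree} whose leaves are input configurations and whose root dominates the missing configuration, then descends from the root, recursing into a missing child whenever one is encountered, and terminates because the leaves are input configurations; \Cref{lem:biggood} is applied only once the descent reaches a node whose two children are both non-missing. You instead avoid the tree entirely by a weight-minimality argument: pick a minimum-weight missing configuration $\fC$, observe that it cannot be a singleton (a singleton satisfying the universal quantifier is dominated by an input configuration, which in turn is always dominated by a surviving configuration after any number of \textsc{DiscardNonMaximal} passes), split $\fC$ once via \Cref{lem:linesplit} into two strictly lighter, still-valid configurations, which by minimality are non-missing and hence dominated by computed configurations, and finish with a single application of \Cref{lem:biggood} followed by \Cref{lem:biggermissing}. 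Your version is arguably more direct: a one-step split plus an extremal choice replaces the paper's recursive tree walk, and you do not need to invoke \Cref{lem:complete} or \Cref{lem:explode} as black boxes (only the singleton observation from \Cref{lem:complete}'s proof). You are also right that the genuinely delicate point, which both proofs gloss over to some extent, is that input configurations remain dominated by the current working set after repeated discarding; your chain-chasing argument (terminating by boundedness of weight) is the correct way to fill that in. One small inaccuracy: the parenthetical ``alternatively it follows from \Cref{lem:sound}'' for the validity of $\fL_1, \fL_2$ is not right --- \Cref{lem:sound} goes in the upward direction (combinations of valid configurations are valid), and says nothing about the pieces of a split being valid. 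Your primary justification, that domination by a configuration satisfying the universal quantifier transfers the property downward because the product of the smaller sets is a subset of the product of the larger ones, is the correct one.
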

\begin{proof}
	In the following, by \emph{valid configuration} we denote a configuration that satisfies the universal quantifier.
	According to \Cref{lem:complete}, there is some way of combining the input configurations that produces a configuration $\fL$ dominating the missing configuration.
	By \Cref{lem:sound}, any way of (recursively) combining the input configurations can only produce valid configurations, and hence all the configurations leading up to $\fL$ are also valid.
	By definition, a configuration that is valid but not missing is dominated by some computed configuration. Thus, all the configurations leading up to $\fL$ are either missing or dominated by a computed configuration. 
	
	We consider the two cases separately. Let $\fL_1$ and $\fL_2$ be two configurations that can be combined to produce $\fL$. Suppose there exist two \emph{already computed} configurations $\fL'_1$ and $\fL'_2$ that dominate $\fL'_1$ and $\fL'_2$. According to \Cref{lem:biggood}, combining $\fL'_1$ and $\fL'_2$ in the right way yields a configuration that dominates $\fL$. \Cref{lem:biggermissing} tells us that if the obtained configuration strictly dominates the missing one, then the obtained one is missing as well.
	
	Now, suppose that one (or both) of the configurations are missing. In this case, recurse into the missing configuration. Eventually, we will reach a pair of non-missing configurations, as the combination starts with input configurations. At that point, the previous case yields the lemma statement.
\end{proof}

\Cref{thm:justtwo}, combined with \Cref{lem:sound}, implies that, when $\newre$ terminates, it indeed produces all and only the configurations that satisfy the universal quantifier, and hence that it is correct. We now prove that $\newre$ terminates in finite time.
\begin{lemma}[Termination]
	Procedure $\newre$ terminates in finite time.
\end{lemma}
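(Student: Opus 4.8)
The plan is to combine the finiteness of the space of configurations with a monotonicity invariant that survives the \textsc{DiscardNonMaximal} step. First I would note that every configuration that can ever arise during an execution of $\newre$ is a multiset of $\delta$ subsets of the fixed finite label set $\Sigma_\Pi$: combinations only take unions and intersections of sets already present, so no new labels are ever introduced. Consequently all the sets $\Psi_i$ live inside one fixed finite universe $\mathcal{U}$ of configurations, and each $\Psi_i$ is a subset of $\mathcal{U}$.

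The naive hope that the sequence $(\Psi_i)$ is itself non-decreasing fails, since \textsc{DiscardNonMaximal} may delete configurations. The fix is to track instead the \emph{downward closure} $\overline{\Psi_i} := \{\fL \in \mathcal{U} \mid \fL$ is dominated by some member of $\Psi_i\}$. The key claim is $\overline{\Psi_i} \subseteq \overline{\Psi_{i+1}}$ for every $i$: the set $\Psi_{i+1}$ is obtained by throwing the newly generated combinations in with $\Psi_i$ and then keeping exactly the maximal configurations, and \textsc{DiscardNonMaximal} removes a configuration only when some retained configuration dominates it; iterating this (weights strictly increase along a chain of strict dominations, and are bounded by $\delta\cdot|\Sigma_\Pi|$, so such chains are finite) shows every element of $\Psi_i$ is dominated by some element of $\Psi_{i+1}$, which via transitivity (\Cref{lem:domtrans}) gives $\overline{\Psi_i}\subseteq\overline{\Psi_{i+1}}$. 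Since $\mathcal{U}$ is finite, the non-decreasing chain $\overline{\Psi_0}\subseteq\overline{\Psi_1}\subseteq\cdots$ of subsets of $\mathcal{U}$ must stabilize: there is an $i_0$ with $\overline{\Psi_i}=\overline{\Psi_{i_0}}$ for all $i\ge i_0$.

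It then remains to deduce that equal downward closures force the $\Psi_i$ themselves to be equal, so that the termination test ``$\Psi_{i+1}=\Psi_i$'' in \Cref{algo:newre} eventually triggers. For every $i\ge 1$ the set $\Psi_i$ is an output of \textsc{DiscardNonMaximal} and hence an antichain under domination (which is in fact a partial order: mutual domination forces equal weights and thus equality of configurations). So the statement I need is that two domination-antichains $A,B$ with $\overline{A}=\overline{B}$ are equal. Given $\fL\in A\subseteq\overline{A}=\overline{B}$, some $\fL'\in B$ dominates $\fL$; then $\fL'\in\overline{B}=\overline{A}$, so some $\fL''\in A$ dominates $\fL'$; transitivity (\Cref{lem:domtrans}) makes $\fL''$ dominate $\fL$, and the antichain property of $A$ forces $\fL''=\fL$, so $\fL$ and $\fL'$ dominate each other. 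The weight function from the proof of \Cref{lem:wellfounded} (non-decreasing along domination, strictly decreasing along strict domination) then forces $\fL=\fL'\in B$; by symmetry $B\subseteq A$ as well, so $A=B$. Taking $i=\max(i_0,1)$ yields $\Psi_i=\Psi_{i+1}$, hence $\newre$ exits the loop after finitely many iterations.

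The only non-routine ingredient is this last step: because $\newre$ deliberately discards non-maximal configurations at every iteration, the raw sets $\Psi_i$ are not monotone, and one is forced to pass through downward closures and then reconstruct the actual sets from them. That reconstruction is exactly where antisymmetry of the domination relation — resting on the weight argument of \Cref{lem:wellfounded} — enters. Everything else is a routine finiteness argument, so I expect no further difficulty.
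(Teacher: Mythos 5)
Your proof tracks exactly the same quantity as the paper's---the downward closure $\overline{\Psi_i}$ under domination, whose size the paper denotes $f(\Psi_i)$---and derives termination from the finiteness of the configuration universe in essentially the same way. You are more careful than the paper's terse argument about two facts it leaves implicit (that $\overline{\Psi_i}$ is non-decreasing across the \textsc{DiscardNonMaximal} step, and that stabilization of the closure forces $\Psi_i=\Psi_{i+1}$ via antisymmetry of domination on antichains, with the $i=0$ edge case handled separately), but the approach is the same.
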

\begin{proof}
	Let $\Psi$ be a set, where each element of the set is a multiset of size $\delta$, and each element of the multiset is a subset of $\Sigma$. That is, $\Psi$ is a constraint with configurations of size $\delta$ and of labels in $2^\Sigma$. 
	We denote with $f(\Psi)$ the number of all possible configurations of size $\delta$ and of labels in $2^\Sigma$ that are dominated by the configurations present in $\Psi$.
	
	Procedure $\newre$ starts from $\cond_{\Pi}$ (the given condensed configurations) and then it repeatedly combines configurations until nothing new is obtained. Recall that $\Psi_0, \Psi_1, \ldots$ is the sequence of constraints computed in $\newre$. Let $n_i = f(\Psi_i)$. Observe that $n_0 \ge 0$, and for all $i$, $n_{i+1} \ge n_i + 1$, since if no missing configuration is obtained, then $\newre$ terminates. The termination of $\newre$ is guaranteed by the fact that $f(\edgeconst_{\Pi'})$ is a finite number (since $|\Sigma|$ and $\delta$ are finite).
\end{proof}

\section{Fixed Point Generation}\label{sec:procedure}
In the \LOCAL model, one of the few known ways that we have for showing that a problem $\Pi$ cannot be solved locally (i.e., in constant time if a suitable form of symmetry breaking is provided) is to prove that the problem can be relaxed into a non-trivial fixed point $\Pi'$.  A non-trivial fixed point relaxation $\Pi'$ for a problem $\Pi$ is a problem satisfying the following: $\Pi'$ can be solved in $0$ rounds given a solution for $\Pi$, $\rere(\re(\Pi')) = \Pi'$, and $\Pi'$ cannot be solved in $0$ rounds in the port numbering model. It is known, by prior work (see \Cref{thm:lifting}), that a non-trivial fixed point relaxation $\Pi'$ for a problem $\Pi$ implies that $\Pi$, in the \LOCAL model, requires $\Omega(\log n)$ rounds for deterministic algorithms and $\Omega(\log \log n)$ rounds for randomized ones. 

In this section, we present a procedure, called $\fpp$, that, given a problem $\Pi$, is able to automatically find a fixed point relaxation for $\Pi$. Sometimes, this fixed point relaxation is a trivial (i.e., $0$-round-solvable) problem (even if the problem we start from has complexity $\Omega(\log n)$), and some other times the fixed point relaxation is non-trivial. 
In the next sections (see \Cref{sec:2col,sec:deltacol,sec:3col}) we will show that, for various problems of interest, procedure $\fpp$ actually provides a non-trivial fixed point relaxation. Hence, while this procedure may not be universal, it is broad enough to be applicable to a variety of interesting problems.

Procedure $\fpp$ takes as input a problem $\Pi$ and a diagram $D$, and the choice of the diagram may affect the triviality of the resulting fixed point. In \Cref{sec:diagram} we will provide a default choice for $D$ (as a function of $\Pi$), and in the case where $\fpp$ fails for the default choice (i.e., it produces a trivial fixed point), we show ways for tweaking it that allow, in some cases, to obtain non-trivial fixed points.  Procedure $\fpp$ is very similar to procedure $\newre$; in fact it only differs in how unions and intersections are computed.

\paragraph{Procedure input.}
The procedure takes as input a problem $\Pi = (\Sigma_\Pi, \nodeconst_\Pi, \edgeconst_\Pi)$, and a \emph{target diagram} $D = (\Sigma_D,E_D)$, which is a directed acyclic graph satisfying the following:
\begin{itemize}
	\item $\Sigma_\Pi \subseteq \Sigma_D$, that is, the label set of $D$ is a superset of the label set of $\Pi$.
	\item If we consider $D$ as a partially ordered set, every pair of elements must have a unique infimum and supremum. More formally, for $\ell \in \Sigma_D$, let $\mathrm{Pred}(\ell)$ (resp.\ $\mathrm{Succ}(\ell)$) be the set of labels in $\Sigma_D$ that can reach (resp.\ are reachable by) $\ell$ according to the edges $E_D$, including $\ell$.
	For $\ell_1,\ell_2 \in \Sigma_D$, let $\mathrm{Pred}(\ell_1,\ell_2) = \mathrm{Pred}(\ell_1) \cap \mathrm{Pred}(\ell_2)$ (resp.\ $\mathrm{Succ}(\ell_1,\ell_2) = \mathrm{Succ}(\ell_1) \cap \mathrm{Succ}(\ell_2)$) be the set of common predecessors (resp.\ successors) of $\ell_1$ and $\ell_2$.
	For $\ell_1,\ell_2 \in \Sigma_D$, let $\mathrm{MaxPred}(\ell_1,\ell_2)$ be the set of elements $\ell \in \mathrm{Pred}(\ell_1,\ell_2)$ satisfying that $\mathrm{Succ}(\ell) \cap \mathrm{Pred}(\ell_1,\ell_2) = \set{\ell}$. Similarly, let $\mathrm{MinSucc}(\ell_1,\ell_2)$ be the set of elements $\ell \in \mathrm{Succ}(\ell_1,\ell_2)$ satisfying that $\mathrm{Pred}(\ell) \cap \mathrm{Succ}(\ell_1,\ell_2) = \set{\ell}$. We require that, for all $\ell_1,\ell_2 \in \Sigma_D$, $|\mathrm{MaxPred}(\ell_1,\ell_2)| = |\mathrm{MinSucc}(\ell_1,\ell_2)| = 1$, and we call $\diaginf{\ell_1}{\ell_2}$ the element in $\mathrm{MaxPred}(\ell_1,\ell_2)$, and $\diagsup{\ell_1}{\ell_2}$ the element in $\mathrm{MinSucc}(\ell_1,\ell_2)$. 
\end{itemize}
An example of a valid target diagram is shown in \Cref{fig:example-valid-diagram}.

\begin{figure}
	\centering
	\includegraphics[width=0.5\textwidth]{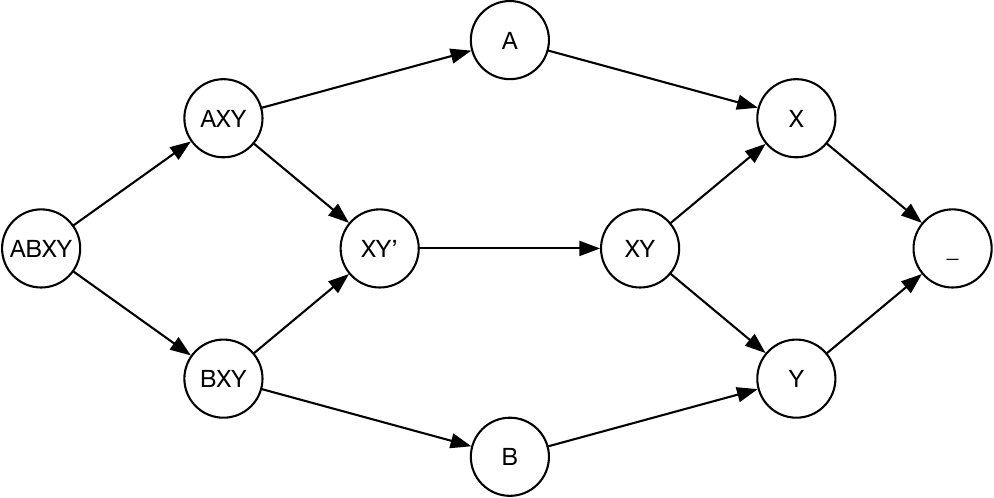}
	\caption{An example of a valid target diagram, for some problem $\Pi$ with labels $\set{A,B,X,Y}$. For example, for labels $A$ and $XY$ we have that $\diaginf{A}{XY} = AXY$ and $\diagsup{A}{XY} = X$.}
	\label{fig:example-valid-diagram}
\end{figure}

In the rest of the section we will define procedure $\fpp$ and we will prove that it satisfies the following theorem.
\begin{theorem}\label{thm:itisafixedpointrelaxation}
 The problem $\Pi' := \fpp(\Pi,D)$ is a fixed point relaxation of $\Pi$.
\end{theorem}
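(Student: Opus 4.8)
The statement asks us to verify the two defining conditions of a fixed point relaxation: $\Pi'$ can be solved in $0$ rounds given a solution for $\Pi$ (it is a \emph{relaxation} of $\Pi$), and $\rere(\re(\Pi')) = \Pi'$ (it is a \emph{fixed point}); note that non-triviality is deliberately not part of this statement, since it depends on the choice of $D$. The plan is to settle the relaxation condition by exhibiting the obvious label map, and to obtain the fixed-point condition by reducing the two universal-quantifier computations inside $\re(\cdot)$ and $\rere(\cdot)$ to the combination procedure --- which is correct by \Cref{thm:newreiscorrect} --- and then using that $\fpp$ \emph{is} precisely that procedure run to closure, only with the lattice operations $\diaginf{\cdot}{\cdot}$ and $\diagsup{\cdot}{\cdot}$ of $D$ in place of $\cap$ and $\cup$.

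\emph{Relaxation.} Each label of $\Pi'$ comes, via $D$, with an associated set of labels of $\Pi$, and this set contains the label itself whenever the label already lies in $\Sigma_\Pi \subseteq \Sigma_D$; the relaxation is then witnessed by the map $\rho$ that sends a label $\ell \in \Sigma_\Pi$ to (the set of $\Pi'$-labels dominating) $\ell$. Given a correct solution for $\Pi$, relabel every node--hyperedge pair through $\rho$; it remains to check that each node and each edge configuration of $\Pi$ becomes valid for $\Pi'$. For the constraint of $\Pi'$ produced by the combination procedure, which is initialized with the configurations of $\Pi$ (read through $D$) and afterwards only adds configurations and discards dominated ones, every configuration of $\Pi$ is dominated by a produced configuration, and a configuration dominated by a valid one is itself valid. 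For the other constraint of $\Pi'$, produced via the existential quantifier, the original $\Sigma_\Pi$-labels of the configuration furnish an admissible witness, since each $\ell_i$ lies in the set associated with $\rho(\ell_i)$. As every label used occurs in the opposite constraint, the relabeled assignment is a well-formed solution for $\Pi'$, so $\Pi'$ is $0$-round reducible from $\Pi$.

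\emph{Fixed point.} Write $\Pi'' := \re(\Pi')$; then $\rere(\re(\Pi')) = \Pi'$ unfolds into four equalities --- $\edgeconst_{\Pi''}$ is the universal quantifier of $\edgeconst_{\Pi'}$ and $\nodeconst_{\Pi''}$ is the existential quantifier of $\nodeconst_{\Pi'}$ (these are definitional), while $\nodeconst_{\Pi'}$ is the universal quantifier of $\nodeconst_{\Pi''}$ and $\edgeconst_{\Pi'}$ is the existential quantifier of $\edgeconst_{\Pi''}$ (these are the genuine fixed-point conditions). By \Cref{thm:newreiscorrect}, each universal-quantifier step coincides with running the combination procedure, and $\fpp$ performs exactly this procedure but with $\cap,\cup$ replaced by $\diaginf{\cdot}{\cdot},\diagsup{\cdot}{\cdot}$; so the decisive ingredient is a translation lemma: under the correspondence between a $\Sigma_D$-label and its associated subset of $\Sigma_\Pi$, combining two $\Sigma_D$-labels via the diagram operations corresponds to combining their associated subsets via $\cap$ and $\cup$. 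This is precisely where the lattice axioms on $D$ (existence and uniqueness of $\mathrm{MaxPred}$ and $\mathrm{MinSucc}$ for every pair) enter, and it is what keeps the computation inside $\Sigma_D$ rather than letting it escape into $2^{\Sigma_\Pi}$. Granting the translation lemma, the constraints output by $\fpp$ are already closed under the combination procedure, hence invariant under the genuine universal quantifier; and the successive power-set alphabets introduced by $\re(\cdot)$ and $\rere(\cdot)$ collapse, after discarding non-occurring labels and identifying labels with equal associated sets, back to the alphabet of $\Pi'$. One then chases the two existential steps ($\nodeconst_{\Pi''}$ from $\nodeconst_{\Pi'}$, and $\edgeconst_{\Pi'}$ from $\edgeconst_{\Pi''}$) and checks that the universal step of $\rere(\cdot)$ applied to $\nodeconst_{\Pi''}$ reproduces $\nodeconst_{\Pi'}$ by the same closure argument on the node side; soundness of combinations (\Cref{lem:sound}) excludes spurious configurations and completeness (\Cref{lem:complete} and \Cref{thm:justtwo}) ensures none is missing.

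I expect the fixed-point direction to be the main obstacle, and within it two points demand care: first, the label-set bookkeeping --- making the identification of the double-power-set alphabet produced by $\rere(\re(\cdot))$ with the alphabet of $\Pi'$ precise, and checking it is respected by the universal steps (handled by the translation lemma and the lattice property of $D$) as well as by the existential steps (which are not part of $\newre$ and must be argued directly); second, verifying that applying the existential quantifier to an already-closed universal-type constraint and then the universal quantifier returns the original constraint. Once the translation lemma is in place, each of the four equalities is a routine unfolding of definitions; the effort lies in setting up the correspondence so that all four speak about the same objects.
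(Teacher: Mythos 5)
Your high-level decomposition (a relaxation lemma plus a fixed-point lemma, with the fixed-point part reduced to a translation between $\cap/\cup$ and $\diaginf{\cdot}{\cdot}/\diagsup{\cdot}{\cdot}$) is the right one and matches the paper's structure, but two of your key steps don't hold as stated.

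First, in the relaxation argument you assert that ``a configuration dominated by a valid one is itself valid'' and then conclude that the original labelling of $\Pi$ already satisfies the constraints of $\Pi'$. This is false for the node side: $\nodeconst = \fpproc{\nodeconst_\Pi}{D}$ keeps only \emph{maximal} configurations, so a configuration strictly dominated by one in $\nodeconst$ has been discarded and is not allowed. The correct argument (and what the paper does in the proof of \Cref{lem:relaxation}) is the opposite direction: each node must actively change its output, replacing every label $\ell$ of its original configuration $\fC \in \nodeconst_\Pi$ by the corresponding successor in $D$ within the dominating configuration $\fC' \in \nodeconst$; one then shows that the edge constraint $\gen{\edgeconst}$ is still satisfied after these replacements, which uses precisely the fact that $\gen{\edgeconst}$ \emph{is} right-closed under taking successors in $D$ by the definition of $\gen{\cdot}$. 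Your domination-implies-validity claim is fine for the condensed edge constraint, but it is the wrong direction for the node constraint, and it is the node side where the work happens.

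Second, your ``translation lemma'' is too strong. Taking $\gen{\cdot}$ as in the paper (w.r.t.\ $D'$, so $\gen{\ell}$ is the set of predecessors of $\ell$ in $D$, a subset of $\Sigma_D$, not of $\Sigma_\Pi$), one does get the exact identity $\gen{\ell} \cap \gen{\ell'} = \gen{\diaginf{\ell}{\ell'}}$, but for the supremum only the inclusion $\gen{\ell} \cup \gen{\ell'} \subseteq \gen{\diagsup{\ell}{\ell'}}$ holds in general; equality can fail whenever some label lies below $\diagsup{\ell}{\ell'}$ without lying below $\ell$ or $\ell'$. The paper's proof of \Cref{lem:fp} is built around exactly this asymmetry: it shows that the $\fpp$-combination, after translating through $\gen{\cdot}$, \emph{RE-dominates} the $\newre$-combination of the translated inputs, and then invokes the domination machinery (\Cref{lem:biggood}, \Cref{lem:biggermissing}) to conclude that the set of maximal configurations is unchanged. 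If you try to prove your lemma as an equality you will get stuck on the union side; you need to weaken it to a domination statement for the closure argument to go through. You also need to be careful that the ``associated set'' lives in $2^{\Sigma_D}$, not $2^{\Sigma_\Pi}$, since that is the alphabet $\re$ and $\rere$ act on.
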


We start by extending the notions of maximality, domination, and combinations, defined in \Cref{sec:newre}, to maximality, domination, and combinations w.r.t.\ a given diagram $D$.

\paragraph{Domination, maximality, and combinations.}
A configuration $\fL = \{\L_1, \ldots, \L_\delta\}$ dominates a configuration $\fL' = \{\L'_1, \ldots, \L'_\delta\}$ w.r.t.\ diagram $D$ if and only if there exists a permutation $\phi$ such that, for all $i$, $\L_i \in \mathrm{Succ}(\L'_{\phi(i)})$, or, equivalently, $\diagsup{\L_i}{\L'_{\phi(i)}} = \L_i$ (in $D$). For example, according to the diagram of \Cref{fig:example-valid-diagram}, the configuration $A \s X \s X$ strictly dominates the configuration $A \s A \s X$.
A configuration is maximal w.r.t.\ $D$ if no other configuration strictly dominates it w.r.t.\ $D$. 

The combination of two configurations is defined similarly as in \Cref{sec:newre}, with the only difference that instead of performing a union for one matched pair and intersections for the remaining ones, we take the supremum for one matched pair and infima for the remaining ones. More formally, let $\fL = \{\L_1, \ldots, \L_\delta\}$ and $\fL' = \{\L'_1 \ldots \L'_\delta\}$ be two configurations. Let $\phi$ be a permutation of $\set{1,\ldots,\delta}$, and let $u \in \set{1,\ldots,\delta}$. Combining $\fL$ and $\fL'$ w.r.t.\ $\phi$ and $u$ means constructing the configuration $\fC = \{\C_1, \ldots, \C_\delta\}$ where $\C_i = \diagsup{\L_i}{\L'_{\phi(i)}}$ if $i = u$ and $\C_i = \diaginf{\L_i}{\L'_{\phi(i)}}$ otherwise. In other words, we consider an arbitrary perfect matching between the sets of the two configurations, and we take the supremum of one matched pair and the infima of the remaining matched pairs.

In order to define procedure $\fpp$ we will make use of a subprocedure, which we will call $\fpprocname$, that we describe in the following.

\paragraph{The subprocedure $\fpprocname$.}
Procedure $\fpprocname$ takes as input two parameters, a constraint $S$ and a target diagram $D$. We now define $S' = \fpproc{S}{D}$. Informally, this procedure is very similar to $\newre$; the only difference is that instead of computing unions and intersections, we take suprema and infima in the diagram. Examples of applications of this procedure can be found in \Cref{sec:deltacol,sec:2col,sec:3col}.

More formally, first, we initialize $S'$ by taking all configurations present in $S$ that are maximal w.r.t.\ $D$.
Then, we compute all possible combinations (w.r.t.\ $D$) of pairs of configurations from $S'$ (including a configuration with itself), for all possible permutations $\phi$ and for all possible choices of $u$. We add the newly computed configurations to $S'$, and then we remove the ones that are non-maximal (w.r.t.\ $D$). We repeat this operation until the set $S'$ does not change anymore.

\paragraph{The procedure $\fpp$.}
Assume that we are given a problem $\Pi = (\Sigma_{\Pi}, \nodeconst_{\Pi}, \edgeconst_{\Pi})$ and a diagram $D = (\Sigma_D,E_D)$. Let $\nodeconst = \fpproc{\nodeconst_{\Pi}}{D}$, and let $\edgeconst = \fpproc{\edgeconst_{\Pi}}{D'}$, where $D'$ is obtained from $D$ by reversing its edges, that is, $D' =  (\Sigma_D,E'_D)$, and $E'_D = \set{(u,v) ~|~ (v,u) \in E_D}$.
Let $\Pi' = (\Sigma_D, \nodeconst, \gen{\edgeconst})$, where $\gen{\edgeconst}$ is defined to be the set of configurations obtained by replacing each configuration $\{\ell_1, \ldots, \ell_\delta\} \in \edgeconst$ with the condensed configuration $\{\gen{\ell_1}, \ldots, \gen{\ell_\delta}\}$ according to the diagram $D$, and $\gen{\ell}$ w.r.t.\ $D$ is the set $\mathrm{Succ}(L)$ w.r.t.\ $D$.
Then, procedure $\fpp(\Pi,D)$ returns $\Pi'$.

\subsection{Proof of \Cref{thm:itisafixedpointrelaxation}}
Let $\Pi'' = (\Sigma_D, \gen{\nodeconst}, \edgeconst)$, where $\gen{\nodeconst}$ is taken according to $D'$.
In order to prove \Cref{thm:itisafixedpointrelaxation}, we prove that $\Pi'$ can be solved in $0$ rounds given a solution for $\Pi$ (\Cref{lem:relaxation}), and that $\Pi'$ is a fixed point with intermediate problem $\Pi''$, that is, $\Pi'' = \re(\Pi')$ and $\Pi' = \rere(\Pi'')$ (\Cref{lem:fp}).
\begin{lemma}\label{lem:relaxation}
	Given a solution for $\Pi$, it is possible to solve $\Pi'$ in $0$ rounds.
\end{lemma}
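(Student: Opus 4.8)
The plan is to exhibit an explicit $0$-round algorithm for $\Pi'$ given a solution $f\colon \fF \to \Sigma_\Pi$ for $\Pi$. Recall that $\Sigma_{\Pi'} = \Sigma_D \supseteq \Sigma_\Pi$, so the labels of $\Pi$ are themselves valid labels of $\Pi'$; the natural idea is simply to have each node-hyperedge pair $(v,e)$ keep its label $f(v,e)$ from the solution for $\Pi$. Since $0$ rounds of communication suffice for each node to see the labels on all its incident pairs, the only thing to verify is that this relabeling is a correct solution for $\Pi'$, i.e. that every node configuration lies in $\nodeconst_{\Pi'} = \fpproc{\nodeconst_\Pi}{D}$ and every edge configuration lies in $\gen{\edgeconst}$ where $\edgeconst = \fpproc{\edgeconst_\Pi}{D'}$.

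First I would handle the node side. Take any node $v$ with incident labels forming a configuration $\{\ell_1, \dots, \ell_\Delta\} \in \nodeconst_\Pi$ (valid because $f$ solves $\Pi$). I want to show this configuration — viewed now as a configuration of $\Pi'$ — is dominated (w.r.t.\ $D$) by some configuration in $\nodeconst_{\Pi'}$. The key observation is that $\fpprocname$ initializes its output with the $D$-maximal configurations of $\nodeconst_\Pi$ and only ever adds configurations that dominate existing ones (combinations take suprema/infima, and then non-maximal configurations are discarded but only in favor of dominating ones). Hence every configuration of $\nodeconst_\Pi$ is dominated w.r.t.\ $D$ by some configuration surviving in $\nodeconst := \fpproc{\nodeconst_\Pi}{D}$. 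It remains to argue that if a configuration $C$ appears in the node constraint of $\Pi'$ and $\{\ell_1,\dots,\ell_\Delta\}$ is dominated by $C$ w.r.t.\ $D$, then $\{\ell_1,\dots,\ell_\Delta\}$ is \emph{also} an acceptable node configuration of $\Pi'$ — i.e.\ that $\nodeconst_{\Pi'}$ is closed downward under $D$-domination, which holds because in the black-white formalism a node constraint given as a set of (condensed or maximal) configurations is, by convention and by the definition of $\gen{\cdot}$/domination, understood to accept every configuration dominated by a listed one. I would state this closure explicitly as the bridge.

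Next, the edge side, which is entirely analogous but with $D$ replaced by $D'$ (reversed edges), so that "domination w.r.t.\ $D'$" corresponds to taking \emph{smaller} labels in $D$: for any hyperedge $e$ with incident labels $\{\ell_1,\dots,\ell_\delta\} \in \edgeconst_\Pi$, the same argument gives a configuration in $\edgeconst = \fpproc{\edgeconst_\Pi}{D'}$ that $D'$-dominates it; and $\gen{\edgeconst}$, by construction, replaces each $\ell$ in such a configuration by $\mathrm{Succ}(\ell)$ w.r.t.\ $D$, i.e.\ exactly the set of labels $D'$-dominated by $\ell$ — so $\{\ell_1,\dots,\ell_\delta\}$ is a concrete configuration represented by the condensed configuration in $\gen{\edgeconst}$ obtained from that $D'$-dominating configuration. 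Thus the edge configuration of $\Pi'$ at $e$ is valid. I expect the main obstacle to be bookkeeping the two opposite orientations correctly: domination for the node constraint is w.r.t.\ $D$ (moving to successors), while for the edge constraint the relevant order is $D'$ (moving to predecessors in $D$), and one has to check that $\gen{\cdot}$ is applied with the orientation that makes the original labels of $\Pi$ land inside the condensed edge configurations. Once the closure-under-domination statements are pinned down for both constraints, the lemma follows immediately: keeping the $\Pi$-solution verbatim is a $0$-round solution for $\Pi'$.
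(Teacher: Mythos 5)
There is a genuine gap on the node side. Your plan — keep each label $f(v,e)$ unchanged and argue the resulting node configuration is accepted — rests on the claim that $\nodeconst_{\Pi'} = \fpproc{\nodeconst_\Pi}{D}$ is closed downward under $D$-domination ``by convention.'' No such convention exists in the black-white formalism, and more importantly it is false here by design: $\fpprocname$ \emph{discards non-maximal configurations at every step}, so a configuration that is strictly $D$-dominated by something in $\nodeconst$ is precisely the kind of configuration $\fpprocname$ deletes. The surviving set $\nodeconst$ is output as an ordinary (non-condensed) set of multisets, and a valid output for $\Pi'$ must be exactly one of them. Since the original configuration $\{\ell_1,\dots,\ell_\Delta\}\in\nodeconst_\Pi$ may well have been pruned (it is guaranteed only to be dominated by something in $\nodeconst$, not to be in $\nodeconst$ itself), the identity relabeling is not, in general, a solution to $\Pi'$. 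Note the asymmetry you half-sensed: the \emph{edge} constraint of $\Pi'$ is deliberately given as $\gen{\edgeconst}$, a set of condensed configurations built from $\mathrm{Succ}(\cdot)$, and that side really is closed in the relevant sense; but no $\gen{\cdot}$ is applied on the node side, so the analogous closure does not hold there.

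The fix, which is what the paper does, is a genuinely different $0$-round map: each node $v$ whose $\Pi$-configuration $\fC$ is not itself in $\nodeconst$ must actively \emph{change} its output to a $D$-dominating configuration $\fC'\in\nodeconst$, replacing each label $\ell$ of $\fC$ with a label of $\fC'$ that is a $D$-successor of $\ell$ (the existence of such a replacement is exactly what $D$-domination provides). This is still $0$ rounds since $v$ knows its own $\fC$. One then re-verifies the edge constraint for the \emph{new} labels: an edge that previously carried $\{\ell_1,\dots,\ell_\delta\}$ (which, by your correct first half of the argument, is represented in $\gen{\edgeconst}$) now carries $\{\ell'_1,\dots,\ell'_\delta\}$ with each $\ell'_i\in\mathrm{Succ}(\ell_i)$ in $D$, and this stays inside the same condensed configuration of $\gen{\edgeconst}$ precisely because $\gen{\cdot}$ is defined via $\mathrm{Succ}$ w.r.t.\ $D$. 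So the two sides must be handled in tandem: the node relabeling is forced, and the $\gen{\cdot}$-closure on edges is what makes that relabeling harmless.
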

\begin{proof}
	Let $\fC = \{\ell_1,\ldots,\ell_\delta\}$ be an arbitrary configuration in $\edgeconst_{\Pi}$. We start by proving that there exists a condensed configuration $\fL = \{\L_1,\ldots,\L_\delta\} \in \gen{\edgeconst}$ satisfying  $\{\ell_1,\ldots,\ell_\delta\} \in \L_1 \times \ldots \times \L_\delta$. In other words, we prove that $\gen{\edgeconst}$ allows at least the configurations allowed by $\edgeconst_{\Pi}$. Let us keep track of the specific configuration $\fC$ while running the procedure constructing $\edgeconst$: either it is removed during the initialization of $\edgeconst$ (because it is non-maximal), or it is later replaced with some newly constructed configuration that dominates it (w.r.t.\ $D'$), or it stays in $\edgeconst$ until the end. In all cases, after the procedure ends, $\edgeconst$ must contain a configuration $\fC' = \{\ell'_1,\ldots,\ell'_\delta\}$ that dominates $\fC$ (w.r.t.\ $D'$). By the definition of $\gen{\cdot}$ it must hold that some permutation of $\{\ell_1,\ldots,\ell_\delta\}$ is in $\gen{\ell'_1} \times \ldots \times \gen{\ell'_\delta}$.
	
	Now, consider $\nodeconst$. By construction, for each configuration in $\nodeconst_{\Pi}$ there is at least one configuration in $\nodeconst$ dominating it (w.r.t.\ $D$). Assume we are given a solution for $\Pi$, and consider some node $v$ that is outputting some configuration $\fC \in \nodeconst_{\Pi}$ in this solution. Either $\fC$ is contained in $\nodeconst$ as well, or it has been replaced by a configuration $\fC'$ dominating it (w.r.t.\ $D$). In the first case, node $v$ does nothing, while in the second case node $v$ changes its output to $\fC'$, in a way that replaces each label $\ell$ of $\fC$ with a label of $\fC'$ that is a successor of $\ell$ in $D$. We now show that this results in a valid output for $\Pi'$.
	If $\fC$ is dominated by $\fC'$, then $\fC'$ can be obtained from $\fC$ by replacing each label by one of its successors according to $D$. Consider an edge (or hyperedge) incident to some node that was outputting $\fC$ and now is outputting $\fC'$, and assume that the (hyper)edge had the configuration $\{\ell_1, \ldots, \ell_\delta\}$, which, by the above argument, is in $\gen{\edgeconst}$. The new configuration of the (hyper)edge must be $\{\ell'_1, \ldots, \ell'_\delta\}$, where $\ell'_i$  is either $\ell_i$ or one of its successors. Observe that $\{\ell'_1, \ldots, \ell'_\delta\}$ is in $\gen{\edgeconst}$ by the definition of $\gen{\cdot}$.
\end{proof}
\begin{lemma}\label{lem:fp}
	$\re(\Pi') = \Pi''$ and $\rere(\Pi'') = \Pi'$.
\end{lemma}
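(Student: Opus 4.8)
The plan is to show the two identities $\re(\Pi') = \Pi''$ and $\rere(\Pi'') = \Pi'$ by carefully comparing the constraints produced by the round elimination operators with those produced by the $\fpp$ procedure, exploiting the close analogy between $\newre$ and $\fpprocname$. Recall that $\Pi' = (\Sigma_D, \nodeconst, \gen{\edgeconst})$ with $\nodeconst = \fpproc{\nodeconst_\Pi}{D}$ and $\edgeconst = \fpproc{\edgeconst_\Pi}{D'}$, and $\Pi'' = (\Sigma_D, \gen{\nodeconst}, \edgeconst)$. First I would handle the label set: since $\Sigma_{\re(\Pi')} = 2^{\Sigma_{\Pi'}}\setminus\{\emptyset\}$ is a priori a power set, while $\Sigma_{\Pi''} = \Sigma_D$, I need to argue that after pruning labels that appear in no configuration, the two label sets are in bijection. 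The natural bijection sends $\ell \in \Sigma_D$ to $\gen{\ell} = \mathrm{Succ}(\ell)$ (a right-closed subset of $\Sigma_D$ w.r.t.\ $D$), and conversely sends a surviving subset $L \subseteq \Sigma_D$ to $\diaginf{}{}$ of its elements, i.e.\ the unique maximal predecessor. The key structural fact I would establish up front is that the subsets of $\Sigma_D$ that actually survive in $\re(\Pi')$ are exactly the sets of the form $\gen{\ell}$; this is where the lattice property of $D$ (unique infima and suprema) does the work, since $\gen{\cdot}$ translates infima/suprema in $D$ into intersections/unions of right-closed sets, exactly matching how the universal quantifier builds new labels.

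Next I would prove $\edgeconst_{\re(\Pi')} = \gen{\edgeconst}$ (identifying labels via the above bijection). By definition, $\re(\Pi')$ applies the universal quantifier to $\edgeconst_{\Pi'} = \gen{\edgeconst}$, keeping the maximal configurations $\L_1 \dots \L_\delta$ of subsets of $\Sigma_{\Pi'}$ such that every pick from the $\L_i$'s lies in $\gen{\edgeconst}$. Here I would invoke \Cref{thm:newreiscorrect}: the output of the universal quantifier equals the output of $\newre$ run on $\gen{\edgeconst}$. The point is that $\newre$ run on $\gen{\edgeconst}$ — where each label is itself a right-closed set $\gen{\ell}$ and unions/intersections of right-closed sets correspond to suprema/infima in $D$ — produces exactly the same sequence of configurations as $\fpprocname$ run on $\edgeconst_\Pi$ with diagram $D'$ (recalling that $\edgeconst$ was built with $D'$, whose suprema are $D$'s infima, and vice versa; the translation from node-side to edge-side flips the quantifier, and the $\gen{\cdot}$ wrapping converts the $\fpprocname$ output on $\edgeconst_\Pi$ into the $\newre$ output on $\gen{\edgeconst}$). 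So $\edgeconst_{\re(\Pi')}$, read back through the bijection, is $\gen{\edgeconst}$, i.e.\ it equals $\edgeconst_{\Pi''}$. Then the node constraint $\nodeconst_{\re(\Pi')}$ is, by its definition, the ``exists'' closure of $\nodeconst_{\Pi'} = \nodeconst$ over configurations whose labels appear in $\edgeconst_{\re(\Pi')}$; I would check that replacing each label $\ell$ of a configuration in $\nodeconst$ by the set of all $\Sigma_{\re(\Pi')}$-labels containing it — which under the bijection means replacing $\ell$ by the set of its $D$-predecessors, i.e.\ closing downward — together with discarding non-survivors, recovers exactly $\gen{\nodeconst}$ (note $\gen{\cdot}$ on the node side is taken w.r.t.\ $D'$, i.e.\ upward-closure in $D'$ = downward-closure in $D$). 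This matches the easy direction of the universal quantifier sketched in \Cref{ssec:re}, so $\nodeconst_{\re(\Pi')} = \gen{\nodeconst} = \nodeconst_{\Pi''}$. That establishes $\re(\Pi') = \Pi''$.

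Finally, the identity $\rere(\Pi'') = \Pi'$ is fully dual: $\rere$ applies the universal quantifier on the node side of $\Pi'' = (\Sigma_D, \gen{\nodeconst}, \edgeconst)$, and by the same argument — now running $\newre$ on $\gen{\nodeconst}$, which mirrors $\fpproc{\nodeconst_\Pi}{D}$ — one gets $\nodeconst_{\rere(\Pi'')} = \gen{\gen{\nodeconst}}$'s maximal configurations $= \nodeconst$ (the outer $\gen{\cdot}$-then-universal-quantifier collapses because $\nodeconst$ was already the fixed output of $\fpprocname$, which by construction is closed under combination w.r.t.\ $D$ and pruned to maximal configurations), and then the edge constraint is the ``exists'' closure giving back $\gen{\edgeconst}$. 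The main obstacle I anticipate is the bookkeeping around the three different ``wrappings'' ($\gen{\cdot}$ w.r.t.\ $D$ on edges, $\gen{\cdot}$ w.r.t.\ $D'$ on nodes, and the raw power-set labels produced by $\re/\rere$) and making rigorous the claim that the surviving power-set labels are exactly the right-closed sets $\gen{\ell}$ — this is the step that genuinely uses the requirement that $D$ be a lattice (every pair has a unique infimum and supremum), without which the intersections and unions generated by the universal quantifier would escape the family $\{\gen{\ell} : \ell \in \Sigma_D\}$ and the fixed-point identity would fail. Once that closure property of $\{\gen{\ell}\}$ under $\cap, \cup$ is nailed down, everything else is a matter of matching $\newre$ to $\fpprocname$ via \Cref{thm:newreiscorrect} and tracking the quantifier flip between the two halves of the round elimination step.
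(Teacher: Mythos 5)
Your plan follows essentially the same strategy as the paper's proof: identify the surviving labels of the RE-image with $\gen{\ell}$ for $\ell \in \Sigma_D$, use the lattice structure of $D$ to translate the $\newre$-combinations into $\fpprocname$-combinations, and then read the rest off the ``exists'' side of the quantifier. So the outline is sound.

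There is, however, one imprecision worth flagging because it hides the single genuinely delicate point in the argument. You write that ``unions/intersections of right-closed sets correspond to suprema/infima in $D$'' and that $\newre$ on $\gen{\cdot}$ of the constraint therefore ``produces exactly the same sequence of configurations'' as $\fpprocname$. Only half of this is an exact correspondence: for the intersection one indeed has $\gen{\ell_1} \cap \gen{\ell_2} = \gen{\diaginf{\ell_1}{\ell_2}}$ (w.r.t.\ $D'$, using that the unique maximal element of a finite poset dominates all elements), but for the union one only gets the inclusion $\gen{\ell_1} \cup \gen{\ell_2} \subseteq \gen{\diagsup{\ell_1}{\ell_2}}$. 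In general this inclusion is strict, so the $\newre$-combination $\inre{\fC}$ is \emph{not} equal to the $\gen{\cdot}$-wrapping of the $\fpprocname$-combination $\infp{\fC}$ — it is only RE-dominated by it. What rescues the argument is precisely the maximality trimming built into both $\newre$ and $\fpprocname$: since $\{\gen{\ell''_1},\ldots,\gen{\ell''_\Delta}\}$ (i.e.\ the wrapped $\infp{\fC}$) is, up to $D$-domination, already present in the closed constraint $\nodeconst$, the strictly smaller $\inre{\fC}$ is discarded as non-maximal, and nothing new is generated. Your sketch would need to replace ``same sequence'' with ``same set of maximal configurations,'' and make this domination step explicit; as written, the ``exactly matching'' phrase papers over the place where the requirement $|\mathrm{MaxPred}| = |\mathrm{MinSucc}| = 1$ is actually doing load-bearing work. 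The rest — proving the generated configurations of $\gen{\nodeconst}$ (resp.\ $\gen{\edgeconst}$) are all RE-maximal, and checking the ``exists''-side replacement $\ell \mapsto S(\ell)$ gives back $\gen{\ell}$ — is routine bookkeeping, and you correctly identified where the two $\gen{\cdot}$-conventions ($D$ for edges, $D'$ for nodes) flip.
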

\begin{proof}
	\newcommand{\inre}[1]{#1_{\mathrm{re}}}
	\newcommand{\infp}[1]{#1_{\mathrm{fp}}}
	Recall that we consider two problems to be equal if one can be obtained from the other by renaming labels.
	In the following, we will use the terms \emph{RE-maximal}, \emph{RE-dominated}, and \emph{RE-combine} to refer to the original maximality, domination, and combination definitions of \Cref{sec:newre}.
	
	We prove that $\rere(\Pi'') = \Pi'$; the other case is symmetric. Recall that $\Pi'' = (\Sigma_D, \gen{\nodeconst}, \edgeconst)$ and $\Pi' = (\Sigma_D, \nodeconst, \gen{\edgeconst})$, and that $\newre$, applied on $\gen{\nodeconst}$, works as follows. The constraint $\gen{\nodeconst}$ is already provided as a set of condensed configurations of the form $\{\gen{\L_1}, \ldots, \gen{\L_\Delta}\}$, and hence the constraint $\nodeconst_{\rere(\Pi'')}$ is initialized by putting in it all configurations of $\gen{\nodeconst}$, and then discarding non-RE-maximal ones. Let the obtained set be $\nodeconst'$.
	Then, $\newre$ repeatedly RE-combines two configurations and adds the result to $\nodeconst_{\rere(\Pi'')}$ (discarding non-RE-maximal configurations). 
	
	We start by showing that, under the renaming $r := \gen{\ell} \rightarrow \ell$, we get that $\nodeconst' = \nodeconst$. Observe that $\nodeconst'$ contains exactly the RE-maximal configurations of the form $\{\gen{\ell_1}, \ldots, \gen{\ell_\Delta}\}$, where $\{\ell_1,\ldots,\ell_\Delta\} \in \nodeconst$, and $\gen{\cdot}$ is taken according to $D'$ (as it is also in the following). Note also that, by construction, all configurations of $\nodeconst$ are maximal w.r.t.\ $D$. We show that all the condensed configurations of $\gen{\nodeconst}$ (considered as configurations of label sets) are RE-maximal. Assume for a contradiction that there is a configuration $\fL = \{\gen{\ell_1}, \ldots, \gen{\ell_\Delta}\}$ in $\gen{\nodeconst}$ that is strictly RE-dominated by another configuration $\fL' = \{\gen{\ell'_1}, \ldots, \gen{\ell'_\Delta}\}$ in $\gen{\nodeconst}$. This implies that there exists a permutation $\phi$ such that, for all $i$, $\gen{\ell_i} \subseteq \gen{\ell'_{\phi(i)}}$, and for at least one value of $i$ the inclusion is strict. By the definition of $\gen{\cdot}$, this implies that, for all $i$, $\ell_i$ is a successor of $\ell'_{\phi(i)}$ in $D'$, and, for at least one value of $i$, it is a successor different from $\ell'_{\phi(i)}$ itself. This implies that the configuration $\fL$ is non-maximal w.r.t.\ $D$, a contradiction. Hence, we obtain that under the renaming $r$, $\nodeconst' = \nodeconst$.
		
	We now prove that if we take two arbitrary configurations from $\nodeconst'$, and we RE-combine them in an arbitrary way, we either obtain a configuration already present in $\nodeconst'$, or a configuration RE-dominated by a configuration in $\nodeconst'$, implying that, after $\newre$ terminates, $\nodeconst_{\rere(\Pi'')}$ is equal to $\nodeconst'$. 
	Let $\inre{\fL} = \{\gen{\ell_1}, \ldots, \gen{\ell_\Delta}\}$ and $\inre{\fL'} = \{\gen{\ell'_1}, \ldots, \gen{\ell'_\Delta}\}$ be two arbitrary configurations in $\nodeconst'$, let $\phi$ be an arbitrary permutation of $\{1,\ldots,\Delta\}$, and let $u$ be an arbitrary value in $\{1,\ldots,\Delta\}$. Let $\infp{\fL} = \{\ell_1, \ldots, \ell_\Delta\}$ and $\infp{\fL'} = \{\ell'_1, \ldots, \ell'_\Delta\}$.
	Let $\inre{\fC}$ be the RE-combination of $\inre{\fL}$ and $\inre{\fL'}$ obtained w.r.t.\  $\phi$ and $u$. Let $\infp{\fC} = \{\ell''_1, \ldots, \ell''_\Delta\}$ be the combination of $\infp{\fL}$ and $\infp{\fL'}$ obtained w.r.t.\ $\phi$ and $u$ in procedure $\fpp$ by using diagram $D$ when constructing $\nodeconst$. Observe that $\infp{\fC}$, or another configuration dominating it w.r.t.\ diagram $D$, must be present in $\nodeconst$ (by the definition of $\nodeconst$), and hence that $\fC = \{\gen{\ell''_1}, \ldots, \gen{\ell''_\Delta}\}$, or another configuration RE-dominating it, must be present in $\gen{\nodeconst}$ and hence in $\nodeconst'$.  We show that $\fC$ RE-dominates $\inre{\fC}$, and hence that $\newre$ does not generate new configurations. W.l.o.g.\ let $\phi$ be the identity function, and $u = 1$.
	By definition, $\inre{\fC} = \{\gen{\ell_1} \cup \gen{\ell'_1}, \gen{\ell_2} \cap \gen{\ell'_2}, \ldots, \gen{\ell_\Delta} \cap \gen{\ell'_\Delta}\}$ and $\infp{\fC} = \{\diagsup{\ell_1}{\ell'_{1}}, \diaginf{\ell_2}{\ell'_{2}}, \ldots, \diaginf{\ell_\Delta}{\ell'_{\Delta}}\}$. Observe that $\gen{\ell_i} \cap \gen{\ell'_i} = \gen{\diaginf{\ell_i}{\ell'_{i}}}$, and that $\gen{\ell_1} \cup \gen{\ell'_1} \subseteq \gen{\diagsup{\ell_1}{\ell'_{1}}}$, where $\diaginf{\cdot}{\cdot}$ and $\diagsup{\cdot}{\cdot}$ are taken according to $D$, and $\gen{\cdot}$ is taken according to $D'$. Hence,  $\fC$ RE-dominates $\inre{\fC}$. Thus, $\nodeconst_{\rere(\Pi'')} = \nodeconst'$, which, under the renaming $r$ is equal to $\nodeconst$.
	
	The constraint $\edgeconst_{\rere(\Pi'')}$ is obtained from $\edgeconst$ by replacing each configuration $\{\ell_1, \ldots, \ell_\delta\}$ with the condensed configuration $\{S(\ell_1), \ldots, S(\ell_\delta)\}$, where $S(\ell)$ is the set of sets appearing in $\nodeconst_{\rere(\Pi'')}$ and containing $\ell$. Observe that $S(\ell) = \set{\gen{\ell'} ~|~ \ell' \text{ is a successor of } \ell \text{ according to } D}$, where $\gen{\cdot}$ is taken according to $D'$.
	 Under the renaming $r$, observe that $S(\ell)$ contains all labels $\ell'$ that are successors of $\ell$ in $D$, and hence $S(\ell) = \gen{\ell}$, where this time $\gen{\cdot}$ is taken according to $D$. Hence, under the renaming $r$, each configuration  $\{\ell_1, \ldots, \ell_\delta\}$ of $\edgeconst$ produces the condensed configuration $\{\gen{\ell_1},\ldots,\gen{\ell_\delta}\}$, where $\gen{\cdot}$ is taken according to $D$. Therefore, $\edgeconst_{\rere(\Pi'')} = \gen{\edgeconst}$ (where $\gen{\cdot}$ is taken according to $D$), as required.	
\end{proof}

\subsection{Applying Procedure $\fpp$ Faster}
We now present some shortcuts that we can take when applying the procedure.
\begin{observation}\label{obs:twocomb}
	Given two configurations $\fL = \{\L_1, \ldots, \L_k\}$ and $\fL' = \{\L'_1, \ldots, \L'_k\}$, if $\diagsup{\L_u}{\L'_{\phi(u)}} \in \set{\L_u,\L'_{\phi(u)}}$, combining $\fL$ and $\fL'$ w.r.t.\ $\phi$ and $u$ can only produce configurations dominated by $\fL$ or by $\fL'$.
\end{observation}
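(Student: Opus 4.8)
Write $\fL = \{\L_1,\ldots,\L_k\}$ and $\fL' = \{\L'_1,\ldots,\L'_k\}$, and let $\fC = \{\C_1,\ldots,\C_k\}$ be the combination w.r.t.\ the permutation $\phi$ and the position $u$, so that $\C_u = \diagsup{\L_u}{\L'_{\phi(u)}}$ and $\C_i = \diaginf{\L_i}{\L'_{\phi(i)}}$ for $i \neq u$. The whole argument is an unfolding of the definition of domination w.r.t.\ $D$ together with two elementary lattice facts: reachability in $D$ is reflexive (every $\ell$ lies in $\mathrm{Succ}(\ell)$), and the infimum $\diaginf{a}{b}$ is a common predecessor of $a$ and $b$, i.e.\ $\diaginf{a}{b} \in \mathrm{Pred}(a) \cap \mathrm{Pred}(b)$, which is immediate from $\diaginf{a}{b} \in \mathrm{MaxPred}(a,b) \subseteq \mathrm{Pred}(a,b)$. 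I would then split into the two cases allowed by the hypothesis $\diagsup{\L_u}{\L'_{\phi(u)}} \in \set{\L_u, \L'_{\phi(u)}}$.

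\textbf{Case $\C_u = \L_u$: show $\fL$ dominates $\fC$.} Use the identity permutation as the candidate witness, i.e.\ match $\C_i$ with $\L_i$ for every $i$. For $i = u$ we have $\C_u = \L_u \in \mathrm{Succ}(\L_u)$ by reflexivity. For $i \neq u$ we have $\C_i = \diaginf{\L_i}{\L'_{\phi(i)}} \in \mathrm{Pred}(\L_i)$, hence $\L_i \in \mathrm{Succ}(\C_i)$. By the definition of domination w.r.t.\ $D$ this means $\fL$ dominates $\fC$.

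\textbf{Case $\C_u = \L'_{\phi(u)}$: show $\fL'$ dominates $\fC$.} Now match $\C_i$ with $\L'_{\phi(i)}$ for every $i$ (a bijection since $\phi$ is). For $i = u$ we have $\C_u = \L'_{\phi(u)} \in \mathrm{Succ}(\L'_{\phi(u)})$ by reflexivity. For $i \neq u$ we have $\C_i = \diaginf{\L_i}{\L'_{\phi(i)}} \in \mathrm{Pred}(\L'_{\phi(i)})$, so $\L'_{\phi(i)} \in \mathrm{Succ}(\C_i)$. Hence $\fL'$ dominates $\fC$. In both cases $\fC$ is dominated by one of $\fL, \fL'$, which is exactly the claim.

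\textbf{Expected main difficulty.} There is essentially no hard step: the only thing requiring care is bookkeeping of \emph{which} permutation witnesses the domination (the identity in the first case, the combining permutation $\phi$ in the second) and the direction of the reachability relation, so that the single ``supremum position'' $u$ and the remaining ``infimum positions'' are both handled correctly. Everything else follows directly from the reflexivity of $\mathrm{Succ}$ and from $\diaginf{\cdot}{\cdot}$ being a common predecessor, both of which are guaranteed by the target-diagram axioms assumed in \Cref{sec:procedure}.
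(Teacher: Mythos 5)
Your proof is correct and is essentially the paper's own (very terse one‑sentence) argument spelled out in full: in both cases the combined configuration is coordinate‑wise a predecessor of $\fL$ or of $\fL'$ because the $u$‑th entry collapses to one of the originals and every other entry is an infimum. The only addition you make is to name the witnessing bijection explicitly, which is good bookkeeping but not a different idea.
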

\begin{proof}
	Since for all indices $i \neq u$ we apply the $\diaginf{\cdot}{\cdot}$ operator, any such obtained configuration is dominated by $\fL$ or $\fL'$.
\end{proof}
As a special case of \Cref{obs:twocomb} where $\fL = \fL'$, we obtain the following.
\begin{corollary}\label{obs:selfcomb}
	Given a configuration $\fL = \{\L_1, \ldots, \L_k\}$, if for all pairs $1 \le i,j \le k$ it holds that $\L_i \in \mathrm{Succ}(\L_j)$ or $\L_{j} \in \mathrm{Succ}(\L_{i})$, then by combining $\fL$ with itself (in any way) we only obtain configurations dominated by $\fL$. 
\end{corollary}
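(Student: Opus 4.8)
The plan is to deduce the corollary directly from \Cref{obs:twocomb} by instantiating $\fL' := \fL$ and verifying that its hypothesis holds for \emph{every} choice of permutation $\phi$ and position $u$. So, first I would fix an arbitrary combination of $\fL$ with itself, determined by a permutation $\phi$ of $\set{1,\ldots,k}$ and an index $u \in \set{1,\ldots,k}$. Writing out \Cref{obs:twocomb} with $\fL' = \fL$ (so that $\L'_{\phi(u)} = \L_{\phi(u)}$), it suffices to check that $\diagsup{\L_u}{\L_{\phi(u)}} \in \set{\L_u,\L_{\phi(u)}}$; once this is done, \Cref{obs:twocomb} immediately gives that the configuration produced by this combination is dominated by $\fL$, which is exactly the desired conclusion. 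Since $\phi$ and $u$ range over all possibilities, this covers combining $\fL$ with itself ``in any way''.

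Next I would verify the supremum condition using the chain hypothesis. By assumption, for the two indices $u$ and $\phi(u)$ we have either $\L_u \in \mathrm{Succ}(\L_{\phi(u)})$ or $\L_{\phi(u)} \in \mathrm{Succ}(\L_u)$; consider the first case, the second being symmetric. Since $D$ is a partial order, $\L_{\phi(u)}$ being a predecessor of $\L_u$ yields $\mathrm{Succ}(\L_u) \subseteq \mathrm{Succ}(\L_{\phi(u)})$, hence $\mathrm{Succ}(\L_u,\L_{\phi(u)}) = \mathrm{Succ}(\L_u)$. The unique element of $\mathrm{MinSucc}(\L_u,\L_{\phi(u)})$ is then the $\ell \in \mathrm{Succ}(\L_u)$ with $\mathrm{Pred}(\ell) \cap \mathrm{Succ}(\L_u) = \set{\ell}$, and since $D$ is acyclic this forces $\ell = \L_u$; thus $\diagsup{\L_u}{\L_{\phi(u)}} = \L_u$. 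In the symmetric case one gets $\diagsup{\L_u}{\L_{\phi(u)}} = \L_{\phi(u)}$. Either way $\diagsup{\L_u}{\L_{\phi(u)}} \in \set{\L_u,\L_{\phi(u)}}$, which is what \Cref{obs:twocomb} needs.

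The ``hard part'' here is really only a matter of unwinding definitions: the one observation worth stating carefully is that in a poset with unique suprema, the supremum of two \emph{comparable} elements is the larger of the two. I would prove this cleanly in the $\mathrm{Pred}/\mathrm{Succ}/\mathrm{MinSucc}$ language of the diagram setup so the argument is fully rigorous, but it requires no computation beyond the chase above; the rest is a black-box appeal to \Cref{obs:twocomb}.
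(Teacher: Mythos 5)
Your proof is correct and follows the same route as the paper, which presents \Cref{obs:selfcomb} simply as "a special case of \Cref{obs:twocomb} where $\fL = \fL'$" with no further argument; you fill in the one detail the paper leaves implicit, namely that the chain hypothesis guarantees $\diagsup{\L_u}{\L_{\phi(u)}} \in \set{\L_u,\L_{\phi(u)}}$ for every choice of $\phi$ and $u$, using the (stated requirement of a valid target diagram) uniqueness of $\mathrm{MinSucc}$ together with acyclicity.
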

\begin{observation}\label{obs:betterunions}
	Given two configurations $\fL = \{\L_1, \ldots, \L_k\}$ and $\fL' = \{\L'_1, \ldots, \L'_k\}$, a permutation $\phi$ and an index $1 \le u  \le k$, assume that there exist two indices $j$ and $j'$ satisfying that $\diagsup{\L_u}{\L'_{\phi(u)}} = \diagsup{\L_{j}}{\L'_{j'}}$,  $L_u \in \mathrm{Succ}(\L_{j})$,  $\L'_{\phi(u)} \in\mathrm{Succ}(\L'_{j'})$, and either $\L_u \neq \L_j$ or $\L_{\phi(u)} \neq \L_{j'}$. Then, the combination $\fC$ of $\fL$ and $\fL'$ w.r.t.\ $\phi$ and $u$ is dominated by a combination of $\fL$ and $\fL'$ that does not satisfy this assumption.
\end{observation}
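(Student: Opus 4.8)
The plan is to prove the statement by an iterative replacement argument driven by a strictly decreasing, nonnegative, integer potential. First I would fix a rank function on the DAG $D$ — say $\mathrm{rk}(\ell) :=$ the length of a longest directed path of $D$ ending at $\ell$ — which is well defined and finite by acyclicity and which satisfies $\mathrm{rk}(\ell) < \mathrm{rk}(\ell')$ whenever $\ell \neq \ell'$ and $\ell' \in \mathrm{Succ}(\ell)$. For a combination of $\fL$ and $\fL'$ specified by a permutation $\psi$ and an index $v$ (so the supremum is taken at the matched pair $(\L_v,\L'_{\psi(v)})$ and infima elsewhere), I set $\Phi(\psi,v) := \mathrm{rk}(\L_v) + \mathrm{rk}(\L'_{\psi(v)})$. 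The heart of the proof is the claim that if $(\psi,v)$ satisfies the assumption of the observation, with witnesses $j,j'$, then there is a pair $(\psi',v')$ whose combination dominates the combination of $(\psi,v)$ w.r.t.\ $D$ and satisfies $\Phi(\psi',v') < \Phi(\psi,v)$. Given this claim, I start from the given $(\phi,u)$ and keep replacing; since $\Phi$ is a nonnegative integer it cannot strictly decrease forever, so after finitely many steps I reach a pair that does not satisfy the assumption, and its combination dominates the original $\fC$ by transitivity of $D$-domination (the analogue of \Cref{lem:domtrans}, proved identically by composing the witnessing permutations). That is exactly the statement.

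To establish the single-step claim, given $(\psi,v)$ with witnesses $j,j'$, I put $a := \psi(v)$ and $c := \psi^{-1}(j')$, and I want the new combination to take its supremum at a matched pair equal to $(\L_j,\L'_{j'})$. So I set $v' := j$ (or $v' := v$ in the degenerate case $j=v$, which forces $\L'_a \neq \L'_{j'}$) and build $\psi'$ from $\psi$ so that $\psi'(v') = j'$, rewiring only the matched pairs touching the $\fL$-positions in $\{v,j,c\}$: if $c=j$ then $(j,j')$ is already a matched pair of $\psi$, so I take $\psi'=\psi$ and only move the supremum index to $j$; if $v \in \{j,c\}$ a single transposition suffices; and if $v,j,c$ are pairwise distinct I \emph{rotate} the three pairs $(v,a),(j,\psi(j)),(c,j')$ into $(j,j'),(v,\psi(j)),(c,a)$. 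In every case I would then check that the new combination dominates the old one via the coordinate permutation that sends the old supremum coordinate $\diagsup{\L_v}{\L'_a}$ to the new supremum coordinate $\diagsup{\L_j}{\L'_{j'}}$ — these are \emph{equal}, by the witness condition $\diagsup{\L_v}{\L'_{\psi(v)}} = \diagsup{\L_j}{\L'_{j'}}$ — while each remaining old coordinate, an infimum $\diaginf{\L_x}{\L'_y}$, is either unchanged or sent to an infimum $\diaginf{\L_{x'}}{\L'_{y'}}$ with $\L_{x'} \in \mathrm{Succ}(\L_x)$ or $\L'_{y'} \in \mathrm{Succ}(\L'_y)$; domination then follows from monotonicity of $\diaginf{\cdot}{\cdot}$ in each argument, which is immediate from the target-diagram axioms ($\diaginf{\cdot}{\cdot}$ being the meet of the lattice $D$). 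Here the witness conditions $\L_v \in \mathrm{Succ}(\L_j)$ and $\L'_a \in \mathrm{Succ}(\L'_{j'})$ are precisely what make those infima line up. Finally $\Phi(\psi',v') = \mathrm{rk}(\L_j) + \mathrm{rk}(\L'_{j'}) \le \mathrm{rk}(\L_v) + \mathrm{rk}(\L'_a) = \Phi(\psi,v)$, and the inequality is strict because the last witness condition guarantees $\L_j \neq \L_v$ or $\L'_{j'} \neq \L'_a$.

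The step I expect to be the main obstacle is establishing per-step domination when $v$, $j$, and $c=\psi^{-1}(j')$ are pairwise distinct: the naive fix of merely swapping the matched pairs $(j,\psi(j))$ and $(c,j')$ does \emph{not} produce a dominating combination, because the leftover $\fL'$-endpoint then ends up opposite $\L_c$, where neither witness inequality applies. One is forced to use the three-pair rotation, which places $\psi(j)$ opposite $\L_v$ (so $\L_v \in \mathrm{Succ}(\L_j)$ can be used) and $a=\psi(v)$ opposite $\L_c$ (so $\L'_a \in \mathrm{Succ}(\L'_{j'})$ can be used); getting this routing right, and checking the corresponding coordinate permutation works, is the delicate part. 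The remaining ingredients — acyclicity of $D$ giving a well-defined strictly monotone rank, the lattice axioms of the target diagram giving monotonicity of $\diaginf{\cdot}{\cdot}$ and $\diagsup{\cdot}{\cdot}$, and transitivity of $D$-domination — are routine and follow directly from the definitions in \Cref{sec:newre,sec:procedure}.
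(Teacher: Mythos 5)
Your proof is correct and takes essentially the same route as the paper: the three-pair rotation you describe for the case where $v,j,c$ are pairwise distinct is exactly the paper's permutation $\rho$ (with $\rho(j)=j'$, $\rho(u)=\phi(j)$, $\rho(\phi^{-1}(j'))=\phi(u)$), and your verification of domination via monotonicity of $\diaginf{\cdot}{\cdot}$ in each argument coincides coordinate-for-coordinate with the paper's bullet points. The two refinements you add---treating the degenerate cases $v=j$, $v=c$, $c=j$ explicitly (where the paper's $\rho$ is, strictly speaking, ill-defined), and replacing the paper's informal ``the $\diagsup{\cdot}{\cdot}$ arguments become strict predecessors'' termination argument with a concrete rank-based integer potential---are sound and arguably cleaner, but they do not change the underlying idea.
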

\begin{proof}
	Consider the permutation $\rho$ defined via $\rho(j) := j'$, $\rho(u) := \phi(j)$, $\rho(\phi^{-1}(j')) := \phi(u)$, and $\rho(i) := \phi(i)$ if $i \not\in \{ j,u, \phi^{-1}(j')\}$.
	Denote by $\fC'$ the combination of $\fL$ and $\fL'$ w.r.t.\ $\rho$ and $j$.
	Observe that
	\begin{itemize}
		\item $\diagsup{\L_j}{\L'_{\rho(j)}} = \diagsup{\L_u}{\L'_{\phi(u)}}$,
		\item $\diaginf{\L_u}{\L'_{\rho(u)}} \in \mathrm{Succ}(\diaginf{\L_j}{\L'_{\rho(u)}}) = \mathrm{Succ}(\diaginf{\L_j}{\L'_{\phi(j)}})$ since $\L_u \in \mathrm{Succ}(\L_j)$,
		\item $\diaginf{\L_{\phi^{-1}(j')}}{\L'_{\rho(\phi^{-1}(j'))}} = \diaginf{\L_{\phi^{-1}(j')}}{\L'_{\phi(u)}} \in \mathrm{Succ}(\diaginf{\L_{\phi^{-1}(j')}}{\L'_{j'}})$ since $\L'_{\phi(u)} \in\mathrm{Succ}(\L'_{j'})$,
		\item $\diaginf{\L_i}{\L'_{\rho(i)}} = \diaginf{\L_i}{\L'_{\phi(i)}}$ for all $i \not\in \{ j,u, \phi^{-1}(j')\}$.
	\end{itemize}
	This implies that $\fC'$ dominates $\fC$. If $\rho$ and $j$ (together with $\fL$ and $\fL'$) still satisfy the assumptions given in the lemma, then we recurse. This recursion eventually stops since in each recursion step the two arguments on which the $\diagsup{\cdot}{\cdot}$ operator is applied are replaced by predecessors, and for at least one of the two the predecessor is a strict one.
\end{proof}
When executing procedure $\fpp$, we can ignore the combinations satisfying the premises of at least one of \Cref{obs:twocomb,obs:selfcomb,obs:betterunions}, since they create configurations that are dominated by configurations that are already present or computed in cases not satisfying the premises.

We also observe that, in order to prove that a non-trivial fixed point relaxation $\Pi'$ for a problem $\Pi$ exists, there are two possible strategies:
\begin{itemize}
	\item Prove that, by applying the fixed point procedure on $\Pi$ with diagram $D$, the result is $\Pi'$. Also, prove that $\Pi'$ is not $0$-round-solvable.
	\item Prove that, by applying the fixed point procedure on $\Pi'$ with diagram $D$, the result is $\Pi'$ itself. Also, prove that $\Pi'$ can be solved in $0$ rounds given a solution for $\Pi$, and that $\Pi'$ is not $0$-round-solvable.
\end{itemize}
While the second strategy gives a slightly weaker result, that is, it does not show that one would indeed get $\Pi'$ by starting from $\Pi$ and applying the procedure, if the goal is showing that a non-trivial fixed point relaxation for $\Pi$ exists, the second strategy is sufficient. We summarize this observation as follows.
\begin{observation}\label{obs:closedunderfp}
	Assume that $\Pi'$ can be solved in $0$ rounds given a solution for $\Pi$, that $\Pi'$ cannot be solved in $0$ rounds, and that, by applying the fixed point procedure on $\Pi'$ with diagram $D$, the result is $\Pi'$ itself. Then, $\Pi'$ is a non-trivial fixed point relaxation of $\Pi$.
\end{observation}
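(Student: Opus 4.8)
The plan is to unpack the definition of a non-trivial fixed point relaxation and observe that each of its three requirements is either one of the explicit hypotheses of the observation or an immediate consequence of \Cref{thm:itisafixedpointrelaxation}. Recall that $\Pi'$ is a non-trivial fixed point relaxation of $\Pi$ if (i) $\Pi'$ can be solved in $0$ rounds given a solution for $\Pi$, (ii) $\rere(\re(\Pi')) = \Pi'$, and (iii) $\Pi'$ is not $0$-round-solvable in the port numbering model. Requirements (i) and (iii) are literally the first two hypotheses of the observation, so there is nothing to prove for those.

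For (ii), I would invoke \Cref{thm:itisafixedpointrelaxation} with $\Pi'$ playing the role of the input problem. The observation's third hypothesis—that applying the fixed point procedure to $\Pi'$ with diagram $D$ returns $\Pi'$—only makes sense when $(\Pi',D)$ is an admissible input pair, i.e., when $D$ is a valid target diagram for $\Pi'$ (in particular $\Sigma_{\Pi'} \subseteq \Sigma_D$ and the infimum/supremum conditions hold). Hence $\fpp(\Pi',D)$ is well-defined and, by \Cref{thm:itisafixedpointrelaxation}, is a fixed point relaxation of $\Pi'$; in particular it is a fixed point, so $\rere(\re(\fpp(\Pi',D))) = \fpp(\Pi',D)$. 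Substituting the hypothesis $\fpp(\Pi',D) = \Pi'$ into both sides yields $\rere(\re(\Pi')) = \Pi'$, which is exactly (ii). Combining (i)--(iii) then gives the conclusion.

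There is no genuine obstacle here: the argument is a one-line reduction to \Cref{thm:itisafixedpointrelaxation}. The only point that deserves a sentence of care is the observation that this route proves a formally slightly weaker statement than \Cref{thm:itisafixedpointrelaxation}—it does not exhibit $\Pi'$ as the literal output of running $\fpp$ on $\Pi$ itself—but this is immaterial for the intended use, since the lifting machinery (\Cref{cor:lifting}, via \Cref{thm:lifting}) only requires the \emph{existence} of some non-trivial fixed point relaxation of $\Pi$, and $\Pi'$ is shown to be one.
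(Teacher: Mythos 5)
Your proof is correct and takes exactly the route the paper intends: the paper states \Cref{obs:closedunderfp} with no separate proof, leaving it as an immediate consequence of the surrounding discussion and \Cref{thm:itisafixedpointrelaxation}, and your argument simply makes that implicit reasoning explicit. Applying \Cref{thm:itisafixedpointrelaxation} to the input $\Pi'$ (which requires $D$ to be a valid target diagram for $\Pi'$, as you note) shows $\fpp(\Pi',D)$ is a fixed point, so the hypothesis $\fpp(\Pi',D)=\Pi'$ gives $\rere(\re(\Pi'))=\Pi'$, and the remaining two conditions in the definition of non-trivial fixed point relaxation are the other two stated hypotheses verbatim.
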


\section{Selecting the Right Diagram}\label{sec:diagram}
In this section, we give some intuition on how we can choose a good diagram for applying the procedure $\fpp$, and we examine, as an example, the problem of computing a $1$-defective $2$-coloring in $3$-regular graphs. In this problem, we consider $3$-regular graphs and we require nodes to output either \emph{red} or \emph{blue} such that each node has at most one neighbor with the same color. This problem is known to require $\Omega(\log n)$ rounds for deterministic algorithms and $\Omega(\log \log n)$ rounds for randomized ones by prior work \cite{BalliuHLOS19}. In fact, we do not make any formal claim in this section, and we only use this problem as a running example (more details about this problem are provided in \Cref{sec:2col}).

\paragraph{The default diagram.}
Recall that whether $\fpp$ produces a trivial or a non-trivial fixed point may depend on the choice of the diagram that we give to the procedure. Let $\Pi = (\Sigma_{\Pi},\nodeconst_{\Pi},\edgeconst_{\Pi})$ be the problem on which we want to apply $\fpp$.  We now describe a diagram $D = (\Sigma_D, E_D)$ that can be used as a default choice. We set $\Sigma_D$ to be the set of right-closed subsets w.r.t.\ the edge diagram of $\Pi$ (where we also identify $\gen{\ell} \in \Sigma_D$ with $\ell \in \Sigma_\Pi)$, and we take as edges the ones given by the strict superset relation, i.e., $E_D = \set{(\L_1,\L_2) ~|~ \L_2 \subsetneq \L_1}$. The high level idea here is that $\Sigma_D$ is a superset of the labels that would be generated if we apply the universal quantifier to $\edgeconst_{\Pi}$. Diagram $D$ clearly satisfies the requirements on the diagram specified by procedure $\fpp$, because $\diagsup{\L_1}{\L_2} = \L_1 \cap \L_2$, and the intersection of two right-closed subsets is again a right-closed subset (a similar statement holds for $\diaginf{\cdot}{\cdot}$ as well).

\paragraph{Improving our choice.}
If the default diagram leads to a trivial fixed point, we can tweak the diagram and hope that, by applying procedure $\fpp$ with the new diagram, we obtain a better result, i.e., a non-trivial fixed point.
One possible way to improve the diagram is the following. We consider the trivial fixed point obtained via the default diagram and extract the configurations that allow us to solve it in $0$ rounds (i.e., the configurations $\fC$ satisfying that, if every node outputs $\fC$ without any coordination with its neighbors, then the obtained solution is correct). Consider one such configuration, $\fL = \{\L_1, \ldots, \L_\delta\}$. For each $\L_i$, we can build a tree that represents the expression that we computed in order to obtain $\L_i$, where leaves are labels of some initial configurations, and each internal node is either a $\diagsup{\cdot}{\cdot}$ operation or an $\diaginf{\cdot}{\cdot}$ operation. The idea is then to \emph{add} nodes (and edges) to the diagram such that, by computing the same expression, we obtain a different result, and in particular a configuration which does not contribute anymore to making the problem $0$-round-solvable. While this description is very vague, we now provide a concrete example.

\paragraph{The example problem.} One way to encode the problem of computing a $1$-defective $2$-coloring in $3$-regular graphs as a node-edge checkable problem is the following (we will provide more details in \Cref{sec:2col}):
\begin{equation*}
\begin{aligned}
\begin{split}
\nodeconst_{\Pi}\text{:}\\
&\A &\s& \A &\s& [\A \s \X] \\
&\B &\s& \B &\s& [\B \s \Y] \\
\end{split}
\qquad
\begin{split}
\edgeconst_{\Pi}\text{:}\\
&[\A \s \X] &\s& [\B \s \Y]\\
&[\X \s \Y] &\s& [\X \s \Y]\\
\end{split}
\end{aligned}
\end{equation*}

\paragraph{The default diagram for the example problem.}
\newcommand{\emptybox}{\bbempty}
If we compute edge the diagram of $\Pi$, we obtain that its nodes are $\set{\A,\B,\X,\Y}$ and its edges are $\set{(\A,\X),(\B,\Y)}$. Hence, the set of right-closed subsets is $\set{\emptyset, \set{\X}, \set{\Y},\set{\X,\Y}, \set{\A,\X},\set{\B,\Y},\set{\A,\X,\Y},\set{\B,\X,\Y},\set{\A,\B,\X,\Y}}$. In order to build the diagram $D$ for applying $\fpp$, we rename these sets as follows. The empty set becomes $\mybox{\phantom{E}}$, each set $\gen{\ell}$ becomes $\mybox{\ell}$ (for example the set $\set{\A,\X}$ is actually equal to $\gen{\A}$, so it becomes $\bA$, which we identify with the original label $\A$), and all the other sets $\set{\ell_1,\ldots,\ell_k}$ become $\mybox{\ell_1\ldots\ell_k}$. We put an edge from label $\ell_1$ to label $\ell_2$ if $\ell_2 \subsetneq \ell_1$. The obtained diagram is shown in \Cref{fig:basediag} (edges that can be obtained by transitivity are omitted).
\begin{figure}
	\centering
	\includegraphics[width=0.5\textwidth]{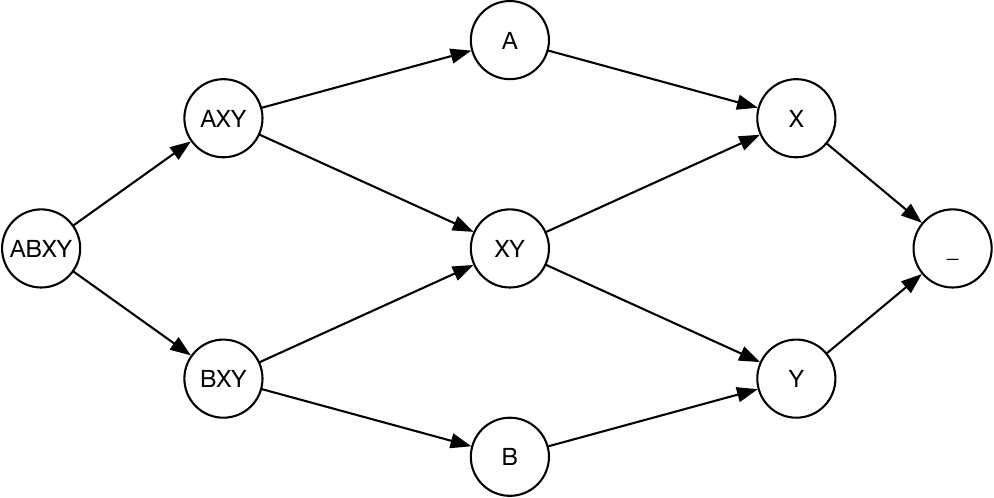}
	\caption{The default diagram for the problem of computing a $1$-defective $2$-coloring in $3$-regular graphs. The symbols $\mybox{\cdot}$ are omitted for clarity, and \_ represents the empty set. Edges that can be obtained via transitivity are omitted.}
	\label{fig:basediag}
\end{figure}

\paragraph{Computing $\fpp(\Pi,D)$.}
Let us start by computing $\fpproc{\nodeconst_{\Pi}}{D}$. By keeping only maximal configurations, we obtain that the starting configurations are $\bA \s \bA \s \bX$ and $\bB \s \bB \s \bY$.
By \Cref{obs:selfcomb}, the only combinations that we need to perform are those that combine $\bA \s \bA \s \bX$ with $\bB \s \bB \s \bY$. There are five ways to combine them:
\begin{itemize}
	\item $\diagsup{\bA}{\bB} \s \diaginf{\bA}{\bB} \s \diaginf{\bX}{\bY} = \emptybox \s \bABXY \s \bXY$.  
	\item $\diagsup{\bA}{\bB} \s \diaginf{\bA}{\bY} \s \diaginf{\bX}{\bB} = \emptybox \s \bAXY \s \bBXY$.
	\item $\diaginf{\bA}{\bB} \s \diaginf{\bA}{\bB} \s \diagsup{\bX}{\bY} = \bABXY \s \bABXY \s \emptybox$. Observe that this configuration is dominated by the first one.
	\item $\diaginf{\bA}{\bB} \s \diagsup{\bA}{\bY} \s \diaginf{\bX}{\bB} = \bABXY \s \emptybox \s \bBXY$. Observe that this configuration is dominated by the first one.
	\item $\diaginf{\bA}{\bB} \s \diaginf{\bA}{\bY} \s \diagsup{\bX}{\bB} = \bABXY \s \bAXY \s \emptybox$. Observe that this configuration is dominated by the first one.
\end{itemize}
Hence, we obtain the configurations $\bA \s \bA \s \bX$, $\bB \s \bB \s \bY$, $\bABXY \s \bXY \s \emptybox$, and $\bAXY \s \bBXY \s \emptybox$. The next step is to combine the new configurations with each other and themselves, and the new configurations with the old ones. By repeating this process until nothing new is obtained (discarding, at each step, non-maximal configurations), and by then also computing  $\fpproc{\edgeconst_{\Pi}}{D'}$, where $D'$ is obtained from $D$ by reversing the direction of the edges, we obtain the problem $\Pi' = (\Sigma_D,\nodeconst,\gen{\edgeconst})$ described as follows.
\begin{equation*}
\begin{aligned}
\begin{split}
\nodeconst_{\Pi}\text{:}\\
&\bA &\s& \bA &\s& \bX \\
&\bB &\s& \bB &\s& \bY \\
&\bABXY &\s& \bXY &\s& \emptybox \\
&\bAXY &\s& \bBXY &\s& \emptybox \\
&\bAXY &\s& \bX &\s& \bX \\
&\bBXY &\s& \bY &\s& \bY \\
&\bXY &\s& \bXY &\s& \bXY \\
\end{split}
\qquad
\begin{split}
\edgeconst_{\Pi}\text{:}\\
&[\bA \bX \emptybox] &\s& [\bB \bY \emptybox]\\
&\emptybox &\s& [\bABXY \bAXY \bBXY \bXY \bA \bB \bX \bY \emptybox]\\
&[\bY \emptybox] &\s& [\bAXY \bA \bXY \bX \bY \emptybox]\\
&[\bX \emptybox] &\s& [\bBXY \bB \bXY \bX \bY \emptybox]\\
&[\bXY \bX \bY \emptybox] &\s& [\bXY \bX \bY \emptybox]\\\\\\
\end{split}
\end{aligned}
\end{equation*}
We can now observe that the configuration $\bXY \s \bXY \s \bXY$ makes problem $\Pi'$ $0$-rounds-solvable; hence if we want to obtain a non-trivial fixed point relaxation for $\Pi$ we need to tweak the diagram and apply $\fpp$ again. 

By keeping track of the combinations performed to obtain such a configuration, we obtain the combinations depicted in \Cref{fig:xy} (note that there may be different ways to obtain such a configuration, and this is just an example).
\begin{figure}
	\centering
	\includegraphics[width=0.7\textwidth]{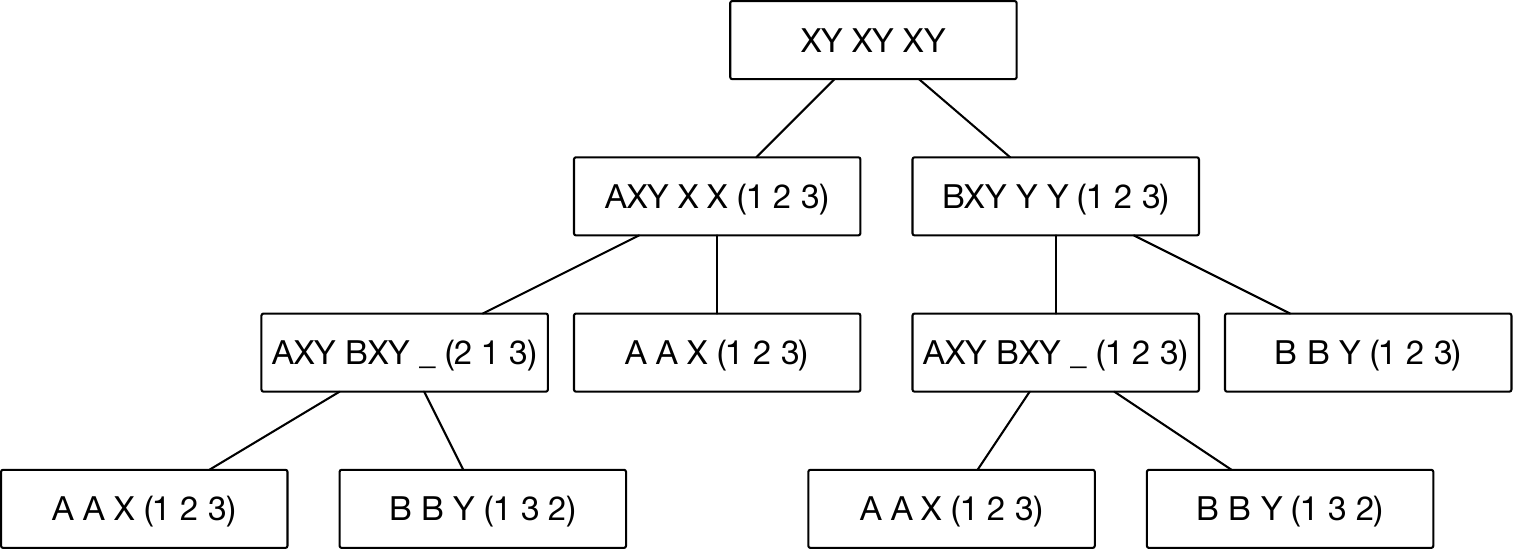}
	\caption{A collection of combinations that can be used to obtain the configuration $\bXY \s \bXY \s \bXY$. The numbers represent how to match the labels of the two configurations to obtain the parent one. Labels in position $1$ are those where we apply the $\diagsup{\cdot}{\cdot}$ operator. For example, combining $\bAXY \s \bBXY \s \emptybox ~ (2 1 3)$ and $\bA \s \bA \s \bX ~ (1 2 3)$ means taking configuration $\diagsup{\bBXY}{\bA} \s \diaginf{\bAXY}{\bA} \s \diaginf{\emptybox}{\bX} = \bX \s \bAXY \s \bX$.}
	\label{fig:xy}
\end{figure}
If we now write a separate expression for each resulting $\bXY$, we obtain that the three resulting $\bXY$ are obtained as follows.
\begin{enumerate}
	\item $\diaginf{\diaginf{\bX}{\diagsup{\bA}{\bB}}}{\diaginf{\bY}{\diagsup{\bA}{\bB}}}$
	\item $\diagsup{\diaginf{\bA}{\diaginf{\bA}{\bY}}}{\diaginf{\bB}{\diaginf{\bB}{\bX}}} = \diagsup{\diaginf{\bA}{\bY}}{\diaginf{\bB}{\bX}}$
	\item $\diaginf{\diagsup{\bA}{\diaginf{\bB}{\bX}}}{\diagsup{\bB}{\diaginf{\bA}{\bY}}}$
\end{enumerate}
We now replace label $\bXY$ with two copies of it, $\bXY$ and $\mybox{\X\Y'}$, and connect these two labels with the others as depicted in \Cref{fig:example-valid-diagram}. The idea here is that if an expression produces $\bXY$ by performing a $\diagsup{\cdot}{\cdot}$ operation on nodes that are predecessors of $\bXY$, then with the new diagram we are going to obtain $\mybox{\X\Y'}$ as a result, instead of $\bXY$. We then have to hope that there are no alternative ways to combine configurations that produce $\bXY \s \bXY \s \bXY$ anyways, and that $\bXY$ and $\mybox{\X\Y'}$ do not become self compatible and pairwise compatible on the edge constraint (i.e., the configurations $\bXY \s \bXY$, $\bXY \s \mybox{\X\Y'}$, and $\mybox{\X\Y'} \s \mybox{\X\Y'}$ are not all contained in the edge constraint), as it would defeat the purpose of distinguishing them.
By applying again $\fpp$, this time with the diagram of \Cref{fig:example-valid-diagram}, we obtain the following:
\begin{equation*}
\begin{aligned}
\begin{split}
\nodeconst_{\Pi}\text{:}\\
&\bA &\s& \bA &\s& \bX \\
&\bB &\s& \bB &\s& \bY \\
&\bABXY &\s& \bXY &\s& \emptybox \\
&\bAXY &\s& \bBXY &\s& \emptybox \\
&\bAXY &\s& \bX &\s& \bX \\
&\bBXY &\s& \bY &\s& \bY \\
&\mybox{\X\Y'} &\s& \bXY &\s& \bXY \\
\end{split}
\qquad
\begin{split}
\edgeconst_{\Pi}\text{:}\\
&[\bA \bX \emptybox] &\s& [\bB \bY \emptybox]\\
&\emptybox &\s& [\bABXY \bAXY \bBXY \bXY \mybox{\X\Y'} \bA \bB \bX \bY \emptybox]\\
&[\bY \emptybox] &\s& [\bAXY \bA \bXY \mybox{\X\Y'} \bX \bY \emptybox]\\
&[\bX \emptybox] &\s& [\bBXY \bB \bXY \mybox{\X\Y'} \bX \bY \emptybox]\\
&[\bXY \bX \bY \emptybox] &\s& [\mybox{\X\Y'} \bXY \bX \bY \emptybox]\\\\\\
\end{split}
\end{aligned}
\end{equation*}
It is straightforward to check that this problem is not $0$-round-solvable, and hence the tweaking of the diagram succeeded.

\section{An Alternative Proof for the Hardness of $\Delta$-Coloring}\label{sec:deltacol}
In this section, we provide a first application of procedure $\fpp$. We revisit a result proved in \cite{hideandseek}, that is, a non-trivial fixed point relaxation for the $\Delta$-coloring problem, and we show that, by using $\fpp$, it is much easier to prove this result. To this end, we apply \Cref{obs:closedunderfp}, i.e., we provide a problem that can be solved in $0$ rounds given a $\Delta$-coloring, we show that by applying $\fpp$ to this problem we obtain the problem itself, and we show that this problem is not $0$-round-solvable. This problem is exactly the one provided in \cite{hideandseek}.

\paragraph{The problem $\Pi_\Delta$.}
We now define this problem $\Pi_\Delta$ that we will prove to be a non-trivial fixed point relaxation of the $\Delta$-coloring problem. Assume that the colors are $\ccs = \{1, \ldots, \Delta\}$.
The set $\Sigma_{\Pi_{\Delta}}$ of labels of this problem is the set of all the possible subsets of $\ccs$, that is, for each $\ccc \in 2^\ccs$, we have the label $\ccc$. For stylistic reasons, we may represent the label  $\ccc = \{c_1,\ldots,\c_k\}$ as $\mybox{c_1 \ldots c_k}$.
The node constraint $\nodeconst_{\Pi_{\Delta}}$ of $\Pi_\Delta$ contains the following configurations:
\[
\ccc^{\Delta - k + 1} \s \bbempty^{k - 1}, \text{ for all $\ccc \in 2^\ccs \setminus \{\emptyset\}$, where $k = |\ccc|$. }
\]
The edge constraint $\edgeconst_{\Pi_{\Delta}}$ of $\Pi_\Delta$ contains the following configurations:
\[
\ccc_1 \s \ccc_2, \text{for all $\ccc_1,\ccc_2 \in 2^\ccs$ such that $\ccc_1 \cap \ccc_2 = \emptyset$.}
\]

\paragraph{An example for $\Delta=3$.}
We now provide an explicit example of $\Pi_\Delta$ for $\Delta=3$.
\begin{equation*}
	\begin{aligned}
		\begin{split}
			\nodeconst_{\Pi_{3}}\text{:}\\
			&\mybox{1} &\s& \mybox{1} &\s& \mybox{1} \\
			&\mybox{2} &\s& \mybox{2} &\s& \mybox{2} \\
			&\mybox{3} &\s& \mybox{3} &\s& \mybox{3} \\
			&\mybox{12} &\s& \mybox{12} &\s& \bbempty \\
			&\mybox{13} &\s& \mybox{13} &\s& \bbempty \\
			&\mybox{23} &\s& \mybox{23} &\s& \bbempty \\
			&\mybox{123} &\s& \bbempty &\s& \bbempty \\\\
		\end{split}
		\qquad
		\begin{split}
			\edgeconst_{\Pi_{3}}\text{:}\\
			&\bbempty &\s& [\bbempty\mybox{1}\mybox{2}\mybox{3}\mybox{12}\mybox{23}\mybox{13}\mybox{123}] \\
			&\mybox{1} &\s& [\bbempty\mybox{2}\mybox{3}\mybox{23}]\\
			&\mybox{2} &\s& [\bbempty\mybox{1}\mybox{3}\mybox{13}]\\
			&\mybox{3} &\s& [\bbempty\mybox{1}\mybox{2}\mybox{12}]\\
			&\mybox{12} &\s& [\bbempty\mybox{3}]\\
			&\mybox{13} &\s& [\bbempty\mybox{2}]\\
			&\mybox{23} &\s& [\bbempty\mybox{1}]\\
			&\mybox{123} &\s& \bbempty\\
		\end{split}
	\end{aligned}
\end{equation*}

\paragraph{The claim.}
We devote the rest of the section to proving the following statement.
\begin{theorem}\label{thm:deltacol}
	The problem $\Pi_\Delta$ is a non-trivial fixed point relaxation of the $\Delta$-coloring problem.
\end{theorem}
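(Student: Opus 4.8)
The plan is to invoke \Cref{obs:closedunderfp} with $\Pi$ the $\Delta$-coloring problem and $\Pi' = \Pi_\Delta$: it then suffices to establish (a) $\Pi_\Delta$ can be solved in $0$ rounds given a $\Delta$-coloring, (b) $\Pi_\Delta$ is not $0$-round-solvable in the port numbering model, and (c) $\fpp(\Pi_\Delta,D) = \Pi_\Delta$ for a suitable diagram $D$. For $D$ I would take the diagram on vertex set $2^\ccs$ (where $\ccs = \{1,\dots,\Delta\}$) with an edge $\ccc \to \ccc'$ whenever $\ccc' \subsetneq \ccc$. Since $2^\ccs$ is a lattice, $D$ is a valid target diagram, and $\mathrm{Succ}_D(\ccc) = 2^\ccc$, so $\gen{\ccc} = 2^\ccc$, $\diagsup{\ccc_1}{\ccc_2} = \ccc_1 \cap \ccc_2$, and $\diaginf{\ccc_1}{\ccc_2} = \ccc_1 \cup \ccc_2$ (with the roles of $\cap$ and $\cup$ exchanged for the reversed diagram $D'$).

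Parts (a) and (b) are immediate. For (a), given a proper $\Delta$-coloring $c$, let each node $v$ label all incident halfedges with the singleton $\{c(v)\}$; then $v$'s configuration is $\{c(v)\}^{\Delta} \in \nodeconst_{\Pi_\Delta}$ (the case $\ccc = \{c(v)\}$, $k=1$), and every edge $\{u,v\}$ receives $\{\{c(u)\},\{c(v)\}\} \in \edgeconst_{\Pi_\Delta}$ because $c(u)\neq c(v)$; this needs no communication. For (b), every configuration of $\nodeconst_{\Pi_\Delta}$ has the form $\ccc^{\Delta-|\ccc|+1}\,\bbempty^{|\ccc|-1}$ with $\ccc \neq \emptyset$, so the label $\ccc$ occurs in it; but $\ccc\,\ccc \notin \edgeconst_{\Pi_\Delta}$ since $\ccc \cap \ccc = \ccc \neq \emptyset$, so the characterization of $0$-round-solvability from \Cref{ssec:re} is violated by every configuration.

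The content is in part (c). For the edge side I would show that $\fpproc{\edgeconst_{\Pi_\Delta}}{D'}$ first keeps exactly the $D'$-maximal elements of $\edgeconst_{\Pi_\Delta}$, which are the complementary pairs $\ccc\,(\ccs\setminus\ccc)$ (the only disjoint pairs that cannot be enlarged), and that this family is already closed under combination: combining $\ccc\,(\ccs\setminus\ccc)$ with $\ccc'\,(\ccs\setminus\ccc')$ yields $e\,(\ccs\setminus e)$ with $e \in \{\ccc\cup\ccc',\,\ccc\cap\ccc',\,\ccc\cup(\ccs\setminus\ccc'),\,\ccc\cap(\ccs\setminus\ccc')\}$, again of that shape. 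Replacing each $\ccc\,(\ccs\setminus\ccc)$ by $\gen{\ccc}\,\gen{\ccs\setminus\ccc} = 2^{\ccc}\,2^{\ccs\setminus\ccc}$ recovers exactly the configurations $m_1\,m_2$ with $m_1 \cap m_2 = \emptyset$, i.e.\ $\edgeconst_{\Pi_\Delta}$. For the node side, write $\mathrm{N}(\ccc) := \ccc^{\Delta-|\ccc|+1}\,\bbempty^{|\ccc|-1}$. One checks that the $\mathrm{N}(\ccc)$ are pairwise $D$-incomparable and $D$-maximal (if $\mathrm{N}(\ccc')$ dominated $\mathrm{N}(\ccc)$ then $\ccc' \subseteq \ccc$ while the multiplicities force $|\ccc'| \ge |\ccc|$, so $\ccc' = \ccc$), hence $\fpproc{\nodeconst_{\Pi_\Delta}}{D}$ starts from $\{\mathrm{N}(\ccc) : \emptyset \neq \ccc \subseteq \ccs\}$; it remains to show that no combination produces a new configuration. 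By \Cref{obs:selfcomb,obs:twocomb}, self-combinations, combinations of comparable $\ccc,\ccc'$, and combinations whose $\diagsup$-position matches a $\bbempty$ only produce dominated configurations, so the only case to analyze is combining $\mathrm{N}(\ccc)$ and $\mathrm{N}(\ccc')$ with $\ccc,\ccc'$ incomparable and the $\diagsup$-position matching a $\ccc$ with a $\ccc'$. Writing $t$ for the number of $\ccc$--$\ccc'$ matched pairs, a bookkeeping computation gives the resulting configuration
\[
	(\ccc\cap\ccc')^{1}\ (\ccc\cup\ccc')^{t-1}\ \ccc^{a-t}\ \ccc'^{b-t}\ \bbempty^{s},
	\qquad a := \Delta-|\ccc|+1,\ \ b := \Delta-|\ccc'|+1,\ \ s := \Delta-a-b+t,
\]
and a counting argument shows this is $D$-dominated by $\mathrm{N}(\ccc\cap\ccc')$ when $t \le \Delta - |\ccc\cup\ccc'| + 1$ and by $\mathrm{N}(\ccc\cup\ccc')$ when $t \ge \Delta - |\ccc\cup\ccc'| + 2$. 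Since one of these two always holds, and the first case forces $\ccc\cap\ccc' \neq \emptyset$ (so $\mathrm{N}(\ccc\cap\ccc')$ is a genuine node configuration), nothing new is generated; thus $\nodeconst = \nodeconst_{\Pi_\Delta}$ and (c) is complete.

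The main obstacle is precisely this node-side closure argument: correctly identifying which configuration a combination produces (keeping the single union-via-$\diagsup$ position, the intersection-via-$\diaginf$ positions, and the matching multiplicities straight), and then proving domination through the multiplicity count — in particular handling the dichotomy between $\ccc\cap\ccc' = \emptyset$, which forces $t$ large so that the top-most relevant configuration $\mathrm{N}(\ccc\cup\ccc')$ dominates, and $\ccc\cap\ccc' \neq \emptyset$. Verifying that $D$ is a valid target diagram and carrying out the entire edge-side computation are routine.
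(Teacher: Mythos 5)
Your proposal is correct and follows essentially the same route as the paper's proof: invoke \Cref{obs:closedunderfp} with the same diagram $D$ (the Boolean lattice on $2^\ccs$, so $\diagsup{\cdot}{\cdot}=\cap$ and $\diaginf{\cdot}{\cdot}=\cup$ w.r.t.\ $D$ and dually for $D'$), the same trivial arguments for (a) and (b), the same edge-side observation that the $D'$-maximal configurations are the complementary pairs and that these are already closed under combination, and on the node side the same reduction of a combination of $\mathrm{N}(\ccc)$ and $\mathrm{N}(\ccc')$ (via \Cref{obs:selfcomb,obs:twocomb}) to the two-case domination by $\mathrm{N}(\ccc\cap\ccc')$ or $\mathrm{N}(\ccc\cup\ccc')$ depending on the number of $\ccc$–$\ccc'$ matched pairs, with the $\ccc\cap\ccc'=\emptyset$ subcase absorbed into the $\ccc\cup\ccc'$ branch exactly as in the paper. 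The minor extra checks you add (pairwise $D$-maximality of the $\mathrm{N}(\ccc)$, discarding the comparable case explicitly) are consistent with and implied by the paper's argument.
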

Note that $\Pi_\Delta$ is not $0$-round-solvable, since all configurations allowed by the node constraint contain at least one label $\ccc\neq \bbempty$, and $\ccc \s \ccc$ is not contained in $\edgeconst_{\Pi_{\Delta}}$. Also, note that the problem is clearly solvable in $0$ rounds given a $\Delta$-coloring, since it is enough for a node of color $c$ to output the configuration $\mybox{c}^\Delta$. We now apply subprocedure $\fpprocname$ on the node constraint $\nodeconst_{\Pi_{\Delta}}$, by using the diagram $D = (\Sigma_D,E_D)$ defined as $\Sigma_D = \Sigma_{\Pi_{\Delta}}$ and $E_D = \{ (\ell_1,\ell_2) ~|~ \ell_2 \subsetneq \ell_1 \}$, and we show that the result is the node constraint itself. Later, we will apply subprocedure $\fpprocname$ on the edge constraint $\edgeconst_{\Pi_{\Delta}}$, using the diagram $D'$ obtained by flipping the edges of $D$, and we will show that the resulting constraint $\edgeconst$ satisfies $\gen{\edgeconst} = \edgeconst_{\Pi_{\Delta}}$.

\paragraph{The node constraint.}
Consider two arbitrary allowed configurations in $\nodeconst_{\Pi_{\Delta}}$, $\fL_1 = \ccc_1^{\Delta - k_1 + 1} \s \bbempty^{k_1 - 1}$ and $\fL_2 = \ccc_2^{\Delta - k_2 + 1} \s \bbempty^{k_2 - 1}$, where $k_1 = |\ccc_1|$ and $k_2 = |\ccc_2|$. By \Cref{obs:selfcomb}, we can restrict our attention to the case $\ccc_1 \neq \ccc_2$, and by \Cref{obs:twocomb}, we can restrict our attention to the case in which the $\diagsup{\cdot}{\cdot}$ operator is applied to $\ccc_1$ and $\ccc_2$. Hence, by combining $\fL_1$ and $\fL_2$ we obtain a configuration of the following form: $\fC = \diagsup{\ccc_1}{\ccc_2} \s \diaginf{\ccc_1}{\ccc_2}^a \s \diaginf{\ccc_1}{\bbempty}^b \s \diaginf{\bbempty}{\ccc_2}^c \s \diaginf{\bbempty}{\bbempty}^d = (\ccc_1 \cap \ccc_2) \s (\ccc_1 \cup \ccc_2)^a \s \ccc_1^b \s \ccc_2^c \s \bbempty^d$, where $a+b+c+d+1 = \Delta$, $a+b+1 = \Delta - k_1 + 1$, and $a +c+1 = \Delta - k_2 + 1$.

Let $\ccc_{\cup} = \ccc_1 \cup \ccc_2$, and let $k_{\cup} = |\ccc_{\cup}|$. First, consider the case $a \ge \Delta - k_{\cup} + 1$. In this case, $\fC$ is dominated by the configuration $\ccc_{\cup}^{\Delta - k_{\cup} + 1} \s \bbempty^{k_{\cup} - 1}$. Hence, we are left with the case  $a \le \Delta - k_{\cup}$. We prove that, in this case, $\fC$ is dominated by the configuration $\ccc_{\cap}^{\Delta - k_{\cap} + 1} \s \bbempty^{k_{\cap} - 1}$, where $\ccc_{\cap} = \ccc_1 \cap \ccc_2$ and $k_{\cap} = |\ccc_{\cap}|$. This domination holds if $a+b+c+1 \ge \Delta - k_{\cap} + 1$, and hence we now prove that this inequality holds.
Observe that $k_{\cup} + k_{\cap}  = k_1 + k_2 = 2 \Delta - 2a - b - c \ge 2 \Delta - (\Delta - k_{\cup}) - a - b - c = \Delta  + k_{\cup} - a -b -c$. This implies that $k_{\cap} \ge  \Delta - a -b -c$, which implies the claim.

\paragraph{The edge constraint.}
The first step in computing $\edgeconst$ is taking all maximal configurations of $\edgeconst_{\Pi_{\Delta}}$. By the definition of $D'$, it is easy to see that such configurations are exactly those of the form $\ccc_1 \s \ccc_2$ where $\ccc_1 \cap \ccc_2 = \emptyset$ and $\ccc_1 \cup \ccc_2 = \ccs$. Now, consider two arbitrary configurations in $\edgeconst$, $\fL_1 = \ccc_{1,1} \s \ccc_{1,2}$ and $\fL_2 = \ccc_{2,1} \s \ccc_{2,2}$.  W.l.o.g., assume that the combination is $\fC = \diaginf{\ccc_{1,1}}{\ccc_{2,1}} \s \diagsup{\ccc_{1,2}}{\ccc_{2,2}} =  (\ccc_{1,1} \cap \ccc_{2,1}) \s (\ccc_{1,2} \cup \ccc_{2,2})$. Since $\ccc_{1,1} \cap \ccc_{1,2} = \emptyset$ and $\ccc_{2,1} \cap \ccc_{2,2} = \emptyset$, it holds that $(\ccc_{1,1} \cap \ccc_{2,1}) \cap (\ccc_{1,2} \cup \ccc_{2,2}) = \emptyset$, and hence $\fC$ is dominated by $(\ccc_{1,1} \cap \ccc_{2,1}) \s (\ccs \setminus (\ccc_{1,1} \cap \ccc_{2,1})) \in \edgeconst$. This implies that no new configurations are added to $\edgeconst$. Finally, observe that $\gen{\edgeconst}$ gives exactly the configurations in $\edgeconst_{\Pi_{\Delta}}$.

\section{An Alternative Proof for the Hardness of Defective $2$-Coloring}\label{sec:2col}
One of the major open questions about round elimination is whether there exists a non-trivial fixed point relaxation for each problem that requires $\Omega(\log_\Delta n)$ rounds \cite{trulytight,hideandseek}. There are very few locally checkable problems that are known to require $\Omega(\log_\Delta n)$ rounds and for which a non-trivial fixed point relaxation is not known, and one of them is the problem of coloring the nodes of a graph with $2$ colors, such that each node has at least $2$ neighbors of a different color than itself. This problem is known to require $\Omega(\log_\Delta n)$ rounds \cite{BalliuHLOS19}. We now provide an alternative proof by providing a non-trivial fixed point relaxation for the problem. In the following, we refer to this problem as \emph{defective $2$-coloring}.
To obtain the new proof, we apply \Cref{obs:closedunderfp}, i.e., we provide a problem $\Pi_\Delta$ that can be solved in $0$ rounds given a defective $2$-coloring, we show that $\Pi_\Delta$ is not $0$-round-solvable, and that by applying $\fpp$ to $\Pi_\Delta$  we obtain $\Pi_\Delta$ itself.
We remark that parts of some proofs presented in \Cref{sec:3col} are based on proofs provided in this section.

\paragraph{The problem $\Pi_\Delta$.}
The set of labels is $\Sigma_{\Pi_{\Delta}} = \{\bbempty, \bbX, \bbY, \bbXY, \bbXYp, \bbAX, \bbBY, \bbAXYp,\bbBXYp,\allowbreak \bbABXYp\}$. The constraints are defined as follows.
\begin{equation*}
	\begin{aligned}
		\begin{split}
			&\nodeconst_{\Pi_\Delta}\text{:} &&&&&&& &\edgeconst_{\Pi_\Delta}\text{:}\\
			&\bbX^{\Delta - 2} & & \bbAX^{2}   &&&&& &[\bbBY\bbY\bbempty]    \s\s [\bbAX\bbX\bbempty]\\
			&\bbY^{\Delta - 2} & & \bbBY^{2}   &&&&& &[\bbABXYp \bbAXYp \bbBXYp \bbAX \bbBY \bbXYp \bbXY \bbX \bbY \bbempty]  \s\s \bbempty \\
			&  \bbX^{\Delta - 1}  & &  \bbAXYp &&&&& 	&[\bbAXYp \bbAX \bbXYp \bbXY \bbX \bbY \bbempty]   \s\s [\bbY \bbempty] \\
			&  \bbY^{\Delta-1}  & &  \bbBXYp   &&&&& 	&[\bbBXYp \bbBY \bbXYp \bbXY \bbX \bbY \bbempty]   \s\s [\bbX \bbempty] \\
			& \bbempty   & &  \bbXY^{\Delta-3} & & \bbAXYp & & \bbBXYp  &~~~~~~~~~~&[\bbXYp \bbXY \bbX \bbY \bbempty]   \s\s [\bbXY \bbX \bbY \bbempty] \\
			&  \bbempty  & &  \bbXY^{\Delta-2} & &   \bbABXYp  \\
			& \bbXY^{\Delta-1}   & &   \bbXYp  \\
			\\
		\end{split}
	\end{aligned}
\end{equation*}

\paragraph{The claim.}
We devote the rest of the section to proving the following statement.
\begin{theorem}
	The problem $\Pi_\Delta$ is a non-trivial fixed point relaxation of the defective $2$-coloring problem.
\end{theorem}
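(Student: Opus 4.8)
The plan is to verify the three hypotheses of \Cref{obs:closedunderfp}: it suffices to fix a target diagram $D$ on $\Sigma_{\Pi_\Delta}$ and prove that (a) $\Pi_\Delta$ can be solved in $0$ rounds given a solution for defective $2$-coloring, (b) $\Pi_\Delta$ is not $0$-round-solvable in the port numbering model, and (c) $\fpp(\Pi_\Delta,D) = \Pi_\Delta$. For $D$ I would take the diagram of \Cref{fig:example-valid-diagram}, i.e.\ the tweaked diagram constructed in \Cref{sec:diagram} for the $1$-defective $2$-coloring example; note that $\Sigma_{\Pi_\Delta}$ and $D$ do not depend on $\Delta$ (only the multiplicities in the node and edge constraints do), so step (c) is essentially the $\Delta$-regular generalization of the computation already carried out in \Cref{sec:diagram} for $\Delta = 3$.

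Claim (a) is a $0$-round relabeling, in the spirit of the $\Delta$-coloring reduction of \Cref{sec:deltacol}. In a node-edge-checkable encoding of defective $2$-coloring, a red node has at least two of its half-edges marked as pointing to a differently colored neighbor; mapping those marked half-edges to $\bbAX$ and all other half-edges of the node to $\bbX$ (and symmetrically $\bbBY$, $\bbY$ on the blue side) produces the node configuration $\bbX^{\Delta-2}\s\bbAX^2 \in \nodeconst_{\Pi_\Delta}$ (or its blue analogue), and on each edge a configuration of $\edgeconst_{\Pi_\Delta}$: a bichromatic edge falls into $[\bbBY\s\bbY\s\bbempty]\s[\bbAX\s\bbX\s\bbempty]$, and a monochromatic (red, resp.\ blue) edge into the fourth (resp.\ third) listed edge configuration. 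For claim (b) I would run through the seven node configurations of $\Pi_\Delta$ and, for each, point to a pair of its labels $\ell,\ell'$ with $\ell\s\ell' \notin \edgeconst_{\Pi_\Delta}$: for instance $\bbAX\s\bbAX$ for $\bbX^{\Delta-2}\s\bbAX^2$, $\bbAXYp\s\bbBXYp$ for $\bbempty\s\bbXY^{\Delta-3}\s\bbAXYp\s\bbBXYp$, $\bbABXYp\s\bbXY$ for $\bbempty\s\bbXY^{\Delta-2}\s\bbABXYp$, and $\bbXYp\s\bbXYp$ for $\bbXY^{\Delta-1}\s\bbXYp$, with the remaining cases analogous; by the $0$-round characterization of \Cref{ssec:re} this shows $\Pi_\Delta$ is non-trivial.

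The heart of the proof is (c). Applying $\fpprocname$ to $\nodeconst_{\Pi_\Delta}$ with $D$, I would first discard the non-maximal (w.r.t.\ $D$) configurations among the seven and then repeatedly form all combinations over permutations $\phi$ and positions $u$, re-maximalizing after each round. The key to keeping this finite and short is aggressive pruning via \Cref{obs:selfcomb,obs:twocomb,obs:betterunions}: in all but one of the seven node configurations the labels form a chain in $D$, so by \Cref{obs:selfcomb} self-combinations produce nothing new; and by \Cref{obs:twocomb,obs:betterunions} a combination of two distinct configurations only needs to be examined in the case where the $\diagsup{\cdot}{\cdot}$ operator is applied to two $D$-incomparable labels. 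For each such surviving combination I would compute the resulting configuration explicitly and exhibit a member of $\nodeconst_{\Pi_\Delta}$ that dominates it, the $\Delta$-dependent multiplicities $\Delta-2$, $\Delta-3$ being controlled by a counting inequality exactly as in the $\Delta$-coloring argument of \Cref{sec:deltacol}. I would then run $\fpprocname$ on $\edgeconst_{\Pi_\Delta}$ with the reversed diagram $D'$: take the maximal (w.r.t.\ $D'$) edge configurations of $\Pi_\Delta$, check that combining any two of them yields a configuration dominated by one already present, and finally verify $\gen{\edgeconst} = \edgeconst_{\Pi_\Delta}$. Both computations mirror the $\Delta = 3$ calculation of \Cref{sec:diagram}.

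I expect the real difficulty to be conceptual rather than computational, namely the choice of $D$. With the naive ``right-closed subsets'' diagram, $\fpp$ produces the trivializing node configuration $\bbXY\s\bbXY\s\bbXY$ (exactly the obstruction encountered in \Cref{sec:diagram}), and the whole point of splitting $\bbXY$ into the extra ``primed'' label $\bbXYp$ and wiring it into $D$ as in \Cref{fig:example-valid-diagram} is that every combination route that used to output three copies of $\bbXY$ now outputs a $\bbXYp$ in at least one slot, while $\bbXYp\s\bbXYp \notin \edgeconst_{\Pi_\Delta}$ and $\bbXYp$ is not edge-compatible with itself, so the distinction between $\bbXY$ and $\bbXYp$ cannot collapse. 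The delicate part of step (c) is therefore not any single combination but the \emph{exhaustiveness} of the verification that no alternative sequence of combinations recreates an all-$\bbXY$ (or otherwise $0$-round-solvable) node configuration; the pruning observations above are precisely what make this exhaustive check manageable by hand rather than an unmanageable case explosion.
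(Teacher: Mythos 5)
Your proposal is correct and takes essentially the same approach as the paper: it invokes \Cref{obs:closedunderfp} with the same $\Delta$-independent diagram (the paper's $D$ of \Cref{fig:diagbasecase}, which is structurally the tweaked diagram of \Cref{fig:example-valid-diagram}), verifies $0$-round reducibility and non-$0$-round-solvability exactly as the paper does, and then runs $\fpprocname$ on $\nodeconst_{\Pi_\Delta}$ (with $D$) and $\edgeconst_{\Pi_\Delta}$ (with $D'$) pruned by \Cref{obs:selfcomb,obs:twocomb,obs:betterunions}. Note only that even after this pruning the paper still enumerates on the order of twenty surviving node-side cases by hand (your ``chain'' observation correctly isolates configuration 5 as the one exception for self-combinations), so the sketch is sound but hides roughly the same amount of explicit bookkeeping the paper performs.
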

In the following, we will consider each label $\mybox{\ell_1\ldots\ell_k} \in \Sigma_{\Pi_{\Delta}}$ also as the set $\set{\ell_1,\ldots,\ell_k}$.
Note that all configurations contained in the node constraint contain at least one label $\L$ satisfying $\L \cap \{\A,\B,+\} \neq \emptyset$. From the edge constraint we can observe that all such labels satisfy that $\L \s \L$ is not in $\edgeconst_{\Pi_\Delta}$. Hence, the problem is not solvable in $0$ rounds.

Assume we are given a solution for the defective $2$-coloring problem, where each node already knows which neighbors have the same color as them (which can be inferred in just $1$ round of communication). We show that this solution can be converted in $0$ rounds into a solution for $\Pi_\Delta$. Assume that each node is either \emph{red} or \emph{blue}. Each red node outputs the configuration $\bbAX^{2} \s \bbX^{\Delta-2}$, by putting the label $\bbAX$ on two arbitrary edges connecting it to blue neighbors (which are guaranteed to exist) and $\bbX$ on all other incident edges. Blue nodes act similarly by using the configuration  $\bbBY^{2} \s \bbY^{\Delta-2}$. It is easy to see that this labeling satisfies the edge constraint.

We define the diagram $D = (\Sigma_D,E_D)$ via $\Sigma_D = \Sigma_{\Pi_{\Delta}}$ and $E_D = \{ (\ell_1,\ell_2) ~|~ \ell_2 \subsetneq \ell_1 \}$ (see \Cref{fig:diagbasecase}). The diagram $D'$ is obtained by flipping the edges of $D$.
\begin{figure}
	\centering
	\includegraphics[width=0.5\textwidth]{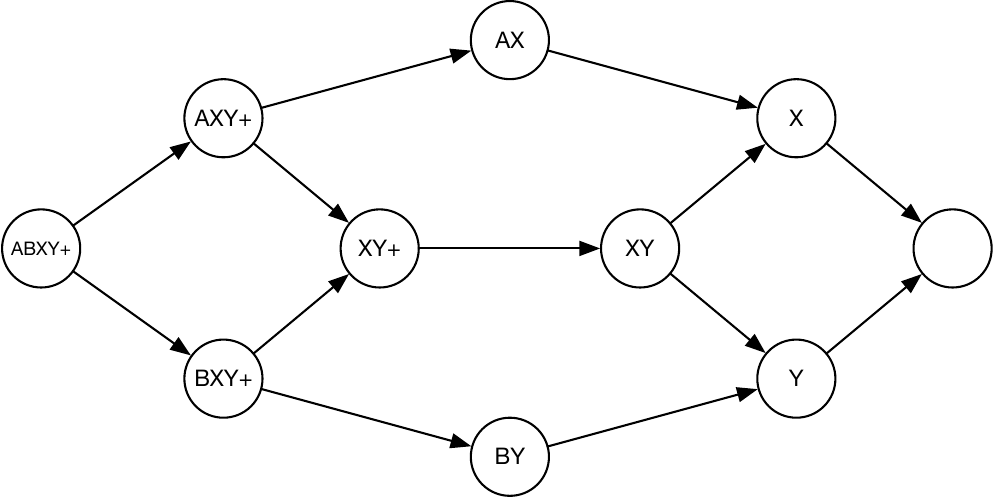}
	\caption{Diagram $D$ for the case of defective $2$-coloring. Edges that can be obtained via transitivity are omitted.}
	\label{fig:diagbasecase}
\end{figure}

\paragraph{The node constraint.}
We now apply subprocedure $\fpprocname$ on the node constraint, by using the diagram $D$. All configurations of $\nodeconst_{\Pi_\Delta}$ are maximal, and we now show that for any pair of configurations it holds that all their combinations are in $\nodeconst_{\Pi_\Delta}$.
We provide the list of allowed configurations here again for reference.
\begin{enumerate}
	\item $\bbX^{\Delta - 2} \s \bbAX^{2}$
	\item $\bbY^{\Delta - 2} \s \bbBY^{2}$
	\item $\bbX^{\Delta - 1} \s  \bbAXYp$
	\item $\bbY^{\Delta-1} \s \bbBXYp$
	\item $\bbempty  \s  \bbXY^{\Delta-3} \s \bbAXYp \s \bbBXYp$
	\item $\bbempty  \s  \bbXY^{\Delta-2} \s   \bbABXYp$
	\item $\bbXY^{\Delta-1}  \s  \bbXYp$
\end{enumerate}
\begin{itemize}
	\item \textbf{1 with 1.} By \Cref{obs:selfcomb}, this case can be discarded.
	\item \textbf{1 with 2, by taking the supremum between $\bbAX$ and $\bbBY$}. $\diagsup{\bbAX}{\bbBY} = \bbempty$. We obtain all configurations of the form $\bbempty \s \bbXY^a \s \bbAXYp^b \s \bbBXYp^c \s \bbABXYp^d$, where $1+a+b+c+d = \Delta$, $a+c = \Delta-2$, $a+b = \Delta-2$, $c+d=1$, and $b+d=1$. If $d \ge 1$, then the configuration is dominated by configuration 6, otherwise we get that $b = c = 1$, and hence the configuration is dominated by configuration 5.
	\item \textbf{1 with 2, all other cases.} By \Cref{obs:betterunions}, these cases can be discarded.
	\item \textbf{1 with 3.} By \Cref{obs:twocomb}, this case can be discarded.
	\item \textbf{1 with 4, by taking the supremum between $\bbAX$ and $\bbY$}. $\diagsup{\bbAX}{\bbY} = \bbempty$. We obtain all configurations of the form $\bbempty \s \bbXY^a \s \bbAXYp^b \s \bbBXYp^c \s \bbABXYp^d$, where $1+a+b+c+d = \Delta$, $a+c = \Delta-2$, $a+b = \Delta-2$, $c+d=1$, and $b+d=1$. If $d \ge 1$, then the configuration is dominated by configuration 6, otherwise we get that $b = c = 1$, and hence the configuration is dominated by configuration 5.
	\item \textbf{1 with 4, by taking the supremum between $\bbX$ and $\bbY$}. By \Cref{obs:betterunions}, this case can be discarded.
	\item \textbf{1 with 4, by taking the supremum between $\bbX$ and $\bbBXYp$}. By \Cref{obs:twocomb}, this case can be discarded.
	\item \textbf{1 with 4, by taking the supremum between $\bbAX$ and $\bbBXYp$}. $\diagsup{\bbAX}{\bbBXYp} = \bbX$. The obtained configuration is $\bbX \s  \bbXY^{\Delta-2} \s \bbAXYp$, which is dominated by configuration 3.
	\item \textbf{1 with 5, by taking the supremum between $\bbAX$ and $\bbBXYp$}. $\diagsup{\bbAX}{\bbBXYp} = \bbX$. Since all sets of configuration $1$ contain $\X$, the result must be dominated by $\bbX^2  \s  \bbXY^{\Delta-3} \s \bbAXYp$, which is dominated by configuration 3.
	\item \textbf{1 with 5, by taking the supremum between $\bbAX$ and $\bbXY$}. By \Cref{obs:betterunions}, this case can be discarded.
	\item \textbf{1 with 5, all other cases.} By \Cref{obs:twocomb}, these cases can be discarded.
	\item \textbf{1 with 6, by taking the supremum between $\bbAX$ and $\bbXY$}. $\diagsup{\bbAX}{\bbXY} = \bbX$. Since all sets of configuration $1$ contain $\X$, the result must be dominated by $\bbX^2  \s  \bbXY^{\Delta-3} \s \bbABXYp$, which is dominated by configuration 3.
	\item \textbf{1 with 6, all other cases.} By \Cref{obs:twocomb}, these cases can be discarded.
	\item \textbf{1 with 7, by taking the supremum between $\bbAX$ and $\bbXYp$}. $\diagsup{\bbAX}{\bbXYp} = \bbX$. The obtained configuration is $\bbX \s \bbXY^{\Delta-2} \s \bbAXYp$, which is dominated by configuration 3.
	\item \textbf{1 with 7, all other cases.} By \Cref{obs:twocomb} and \Cref{obs:betterunions}, these cases can be discarded.
	\item \textbf{2 with anything.} Configuration 2 is symmetric to configuration 1 (by exchanging $\A$ and $\B$, and $\X$ and $\Y$).
	\item \textbf{3 with 3.} By \Cref{obs:selfcomb}, this case can be discarded.
	\item \textbf{3 with 4, by taking the supremum between $\bbX$ and $\bbY$}. $\diagsup{\bbX}{\bbY} = \bbempty$. The obtained configurations are either 5 or 6.
	\item \textbf{3 with 4, by taking the supremum between $\bbAXYp$ and $\bbBXYp$}. $\diagsup{\bbAXYp}{\bbBXYp} = \bbXYp$. The obtained configuration is $\bbXYp \s \bbXY^{\Delta-1}$, which is configuration 7.
	\item \textbf{3 with 4, all other cases.} By \Cref{obs:twocomb}, these cases can be discarded.
	\item \textbf{3 with 5, by taking the supremum between $\bbAXYp$ and $\bbBXYp$}. $\diagsup{\bbAXYp}{\bbBXYp} = \bbXYp$. The result is $\bbXYp \s \bbX \s  \bbXY^{\Delta-3} \s \bbAXYp$, which is dominated by configuration 3.
	\item \textbf{3 with 5, all other cases.} By \Cref{obs:twocomb}, these cases can be discarded.
	\item \textbf{3 with 6.} By \Cref{obs:twocomb}, this case can be discarded.
	\item \textbf{3 with 7.} By \Cref{obs:twocomb}, this case can be discarded.
	\item \textbf{4 with anything.} Configuration 4 is symmetric to configuration 3.
	\item \textbf{5 with 5, by taking the supremum between $\bbAXYp$ and $\bbBXYp$}. $\diagsup{\bbAXYp}{\bbBXYp} = \bbXYp$. The result is dominated by either  $\bbXYp \s \bbempty \s  \bbXY^{\Delta-4} \s \bbAXYp \s \bbBXYp$ or $\bbXYp \s \bbempty \s  \bbXY^{\Delta-3} \s \bbABXYp$, which are dominated by configuration 5 and configuration 6, respectively.
	\item \textbf{All other cases.} By \Cref{obs:selfcomb} and \Cref{obs:twocomb}, these cases can be discarded.
\end{itemize}

\paragraph{The edge constraint.}
We now apply subprocedure $\fpprocname$ on the edge constraint, by using diagram $D'$. The first step in computing $\edgeconst$ is taking all maximal configurations of $\edgeconst_{\Pi_{\Delta}}$. By the definition of $D'$, we obtain the following configurations:
\begin{enumerate}
	\item $\bbAX \s \bbBY$
	\item $\bbABXYp \s \bbempty$
	\item $\bbAXYp \s \bbY$
	\item $\bbBXYp \s \bbX$
	\item $\bbXYp \s \bbXY$
\end{enumerate}
We now consider all possible combinations, ignoring symmetric cases, and ignoring cases where \Cref{obs:selfcomb} and \Cref{obs:twocomb} apply.
\begin{itemize}
	\item 1 with 1: $\diagsup{\bbAX}{\bbBY} \s \diaginf{\bbBY}{\bbAX}$ gives $\bbABXYp \s \bbempty$, which is configuration 2.
	\item 1 with 3: $\diagsup{\bbAX}{\bbY} \s \diaginf{\bbBY}{\bbAXYp}$ gives $\bbAXYp \s \bbY$, which is configuration 3.
	\item 1 with 3: $\diagsup{\bbBY}{\bbAXYp} \s \diaginf{\bbAX}{\bbY}$ gives $\bbABXYp \s \bbempty$, which is configuration 2.
	\item 1 with 5: $\diagsup{\bbAX}{\bbXYp} \s \diaginf{\bbBY}{\bbXY}$ gives $\bbAXYp \s \bbY$, which is configuration 3.
	\item 1 with 5: $\diagsup{\bbAX}{\bbXY} \s \diaginf{\bbBY}{\bbXYp}$ gives $\bbAXYp \s \bbY$, which is configuration 3.
	\item 3 with 4: $\diagsup{\bbAXYp}{\bbBXYp} \s \diaginf{\bbY}{\bbX}$ gives $\bbABXYp \s \bbempty$, which is configuration 2.
	\item 3 with 4: $\diagsup{\bbY}{\bbX} \s \diaginf{\bbAXYp}{\bbBXYp} $ gives $\bbXY \s \bbXYp$, which is configuration 5.
\end{itemize}
This implies that no new configurations are added to $\edgeconst$. Finally, observe that $\gen{\edgeconst}$ gives exactly the configurations in $\edgeconst_{\Pi_{\Delta}}$.

\section{Defective $3$-coloring}\label{sec:3col}
In this section, we show that $\lfloor (\Delta - 3)/2 \rfloor$-defective $3$-coloring requires $\Omega(\log_\Delta n)$ for deterministic algorithms and $\Omega(\log_\Delta \log n)$ for randomized ones. We actually show a stronger result: the lower bound that we prove holds also in the case in which one color is allowed to be \emph{arbdefective}. More in detail, we consider the following problem, which for simplicity we just call \emph{defective $3$-coloring}. The task is to assign to each node a color in $\{\A,\B,\C\}$ and to each edge between nodes of color $\C$ an orientation, such that the following is satisfied.
\begin{itemize}
	\item Each node of color $\A$ (resp.\ $\B$) has at most $d = \lfloor (\Delta - 3)/2 \rfloor$ neighbors of color $\A$ (resp.\ $\B$).
	\item The orientation satisfies that each node of color $\C$ has at most $d$ outgoing edges.
\end{itemize}
In the rest of \Cref{sec:3col}, we prove the following statement.
\begin{theorem}\label{thm:3col}
	The $\lfloor (\Delta - 3)/2 \rfloor$-defective $3$-coloring problem requires $\Omega(\log_\Delta n)$ rounds for deterministic algorithms and $\Omega(\log_\Delta \log n)$ rounds for randomized ones.
\end{theorem}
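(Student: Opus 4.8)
The plan is to prove \Cref{thm:3col} exactly as in \Cref{sec:deltacol,sec:2col}: exhibit a non-trivial fixed point relaxation $\Pi_\Delta$ of the defective $3$-coloring problem and invoke \Cref{cor:lifting}. By \Cref{obs:closedunderfp}, it suffices to produce a locally checkable problem $\Pi_\Delta$ and a target diagram $D$ such that (i) $\Pi_\Delta$ can be solved in $0$ rounds given any solution to defective $3$-coloring, (ii) $\Pi_\Delta$ is not $0$-round-solvable in the port numbering model, and (iii) $\fpp(\Pi_\Delta,D)=\Pi_\Delta$. So the three things to do are: fix a black-white encoding of defective $3$-coloring, present $\Pi_\Delta$ and $D$, and verify (i)--(iii).

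First I would fix the encoding. Color $\A$ is represented by a "good" label (for edges to non-$\A$ neighbors) and a "defect" label (for edges to $\A$ neighbors), with the node constraint allowing at most $d=\lfloor(\Delta-3)/2\rfloor$ occurrences of the defect label; color $\B$ is symmetric; color $\C$ uses a label for edges to non-$\C$ neighbors together with the orientation labels $\I,\O$ for $\C$--$\C$ edges, where at most $d$ occurrences of $\O$ per node are allowed. The edge constraint lists the compatible pairs (two $\A$ endpoints may not both be "good", each $\C$--$\C$ edge carries exactly one $\I$ and one $\O$, etc.). With this encoding, (i) is immediate once $\Pi_\Delta$ is known—after one round each node learns which neighbors share its color and then outputs the obvious relaxed configuration—and (ii) will follow because every allowed node configuration of $\Pi_\Delta$ contains a label $\L$ with $\L\,\L\notin\edgeconst_{\Pi_\Delta}$.

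Next I would obtain $\Pi_\Delta$ and $D$ following \Cref{sec:diagram}: start from the default diagram on the right-closed subsets of the edge diagram, run $\fpp$, and if the resulting fixed point is $0$-round-solvable, repeatedly split the offending labels (the ones enabling a trivial solution) and add the corresponding diagram nodes and edges, as in the worked example of that section, until $\fpp$ returns a non-trivial fixed point. The resulting $\Pi_\Delta$ is parameterized by $\Delta$ and has configurations organized into a small number of families indexed by integer multiplicities (of the form $\ell_1^{a_1}\cdots\ell_k^{a_k}$ with linear constraints on $a_1,\dots,a_k$ summing to $\Delta$). Writing $\Pi_\Delta$ down explicitly and proving (ii) is combinatorially involved but not the real difficulty.

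The main obstacle is (iii). Unlike in \Cref{sec:deltacol,sec:2col}, a direct case analysis over all pairs of configurations, all matching permutations $\phi$, and all choices of the supremum position $u$ runs into hundreds of cases. My plan is to avoid this by hand entirely. Crucially, \Cref{thm:newreiscorrect} and the new procedure reduce the fixed-point check to verifying that \emph{pairwise} combinations create no new configuration, rather than an exponential enumeration; and \Cref{obs:twocomb,obs:selfcomb,obs:betterunions} prune most combinations a priori. For each surviving combination, "the result is dominated by family $F$ of $\nodeconst_{\Pi_\Delta}$ (resp.\ $\gen{\edgeconst}$)" unfolds, because the families are multiplicity-indexed, into the statement that an explicitly described system of linear inequalities over the multiplicities has no solution. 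So I would (a) enumerate the finite collection of all such systems that arise, (b) prove once and for all that emptiness of every system in this collection implies $\fpp(\Pi_\Delta,D)=\Pi_\Delta$, and (c) discharge the emptiness checks automatically with a computer tool, as described in the contributions. Combining (i)--(iii) with \Cref{cor:lifting} then gives the $\Omega(\log_\Delta n)$ deterministic and $\Omega(\log_\Delta\log n)$ randomized lower bounds, proving \Cref{thm:3col}.
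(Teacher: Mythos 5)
Your proposal follows the same architecture as the paper's proof: exhibit a non-trivial fixed point relaxation $\Pi_\Delta$, invoke \Cref{cor:lifting} via \Cref{obs:closedunderfp}, and discharge the check $\fpp(\Pi_\Delta,D)=\Pi_\Delta$ by reducing each combination of configuration families to the emptiness of a system of linear inequalities checked by machine. The encoding you sketch (separate ``good''/``defect'' labels for colors $\A,\B$ and orientation labels for $\C$--$\C$ edges, so that the $\C$ color is arbdefective) also matches the paper's choice, and you correctly identify the pruning lemmas and the multiplicity-indexed families as what makes the inequality reduction possible.

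The genuine gap is that your plan does not actually produce $\Pi_\Delta$, and the discovery heuristic you propose (run $\fpp$ with the default diagram for defective $3$-coloring, then repeatedly split offending labels as in the worked $1$-defective $2$-coloring example) is not established to terminate at the right answer here; the paper does not claim it does, and gives no indication that the fixed point was found this way. What the paper actually does is quite different and more structured: it \emph{lifts} the defective $2$-coloring fixed point $\Pi^2$ from \Cref{sec:2col}, taking as label set $\{\L,\ \L\cup\{\C\}\ :\ \L\in\Sigma_{\Pi^2}\}$, building the edge constraint by adding $\C$ to exactly one side of each configuration of $\edgeconst_{\Pi^2}$, and hand-designing a node constraint of seven configuration families with exponents in $\Delta$ and $d$. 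That lifting is also what makes the edge-constraint half of the verification short: the paper proves it by a direct reduction to the already-verified $2$-coloring case, rather than by the inequality machinery. So while your step (iii) is correctly identified as the bottleneck once $\Pi_\Delta$ is in hand, the hard creative content of the paper's proof --- the explicit $\Pi_\Delta$, built on $\Pi^2$ --- is missing from the proposal, and it is not merely a routine application of \Cref{sec:diagram}.
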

In order to prove this statement, we apply \Cref{cor:lifting}, which implies that it suffices to prove that there exists a non-trivial fixed point relaxation for the defective $3$-coloring problem.
In order to prove that there exists a non-trivial fixed point relaxation, we apply \Cref{obs:closedunderfp}, i.e., we provide a problem $\Pi_\Delta$ that can be solved in $0$ rounds given a defective $3$-coloring, we show that applying $\fpp$ to $\Pi_\Delta$ gives $\Pi_\Delta$ itself, and we show that this problem is not solvable in $0$ rounds.

We start in \Cref{ssec:3coldef} by defining the problem $\Pi_\Delta$. Then, in \Cref{ssec:3colsolvability} we show that $\Pi_\Delta$ can be solved in $0$ rounds given a solution for defective $3$-coloring, and that $\Pi_\Delta$ cannot be solved in $0$ rounds. In \Cref{ssec:applyfpp3col}, we show that applying $\fpp$ to $\Pi_\Delta$ gives $\Pi_\Delta$ itself.

We remark that, already in the case of defective $2$-coloring (which has a much simpler description than defective $3$-coloring), in order to prove that applying $\fpprocname$ on the node constraint gives the node constraint itself (which is part of applying $\fpp$), a long case analysis was needed. For the node constraint of $\Pi_\Delta$ that we are going to consider in this section, the number of cases is actually much larger (hundreds of cases). For this reason, we do not provide a handcrafted case analysis for the node constraint of $\Pi_\Delta$. Instead, we reduce the task of checking whether a given node constraint is the result of applying $\fpprocname$, to proving that all systems of inequalities belonging to a certain finite set have no solution, which can be checked automatically via computer tools.

\subsection{The Fixed Point}\label{ssec:3coldef}
We now define the problem $\Pi_\Delta$ that we will later show to be a non-trivial fixed point relaxation of defective $3$-coloring.
\paragraph{The set of labels.}
In the following, by $\Pi^2$ we denote the fixed point relaxation of defective $2$-coloring that we provided in \Cref{sec:2col}.
Recall that the labels of $\Pi^2$ defined in \Cref{sec:2col} are  $\Sigma = \{\bbempty, \bbX, \bbY, \bbXY, \bbXYp, \bbAX,\allowbreak \bbBY, \bbAXYp,\bbBXYp,\bbABXYp\}$.
In the following, we will consider each label $\mybox{\ell_1\ldots\ell_k}$ also as the set $\set{\ell_1,\ldots,\ell_k}$.
The labels $\Sigma_{\Pi_\Delta}$  of $\Pi_\Delta$ are defined as follows. For each label $\L$ in $\Sigma$, we add to $\Sigma_{\Pi_\Delta}$ the labels $\L$ and $\L \cup \{\C\}$.

\paragraph{The node constraint.}
We start by defining the node constraint $\nodeconst_{\Pi_\Delta}$ of $\Pi_\Delta$.
	Let $d = \lfloor (\Delta - 3)/2 \rfloor$.
	For $\Delta \le 4$ we get that  $d = 0$, and hence a problem that is at least as hard as $\Delta$-coloring, which, by \cite{Brandt2016}, implies the lower bounds stated in \Cref{thm:3col}. Hence, w.l.o.g., we restrict  to the case $\Delta \ge 5$, which in particular implies $d \ge 1$. Note that $2d + 3 \le \Delta \le 2d + 4$. 
	\begin{enumerate}
		\item\label{item:type1}
		\begin{align*}
			\bbempty^{d +1} \s \bbCX^{d} \s \bbACX^{\Delta - 2d - 1} \\
			\bbX^{2 d + 1} \s \bbACX^{\Delta - 2d - 1} \\
			\bbX^{d} \s \bbAX^{\Delta - d}
		\end{align*}
		
		\item\label{item:type2}
		\begin{align*}
			\bbempty^{d +1} \s \bbCY^{d} \s \bbBCY^{\Delta - 2d - 1} \\
			\bbY^{2d + 1} \s \bbBCY^{\Delta - 2d - 1} \\
			\bbY^{d} \s \bbBY^{\Delta - d}
		\end{align*}
		
		\item\label{item:type3}
		\begin{align*}
			\bbempty^{d +1} \s \bbCX^{d + 1} \s \bbACXYp^{d} \s \bbABCXYp^{\Delta - 3d - 2}  ~~~&(\text{if } \Delta > 3d + 2)\\
			\bbempty^{d +1} \s \bbCX^{d + 1} \s \bbACXYp^{\Delta - 2d - 2} ~~~&(\text{if } \Delta \le 3d + 2)\\
			\bbX^{2d + 2} \s \bbACXYp^{d} \s \bbABCXYp^{\Delta - 3d - 2}  ~~~&(\text{if } \Delta > 3d + 2)\\
			\bbX^{2d + 2} \s \bbACXYp^{\Delta - 2d - 2} ~~~&(\text{if } \Delta \le 3d + 2)\\
			\bbX^{d + 1} \s \bbAXYp^{2d+1} \s \bbABCXYp^{\Delta - 3d - 2} ~~~&(\text{if } \Delta > 3d + 2)\\
			\bbX^{d + 1} \s \bbAXYp^{\Delta - d - 1} ~~~&(\text{if } \Delta \le 3d + 2)\\
			\bbX^{d + 1} \s \bbAXYp^{d} \s \bbABXYp^{\Delta - 2d - 1} ~~~&
		\end{align*}
		
		\item\label{item:type4}
		\begin{align*}
			\bbempty^{d +1} \s \bbCY^{d + 1} \s \bbBCXYp^{d} \s \bbABCXYp^{\Delta - 3d - 2}  ~~~&(\text{if } \Delta > 3d + 2)\\
			\bbempty^{d +1} \s \bbCY^{d + 1} \s \bbBCXYp^{\Delta - 2d - 2} ~~~&(\text{if } \Delta \le 3d + 2)\\
			\bbY^{2d + 2} \s \bbBCXYp^{d} \s \bbABCXYp^{\Delta - 3d - 2}  ~~~&(\text{if } \Delta > 3d + 2)\\
			\bbY^{2d + 2} \s \bbBCXYp^{\Delta - 2d - 2} ~~~&(\text{if } \Delta \le 3d + 2)\\
			\bbY^{d + 1} \s \bbBXYp^{2d+1} \s \bbABCXYp^{\Delta - 3d - 2} ~~~&(\text{if } \Delta > 3d + 2)\\
			\bbY^{d + 1} \s \bbBXYp^{\Delta - d - 1} ~~~&(\text{if } \Delta \le 3d + 2)\\
			\bbY^{d + 1} \s \bbBXYp^{d} \s \bbABXYp^{\Delta - 2d - 1} ~~~&
		\end{align*}

		\item\label{item:type5} For all integers $j$ such that no exponent is negative.
		\begin{align*}
			\bbempty^{d + 2} \s \bbCXY^{j} \s \bbACXYp^{d - j} \s \bbBCXYp^{d - j} \s \bbABCXYp^{\Delta - 3d - 2 + j}\\
			\bbempty \s \bbXY^{d+1+j} \s \bbACXYp^{d -j } \s \bbBCXYp^{d-j} \s \bbABCXYp^{\Delta - 3d - 2 +j} \\
			\bbempty \s \bbXY^{j} \s \bbAXYp^{2d+1 -j} \s \bbBCXYp^{d - j} \s \bbABCXYp^{\Delta - 3d - 2 + j}\\
			\bbempty \s \bbXY^{j} \s \bbACXYp^{d-j} \s \bbBXYp^{2d+1-j} \s \bbABCXYp^{\Delta - 3d - 2 + j}\\
			\bbempty \s \bbXY^{j} \s \bbAXYp^{d-j} \s \bbBXYp^{d-j} \s \bbABXYp^{\Delta - 2d - 1 + j}\\
		\end{align*}
		
		\item\label{item:type6}
		\begin{align*}
			\bbempty^{d + 1} \s \bbCXY^{d+1} \s \bbCXYp \s \bbABCXYp  ~~~&(\text{if } \Delta = 2d + 4 )\\
			\bbempty^{d + 1} \s \bbCXY^{3d+4 - \Delta} \s \bbCXYp^{2 \Delta - 4d - 5} ~~~&(\text{if } \Delta \le 3d + 2 )\\
			\bbXY^{2d+2} \s \bbCXYp \s \bbABCXYp ~~~&(\text{if } \Delta = 2d+4)\\
			\bbXY^{4d + 5 - \Delta} \s \bbCXYp^{2\Delta - 4d - 5}~~~&(\text{if } \Delta \le 3d + 2 )\\
			\bbXY^{d+1} \s \bbXYp^{d + 2} \s \bbABCXYp~~~&(\text{if } \Delta = 2d+4)\\
			\bbXY^{3d + 4 - \Delta} \s \bbXYp^{2\Delta - 3d - 4}~~~&(\text{if } \Delta \le 3d + 2 )\\
			\bbXY^{j} \s \bbXYp^{2d + 3 - 2 j} \s \bbABXYp^{\Delta + j -2d -3} ~~~&(\text{for } 2 \le j \le d+1)\\
		\end{align*}
		
		\item\label{item:type7} 
		\begin{align*}
			\bbempty^{d} \s \bbC^{\Delta - d}
		\end{align*}
	\end{enumerate}

\paragraph{The edge constraint.}
The edge constraint $\edgeconst_{\Pi_\Delta}$ of $\Pi_\Delta$ is defined as follows. We consider the edge constraint $\edgeconst_{\Pi^2}$ of the fixed point relaxation of the defective $2$-coloring problem defined in \Cref{sec:2col}, and for each configuration $\L_1 \s \L_2 \in \edgeconst_{\Pi^2}$ we add to $\edgeconst_{\Pi_\Delta}$ the configurations $\L_1 \s \L_2$, $\L_1 \s (\L_2 \cup \set{\C})$, and  $(\L_1 \cup \set{\C}) \s \L_2$.

\subsection{Solvability of $\Pi_{\Delta}$}\label{ssec:3colsolvability}
In this section, we first show that $\Pi_\Delta$ can be solved in $0$ rounds given a defective $3$-coloring as input, and then we show that $\Pi_\Delta$ is not $0$-round-solvable.

\paragraph{Solving $\Pi_\Delta$ with defective $3$-coloring.}
Assume we are given a solution for the defective $3$-coloring problem, where each node already knows which neighbors have the same color as them (which can be inferred in just $1$ round of communication). We show that this solution can be converted in $0$ rounds to a solution for $\Pi_\Delta$. Each node of color $\A$ outputs the configuration $\bbAX^{\Delta-d} \s \bbX^{d}$, by putting the label $\bbAX$ on $\Delta-d$ arbitrary edges connecting it to neighbors of different color (which are guaranteed to exist) and $\bbX$ on all other incident edges.
Nodes of color $\B$ act similarly by using the configuration  $\bbBY^{\Delta-d} \s \bbY^{d}$. Now consider a node of color $\C$: it is guaranteed to have at most $d$ outgoing edges. On such edges the node outputs $\bbempty$, and then it outputs additional $\bbempty$ on arbitrary edges in order to obtain exactly $d$ edges labeled $\bbempty$. On all other incident edges, it outputs $\bbC$. Observe that the configuration used by nodes of color $\C$ is $\bbC^{\Delta-d} \s \bbempty^{d}$. It is easy to see that this labeling satisfies the edge constraint $\edgeconst_{\Pi_\Delta}$.

\paragraph{Non-$0$-round-solvability.}
Note that all configurations allowed by the node constraint contain at least one label $\L$ satisfying $\L \cap \{\A,\B,\C,+\} \neq \emptyset$. From the edge constraint we can observe that all such labels satisfy that $\L \s \L$ is not in $\edgeconst_{\Pi_\Delta}$. Hence, the problem is not solvable in $0$ rounds.

\subsection{Applying $\fpp$}\label{ssec:applyfpp3col}
In order to prove that by applying $\fpp$ on $\Pi_\Delta$ we obtain $\Pi_\Delta$ itself, we first define, in \Cref{sssec:diagram}, the diagram $D$ that we will use as input for procedure $\fpp$. Then, we apply procedure $\fpp$: in \Cref{sssec:3coledge}, we apply $\fpprocname$ on the edge constraint of $\Pi_\Delta$ (and we additionally apply the $\gen{\cdot}$ operator), and in \Cref{sssec:3colnode} we apply $\fpprocname$ on the node constraint.

\subsubsection{The Diagram}\label{sssec:diagram}
We define the diagram $D = (\Sigma_D,E_D)$ as $\Sigma_D = \Sigma_{\Pi_{\Delta}}$ and $E_D = \{ (\ell_1,\ell_2) ~|~ \ell_2 \subsetneq \ell_1 \}$ (see \Cref{fig:diag3col}). The diagram $D'$ is obtained by flipping the edges of $D$.
\begin{figure}
	\centering
	\includegraphics[width=0.7\textwidth]{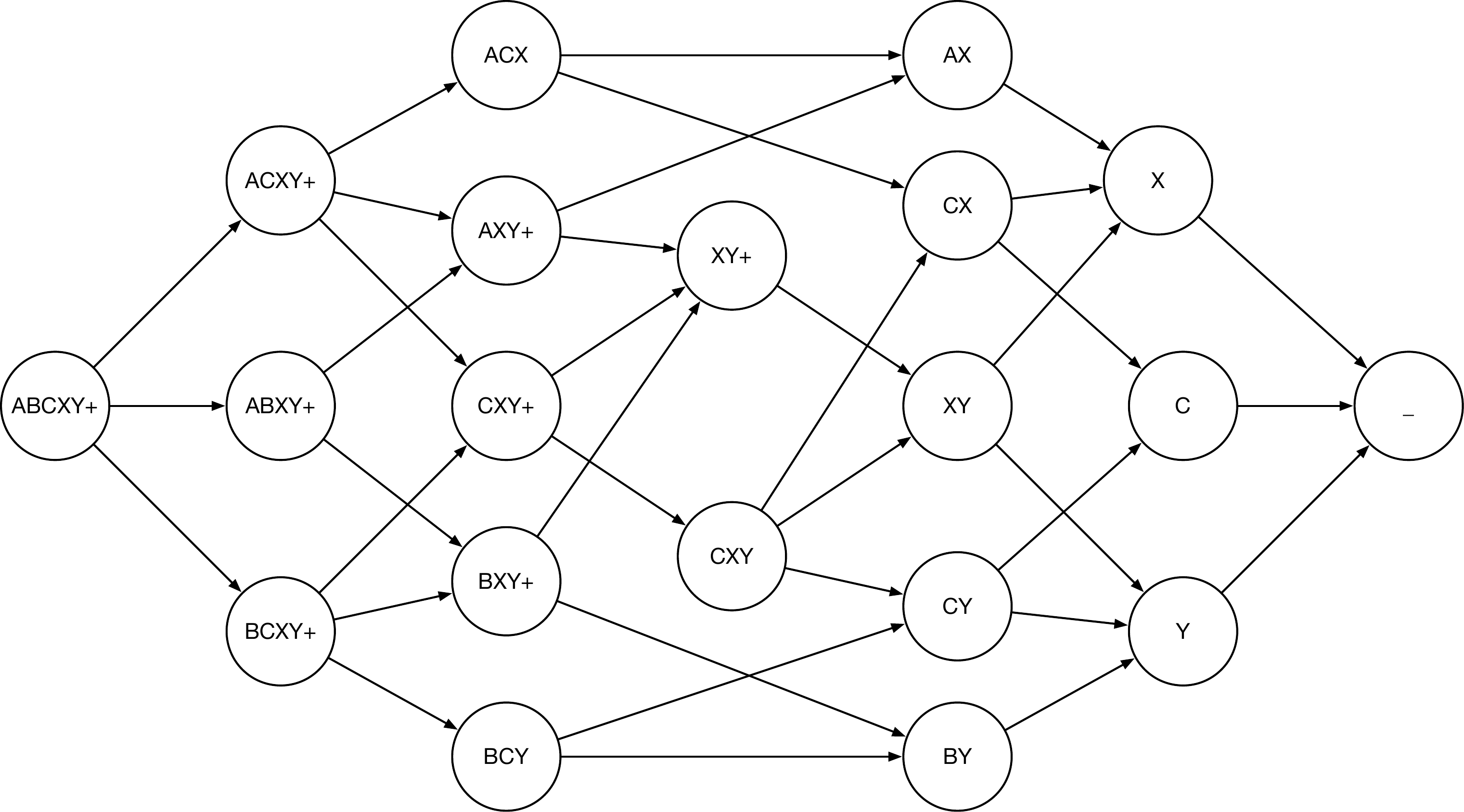}
	\caption{The diagram $D$.}
	\label{fig:diag3col}
\end{figure}

\subsubsection{Applying Subprocedure $\fpprocname$ on the Edge Constraint}\label{sssec:3coledge}
We now apply subprocedure $\fpprocname$ on the edge constraint, by using diagram $D'$. The first step in computing $\edgeconst$ is taking all maximal configurations of $\edgeconst_{\Pi_{\Delta}}$. Observe that, by the definition of $\edgeconst_{\Pi_{\Delta}}$, the maximal configurations can be obtained by starting from the maximal configurations $\fL$ in the edge constraint $\edgeconst_{\Pi^2}$ of the fixed point relaxation of defective $2$-coloring and adding, in all possible ways, the label $\C$ in exactly one set of $\fL$  (that is, for each maximal configuration in $\edgeconst_{\Pi^2}$, we obtain two maximal configurations for $\edgeconst_{\Pi_{\Delta}}$). Hence, we obtain the following maximal configurations.
\begin{enumerate}
	\item $\bbACX \s \bbBY$
	\item $\bbAX \s \bbBCY$
	
	\item $\bbABCXYp \s \bbempty$
	\item $\bbABXYp \s \bbC$
	
	\item $\bbACXYp \s \bbY$
	\item $\bbAXYp \s \bbCY$
	
	\item $\bbBCXYp \s \bbX$
	\item $\bbBXYp \s \bbCX$
	
	\item $\bbCXYp \s \bbXY$
	\item $\bbXYp \s \bbCXY$
\end{enumerate}
Consider an arbitrary pair of configurations $\fL_1 = \L_{1,1} \s \L_{1,2}$ and $\fL_2 = \L_{2,1} \s \L_{2,2}$. Consider now the configurations $\fL'_1 = (\L'_{1,1} \setminus \{\C\}) \s (\L_{1,2} \setminus \{\C\})$ and $\fL'_2 = (\L_{2,1} \setminus \{\C\}) \s (\L_{2,2} \setminus \{\C\})$. In \Cref{sec:2col}, we already showed that, by combining $\fL'_1$ with $\fL'_2$, we get a configuration $\fL = \L_1 \s \L_2$ in $\edgeconst_{\Pi^2}$. Now observe that, when combining $\fL_1$ with $\fL_2$, the result is either $(\L_1 \cup \{\C\}) \s \L_2$ or $\L_1 \s (\L_2 \cup \{\C\})$, and hence a configuration in $\edgeconst$.
This implies that no new configurations are added to $\edgeconst$. Finally, observe that $\gen{\edgeconst}$ gives exactly the configurations in $\edgeconst_{\Pi_{\Delta}}$.

\subsubsection{Applying Subprocedure $\fpprocname$ on the Node Constraint}\label{sssec:3colnode}
\allowdisplaybreaks
We devote this section to proving that, by applying  $\fpprocname$ on the node constraint $\nodeconst_{\Pi_\Delta}$, we obtain $\nodeconst_{\Pi_\Delta}$ itself. To this end, we need to prove that, for any pair of configurations in $\nodeconst_{\Pi_\Delta}$, and any possible combination of them, we obtain a configuration that is dominated w.r.t.\ $D$ by a configuration already present in $\nodeconst_{\Pi_\Delta}$.
Observe that $\nodeconst_{\Pi_\Delta}$ is described in $\Cref{ssec:3coldef}$ in a compact form: each row of the description (which we call a \emph{line}) represents (potentially) multiple configurations.  For example, the last line of case 5 is $\bbempty \s \bbXY^{j} \s \bbAXYp^{d-j} \s \bbBXYp^{d-j} \s \bbABXYp^{\Delta - 2d - 1 + j}$, which has a free variable $j$, and for different values of $j$ we get different configurations.
Since the number of lines in $\nodeconst_{\Pi_\Delta}$ is a fixed constant, while the number of configurations grows with $\Delta$, it will be convenient to consider combinations of lines instead of combinations of single configurations.
The notion of combination of configurations, and the notion of domination of configurations extends in the natural way to lines: when we combine two lines we get all possible combinations for all possible choices of the parameters, and a configuration is dominated by a line if there exists a value of the parameter of the line that gives a configuration that dominates.

Let $\fL_1$ and $\fL_2$ be two arbitrary lines in $\nodeconst_{\Pi_{\Delta}}$, and let $\L_1 \in \fL_1$ and $\L_2 \in \fL_2$ be two arbitrary labels.
We need to prove that there exist some target lines $\fT_1, \ldots, \fT_h$ in $\nodeconst_{\Pi_{\Delta}}$ satisfying that all the configurations produced by combining $\fL_1$ with $\fL_2$, where $\diagsup{\cdot}{\cdot}$ is applied on $\L_1$ and $\L_2$, are dominated (w.r.t.\ $D$) by at least one line in $\fT_1, \ldots, \fT_h$.
Let $S$ be the set of configurations obtained from the combinations, i.e., the set of configurations that the target lines are supposed to dominate. 
Note that $S$ contains the combination of $\fL_1$ with $\fL_2$ (where $\diagsup{\cdot}{\cdot}$ is applied on $\L_1$ and $\L_2$) for all possible values of the free variables in $\fL_1$ and $\fL_2$, and all possible matchings (i.e., all possible choices of pairs on which to apply $\diaginf{\cdot}{\cdot}$).
We remark that it is crucial for $h$ to be as small as possible since the approach that we will take relies on a procedure that has a running time that is exponential in $h$. In particular, we cannot simply use the whole set of lines as the set of target lines.

\paragraph{Notation.}
We will use the expression $\sqcup_{1 \le j \le x} \L_j^{e_i}$ to denote $\L_1^{e_1} \s \ldots \s L_x^{e_x}$.
Assume we want to combine the following two lines.
\begin{align*}
	\fL_1 = \bigsqcup_{1 \le j \le s} \L_{1,j}^{e_{1,j}} \\
	\fL_2 = \bigsqcup_{1 \le j \le t} \L_{2,j}^{e_{2,j}} \\
\end{align*}
The variable $e_{i,j}$ is the exponent of $\fL_i$ in position $j$.
Some given lines may have free variables in the exponents (i.e., variables that are not $\Delta$ nor $d$).
Let $f_i$ be the free variable of $\fL_i$, if it exists. Let $F \subseteq \{f_1, f_2\}$ be the set of existing free variables.
W.l.o.g., assume that we apply $\diagsup{\cdot}{\cdot}$ on $\L_{1,1}$ and $\L_{2,1}$. 

Let $\fC$ be a configuration obtained by combining $\fL_1$ and $\fL_2$ (where $\diagsup{\cdot}{\cdot}$ is applied on $\L_1$ and $\L_2$) for a fixed choice of the free variables of $\fL_1$ and $\fL_2$ and a fixed matching. More specifically, let $x_{i,j}$ denote how many copies of $\L_{1,i}$ are matched with copies of $\L_{2,j}$, excluding the matched pair on which $\diagsup{\cdot}{\cdot}$ is applied. We obtain the following.
\[
\fC = \diagsup{\L_{1,1}}{\L_{2,1}} \s \bigsqcup_{\substack{1 \le i \le s, \\ 1 \le j \le t}} \diaginf{\L_{1,i}}{\L_{2,j}}^{x_{i,j}}
\]
Assume that the target lines, for $1 \le i \le h$, are the following.
\begin{align*}
	\fT_i = \bigsqcup_{1 \le j \le h_i} \T_{i,j}^{t_{i,j}}
\end{align*}
The variable $t_{i,j}$ is the exponent of $\fT_i$ in position $j$. Let $k_i$ be the free variable of $\fT_i$, if it exists.
Throughout the remainder of \Cref{sssec:3colnode}, we will follow the above notation; in particular we will assume that the lines that we combine are $\fL_1$ and $\fL_2$, and that $\diagsup{\cdot}{\cdot}$ is applied on  $\L_{1,1}$ and $\L_{2,1}$. 

\paragraph{Our approach.}
Our approach consists of two parts. The first part is determining a set of target lines, while the second part consists in proving that, together, those target lines dominate all configurations obtained from the combination of $\fL_1$ and $\fL_2$ (where $\diagsup{\cdot}{\cdot}$ is applied on  $\L_{1,1}$ and $\L_{2,1}$). 
To show, in the second part, that the combinations of two lines are dominated by the target lines determined in the first part, it will be convenient to \emph{identify}, as a function of the free variables of the two combined lines and the variables $x_{i,j}$, which target line dominates the obtained combination. Therefore, in the first part, we will not only identify a set of target lines, but also, for each target line $\fT_i$ that has a free variable, an expression $\mathrm{expr}_i$ that specifies the value of the free variable of the target line as a function of the free variables of the two combined lines and the variables $x_{i,j}$.
More in detail, our approach consists of the following two parts.
\begin{enumerate}
	\item We create a list $\Psi$ of pairs $((\fL_1,\fL_2,\L_1,\L_2),\{(\fT_1,\mathrm{expr}_1), \ldots, (\fT_h,\mathrm{expr}_h)\})$ that contains one pair for each possible choice of $(\fL_1,\fL_2,\L_1,\L_2)$ satisfying that $\fL_1$ and $\fL_2$ are lines in $\nodeconst_{\Pi_{\Delta}}$, $\L_1 \in \fL_1$, and $\L_2 \in \fL_2$.  
	\item For each pair $((\fL_1,\fL_2,\L_1,\L_2),\{(\fT_1,\mathrm{expr}_1), \ldots, (\fT_h,\mathrm{expr}_h)\})$ in  $\Psi$, we run some automatic method to verify that it is \emph{valid}, i.e., that all the combinations given by $\fL_1$ and $\fL_2$ when taking $\diagsup{\cdot}{\cdot}$ on $\L_1$ and $\L_2$ are dominated by at least one line in $\{\fT_1, \ldots, \fT_h\}$.
\end{enumerate}
For the overall proof, it is not relevant how $\Psi$ is constructed (as long as step 2 succeeds), but for the interested reader we mention that we built this list by using computer tools for most of the cases, while for some hard cases we had to find the right target lines and expressions manually.  A computer program that encodes a list $\Psi$ for $\nodeconst_{\Pi_{\Delta}}$ and that verifies its correctness by applying the automatic method mentioned in step 2 can be found at \cite{checker}.

What remains to be done is to describe the automatic method mentioned in step 2 and to prove its correctness. Our automatic method is based on  \Cref{lem:auto}, which states that we can reduce the problem of verifying that a pair $((\fL_1,\fL_2,\L_1,\L_2),\{(\fT_1,\mathrm{expr}_1), \ldots, (\fT_h,\mathrm{expr}_h)\})$ is valid to the problem of proving that a finite set of systems of inequalities are all unsolvable over the integers. 
After the proof of \Cref{lem:auto} we describe our automatic method, and after that we provide an example.
\begin{lemma}\label{lem:auto}
	The problem of checking whether $((\fL_1,\fL_2,\L_1,\L_2),\{(\fT_1,\mathrm{expr}_1), \ldots, (\fT_h,\mathrm{expr}_h)\})$ is valid can be reduced to checking whether a finite set of systems of inequalities are all unsolvable over the integers.
\end{lemma}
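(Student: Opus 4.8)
\emph{Proof plan.} The idea is to coordinatize every configuration that can occur by integer parameters, observe that all relevant quantities are affine functions of those parameters, and then replace the domination relation by finitely many linear inequalities via a Hall-type criterion; once this is done, the \emph{negation} of validity unfolds into a finite disjunction of integer linear systems, and validity becomes ``all of these systems are unsolvable over the integers''.

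I would first introduce coordinates. Since $d=\lfloor(\Delta-3)/2\rfloor$, split into the two cases $\Delta=2d+3$ and $\Delta=2d+4$, so that $d\ge 1$ becomes the single integer parameter governing the instance and the floor disappears. Write $\fL_1=\bigsqcup_{1\le i\le s}\L_{1,i}^{e_{1,i}}$ and $\fL_2=\bigsqcup_{1\le j\le t}\L_{2,j}^{e_{2,j}}$, where the exponents $e_{1,i},e_{2,j}$ are affine in $d$ and in the free variables $f_1,f_2$ (if present). For the fixed tuple $(\fL_1,\fL_2,\L_1,\L_2)$ with $\L_1=\L_{1,1}$, $\L_2=\L_{2,1}$, a combined configuration $\fC$ is determined by $f_1,f_2$ together with the coarsened matching $x=(x_{i,j})$, where $x_{i,j}$ counts how many copies of $\L_{1,i}$ are matched under $\diaginf{\cdot}{\cdot}$ to copies of $\L_{2,j}$, excluding the single pair $(\L_1,\L_2)$ carrying $\diagsup{\cdot}{\cdot}$. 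The combinations one must dominate correspond exactly to the \emph{admissible} tuples $(d,f_1,f_2,x)$, i.e.\ those with $x_{i,j}\ge 0$, $\sum_j x_{1,j}=e_{1,1}-1$, $\sum_j x_{i,j}=e_{1,i}$ for $i\ge 2$, $\sum_i x_{i,1}=e_{2,1}-1$, $\sum_i x_{i,j}=e_{2,j}$ for $j\ge 2$, and all line exponents $\ge 0$. Because the diagram $D$ has unique infima and suprema, $\diagsup{\L_1}{\L_2}$ and each $\diaginf{\L_{1,i}}{\L_{2,j}}$ is a fixed element of $\Sigma_D$; hence the multiplicity of a label $m$ in $\fC$ equals the affine function $[\diagsup{\L_1}{\L_2}=m]+\sum_{(i,j):\,\diaginf{\L_{1,i}}{\L_{2,j}}=m}x_{i,j}$ of $x$, and substituting a target line's prescribed value $k_\ell:=\mathrm{expr}_\ell$ (affine in $d,f_1,f_2,x$) makes the multiplicity of every label of $\fT_\ell$ affine in $(d,f_1,f_2,x)$ as well.

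I would then linearize domination. Both $\fC$ and $\fT_\ell$ are multisets of size $\Delta$ over the fixed finite poset $(\Sigma_D,\preceq)$, where $a\preceq b$ means $b$ dominates $a$ (i.e.\ $a\in\mathrm{Succ}(b)$ in $D$). That $\fT_\ell$ dominates $\fC$ means there is a bijection between the $\Delta$ slots of $\fC$ and those of $\fT_\ell$ sending each $\fC$-slot to a $\succeq$-larger $\fT_\ell$-slot; applying Hall's theorem to the bipartite slot graph, and using that the neighbourhood of a slot depends only on its label, such a bijection exists iff for every up-closed $W\subseteq\Sigma_D$ one has $\mu_{\fC}(W)\le\mu_{\fT_\ell}(W)$, where $\mu(W)$ is the total multiplicity of labels in $W$. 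There are constantly many up-sets $W$, and each inequality is linear in $(d,f_1,f_2,x)$. Consequently ``$\fT_\ell$ fails to dominate $\fC$'' is the finite disjunction over up-sets $W$ of the inequality $\mu_{\fC}(W)-\mu_{\fT_\ell}(W)\ge 1$, and ``$\mathrm{expr}_\ell$ is out of range'' is the finite disjunction over exponents $p$ of $\fT_\ell$ of the inequality $p\le -1$.

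Finally I would assemble the reduction: the pair is \emph{invalid} iff there is an admissible tuple $(d,f_1,f_2,x)$ for which, for every $\ell$, either $\mathrm{expr}_\ell$ is out of range or $\fT_\ell$ fails to dominate $\fC$. Distributing this outer conjunction (over $\ell$) across the inner disjunctions rewrites it as a finite disjunction, over choice functions picking one disjunct per $\ell$, of systems consisting of the admissibility constraints together with the $h$ chosen linear inequalities; running over the two parity cases merely doubles the family, whose size is $O\!\big(\prod_{\ell\le h}(\#\{\text{exponents of }\fT_\ell\}+\#\{\text{up-sets of }\Sigma_D\})\big)$, exponential in $h$ but independent of $\Delta$ (which is why one keeps $h$ small). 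Hence the pair is valid iff every system in this finite family is unsolvable over the integers, as claimed. I expect the crux to be the linearization step: getting the direction of $\preceq$ right and justifying that Hall's obstruction sets may be taken to be label-up-sets, together with the bookkeeping that guarantees no nonlinearity creeps in — that $\diaginf{\cdot}{\cdot}$ and $\diagsup{\cdot}{\cdot}$ evaluate to fixed labels, that all line exponents are affine in $d,f_1,f_2$, and that the expressions $\mathrm{expr}_\ell$ are affine; everything after that is routine quantifier distribution.
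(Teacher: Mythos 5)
Your proposal follows essentially the same route as the paper: coordinatize each combination by integer parameters $(d,f_1,f_2,x_{i,j})$ over which all label multiplicities are affine, linearize the domination test via a Hall-type criterion giving finitely many linear inequalities (you quantify over up-closed $W\subseteq\Sigma_D$, the paper over ``right-closed cuts'' $R\subseteq S_{\fT_i}$ of the target line's support --- these yield equivalent inequality families), and then unfold the negation of validity by distributing the outer conjunction over the inner disjunctions into a finite family of integer-linear systems. Two small slips worth flagging: the parenthetical in your definition of $\preceq$ is reversed (for singletons, ``$b$ dominates $a$'' means $b\in\mathrm{Succ}(a)$, not $a\in\mathrm{Succ}(b)$; the remainder of your argument silently uses the correct direction), and beyond exponent non-negativity the target lines also carry $\Delta$-range side conditions such as $\Delta\le 3d+2$ that the parity split $\Delta\in\{2d+3,2d+4\}$ does not fully absorb and that must be added as linear constraints, as the paper does in $\mathcal{P}_{i,1}$.
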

\begin{proof}
We start by collecting some useful inequalities.

\paragraph{Restrictions on $\Delta$ and $d$.}
From the definition of $\nodeconst_{\Pi_{\Delta}}$, we obtain the following set of inequalities, which we denote by $\mathcal{A}_1$.
\begin{align*}
	d &\ge 1 \\
	\Delta &\le 2d + 4  \\
	\Delta &\ge 2d + 3 \\
	\Delta &\ge 5
\end{align*}

\paragraph{Restrictions on the input lines.}
The lines in the description of $\nodeconst_{\Pi_\Delta}$ come with some restrictions. For example, the first line of case $5$ requires $j \ge 0, d-j \ge 0, \Delta - 3d -2 +j \ge 0$, and the first line of case $6$ requires $\Delta = 2d+4$.
Let $\mathcal{A}_2$ be the set of inequalities expressing these restrictions on $\fL_1$ and $\fL_2$.

\paragraph{Restrictions on the obtained line.}
In the following, we infer some constraints on the variables $x_{i,j}$, as a function of the exponents of the lines $\fL_1$ and $\fL_2$. 
Let $z$ be $1$ if $i = 1$ and $0$ otherwise.
\begin{align}
	x_{i,j} &\ge 0 ~&(\text{for all }  1 \le i \le s, 1 \le j \le t) \label{constr:nonneg}\\
	e_{1,i} &= z + \sum_{1 \le j \le t} x_{i,j} ~&(\text{for all }  1 \le i \le s) \label{constr:exp1}\\
	e_{2,i} &= z + \sum_{1 \le j \le s} x_{j,i} ~&(\text{for all }  1 \le i \le t) \label{constr:exp2}
\end{align}
We call this set of inequalities $\mathcal{A}_3$.
The inequalities in (\ref{constr:nonneg}) represent the fact that the exponents of the obtained combination cannot be negative. The inequalities in (\ref{constr:exp1}) represent the fact that, for each label of $\fL_1$, we have a bound on how many copies are available to construct the combination, and this bound is $e_{1,i}$. The variable $z$ handles the special case of $i=1$, where the number of copies of $\L_{1,1}$ that we can use is $e_{1,1} - 1$, since $\L_{1,1}$ has been used once when computing $\diagsup{\L_{1,1}}{\L_{2,1}}$. The inequalities in (\ref{constr:exp2}) are analogous and concern the exponents of $\fL_2$.

\paragraph{Free variables of the target lines.}
Recall that $k_i$ is the free variable of $\fT_i$, if it exists. As discussed before, for each free variable $k_i$, we are given an expression $\mathrm{expr}_i$, as a function of $\Delta$, $d$, the variables $x_{i,j}$, and the free variables of $\fL_1$ and $\fL_2$. We call $\mathcal{A}_4$ the set of equations $k_i = \mathrm{expr}_i$.

\paragraph{Exponents of the target lines.}
Analogously to the restrictions on the input lines, some target lines may also have restrictions (e.g., the exponents of lines of case 5 are required to be non-negative, and lines of case 6 have some restriction on $\Delta$).
Let $\mathcal{P}_{i,1}$ be the set of inequalities expressing these constraints for the target line $\fT_i$.

\paragraph{A perfect matching between the obtained line and a target line.}
We now provide some necessary and sufficient conditions for a line $\fT = \T_1^{t_1} \s \ldots \s \T_h^{t_h}$ to dominate the obtained line $\fC$. 
Recall that a configuration $\mathcal{Y}$ is dominated by a configuration $\mathcal{Z}$ if there exists a perfect matching between the $\Delta$ labels of $\mathcal{Y}$ and the $\Delta$ labels of $\mathcal{Z}$ satisfying that, if label $\ell$ of $\mathcal{Y}$ is matched with label $\ell'$ of $\mathcal{Z}$, then $\ell'$ is reachable from $\ell$ in $D$. Hence, in the following, with \emph{$\ell$ can be matched with $\ell'$} we denote the fact that $\ell'$ is reachable from $\ell$ in $D$.
For a line $\fL = \L_1^{e_1} \s \ldots \L_x^{e_x}$, let $S_\fL = \{ \L_{j} ~|~ 1 \le j \le x\}$, that is, $S_\fL$ is the set of different labels appearing in $\fL$.
Call a set $R$ of labels  $\fL$-right-closed if, whenever a label $\ell$ is in $R$, the nodes that are in $\fL$ and are successors of $\ell$ in  diagram $D$ are also in $R$.  We define the set of \emph{right-closed cuts} $\mathcal{R}_{\fL}$ of a line $\fL$  by $\mathcal{R}_{\fL} := \{ R \subseteq S_{\fL} ~|~ R \text{ is $\fL$-right-closed }\}$. For example, the set of right-closed cuts of the line 
\[\bbempty^{d + 2} \s \bbCXY^{f} \s \bbACXYp^{d - f} \s \bbBCXYp^{d - f} \s \bbABCXYp^{\Delta - 3d - 2 + f}\]
is 
\begin{align*}
	\{&\{\}, \allowbreak \{\bbempty\},\allowbreak \{\bbempty,\bbCXY\},\allowbreak \{\bbempty,\bbCXY,\bbACXYp\},\allowbreak \{\bbempty,\bbCXY,\bbBCXYp\},\\&\{\bbempty,\bbCXY,\bbACXYp,\bbBCXYp\}, \{\bbempty,\bbCXY,\bbACXYp,\bbBCXYp,\bbABCXYp\}\}.
\end{align*}
For a $\fT$-right-closed set $R \subseteq S_\fT$ of labels of $\fT$, let
\[M(R) := \{ \ell \in S_{\fC} \mid \nexists \ell' \in S_{\fT} \setminus R \text{ s.t. $\ell'$  is reachable from $\ell$ in $D$}  \},\]
that is, $M(R) \subseteq S_\fC$ is the set of labels of $\fC$ that can only be matched with elements of $R$. In the following, for a predicate $P$, we set the expression $[P]$ equal to $1$ if the predicate $P$ is true, and to $0$ otherwise. Let $X(R)$ be the sum of the exponents of the elements of $M(R)$ in $\fC$, that is, 
\[
X(R) := [\diagsup{\L_{1,1}}{\L_{2,1}} \in M(R)] + \sum_{\substack{1 \le i \le s, \\ 1 \le j \le t}} [\diaginf{\L_{1,i}}{\L_{2,j}} \in M(R)] ~ x_{i,j}.
\]
Similarly, we define $T(R)$ as the sum of the exponents of the elements of $R$ in $\fT$, that is,
\[
T(R) := \sum_{1 \le j \le h} [\T_j \in R] ~ t_j.
\]
We claim that a perfect matching between $\fC$ and $\fT$ exists if and only if, for all right-closed cuts $R$ of~$\fT$,
\[
X(R) \le T(R)
\]
holds.

We prove this claim as follows. We construct a bipartite graph $G = (V \cup U, E)$, where $V = \{v_1, \ldots, v_\Delta\}$ and $U = \{u_1, \ldots, u_\Delta\}$. We define $\ell(v_i)$ as the $i$th label in the multiset $\fC$ taken in some arbitrary order, and $\ell(u_i)$ as the $i$th label in the multiset $\fT$ taken in some arbitrary order. We put an edge between $v_i$ and $u_j$ if and only if, in the diagram $D$, $\ell(u_j)$ can be reached from $\ell(v_i)$. Assume that the condition of our claim holds, that is, for all right-closed cuts $R$ of $\fT$ it holds that $X(R) \le T(R)$. 
In the following, we use Hall's marriage theorem \cite{hall} to prove that this assumption implies that a perfect matching exists in $G$. Hall's marriage theorem implies that we only need to prove that, for every subset $W$ of $V$, the number of neighbors $N_G(W)$ in $G$ of the nodes in $W$  is at least $|W|$.

Let $\mathrm{rc}(W) := \{  \ell(u_j)  ~|~ u_j \in N_G(W) \}$, that is, $\mathrm{rc}(W)$ contains all labels of $\fT$ of nodes that are neighbors of nodes in $W$.
Observe that $\mathrm{rc}(W)$ is $\fT$-right-closed, since, whenever a node $v \in V$ can be matched with a node $u \in U$, it can also be matched with all nodes $u' \in U$ satisfying that $\ell(u')$ is reachable by $\ell(u)$ in $D$. Also, observe that, if a node in $N_G(W)$ has label $\ell$, then all other nodes in $U$ that have label $\ell$ are also in $N_G(W)$. This implies that $T(R) = |N_G(W)|$.
 Let $M$ be the subset of nodes in $V$ that have a label in $M(R)$, that is, the nodes in $V$ that can only be matched with elements in $N_G(W)$. Clearly, the nodes in $W$ are all contained in $M$. Also, by assumption, $X(R) \le T(R)$. Hence, we obtain that
\[
 |W| \le |M| = X(R) \le T(R) = |N_G(W)|,
\]
implying that Hall's condition is satisfied.

We define $\mathcal{P}_{i,2}$ as the set consisting of the inequalities $X(R) \le T(R)$ for all right-closed cuts $R$ of $\fT_i$.
We define $\mathcal{P}_{i}$ as $\mathcal{P}_{i,1} \cup \mathcal{P}_{i,2}$. We obtain that $\fT_i$ dominates $\fC$ if and only if all inequalities of $\mathcal{P}_{i}$ are satisfied.

\paragraph{Automatic checking.}
We define $\mathcal{A}$ as $\mathcal{A}_{1} \cup \mathcal{A}_{2} \cup \mathcal{A}_{3} \cup \mathcal{A}_{4}$.
By abusing notation, by $\mathcal{A}$ we also denote the conjunction of all the inequalities contained in $\mathcal{A}$. We do the same for $\mathcal{P}_i$, that is, $\mathcal{P}_i = p_{i,1} \land \ldots \land p_{i,|\mathcal{P}_i|}$, where the $p_{i,j}$ denote the inequalities in $\mathcal{P}_i$.
We consider the case $F = \{f_1,f_2\}$, the other cases are analogous.

A pair $((\fL_1,\fL_2,\L_1,\L_2),\{(\fT_1,\mathrm{expr}_1), \ldots, (\fT_h,\mathrm{expr}_h)\})$ is valid if and only if the following holds: if all the inequalities in $\mathcal{A}$ are satisfied, then all the configurations obtained by line $\fC$ are dominated by at least one target line, or in other words, for all choices of $\Delta, f_1, f_2, x_{1,1},\ldots,x_{s,t}$, there exists an $i$ for which $\mathcal{P}_i$ is satisfied. As a formula, the statement that the pair is valid is equivalent to the following statement.
\[
\forall ~ \Delta, f_1, f_2,x_{1,1},\ldots,x_{s,t} \in \mathbb{N} ~ (\mathcal{A} \implies \bigvee_{1 \le i \le h} \mathcal{P}_i)
\]
This, in turn, is equivalent to
\begin{align*}
	& \forall \Delta, f_1, f_2,x_{1,1},\ldots,x_{s,t} \in \mathbb{N} ~ (\lnot \mathcal{A} \lor \bigvee_{1 \le i \le h} \mathcal{P}_i) & \iff\\
	& \lnot\lnot (\forall  \Delta, f_1, f_2,x_{1,1},\ldots,x_{s,t} \in \mathbb{N} ~ (\lnot \mathcal{A} \lor \bigvee_{1 \le i \le h} \mathcal{P}_i)) & \iff\\
	& \lnot \exists \Delta, f_1, f_2,x_{1,1},\ldots,x_{s,t} \in \mathbb{N} ~ \lnot(\lnot \mathcal{A} \lor \bigvee_{1 \le i \le h} \mathcal{P}_i) & \iff\\
	& \lnot \exists \Delta, f_1, f_2,x_{1,1},\ldots,x_{s,t} \in \mathbb{N} ~ (\mathcal{A} \land \lnot \bigvee_{1 \le i \le h} \mathcal{P}_i) & \iff\\
	& \lnot \exists \Delta, f_1, f_2,x_{1,1},\ldots,x_{s,t} \in \mathbb{N} ~ (\mathcal{A} \land \bigwedge_{1 \le i \le h} \lnot\mathcal{P}_i) & \iff\\
	& \lnot \exists \Delta, f_1, f_2,x_{1,1},\ldots,x_{s,t} \in \mathbb{N} ~ (\mathcal{A} \land \bigwedge_{1 \le i \le h} \lnot \bigwedge_{1 \le j \le |\mathcal{P}_i|} p_{i,j}) & \iff\\
	& \lnot \exists \Delta, f_1, f_2,x_{1,1},\ldots,x_{s,t} \in \mathbb{N} ~ (\mathcal{A} \land \bigwedge_{1 \le i \le h} \bigvee_{1 \le j \le |\mathcal{P}_i|} \lnot p_{i,j}) & \iff\\
	& \bigwedge_{ P \in  \mathcal{P}_1 \times \ldots \times  \mathcal{P}_h } \lnot \exists \Delta, f_1, f_2,x_{1,1},\ldots,x_{s,t} \in \mathbb{N} ~ ( \mathcal{A} \land \bigwedge_{p \in P} \lnot p )&
\end{align*}
The last statement states that checking whether $\fT_1,\ldots,\fT_h$ dominates $\fC$ can be reduced to checking whether a finite number of systems of inequalities have no solution over the integers. This concludes our proof.
\end{proof}
We now describe our automatic procedure. 
 At first, for each given pair $((\fL_1,\fL_2,\L_1,\L_2),\allowbreak \{(\fT_1,\mathrm{expr}_1), \ldots, \allowbreak (\fT_h,\mathrm{expr}_h)\})$ we generate a set $S$  of systems of inequalities as described in \Cref{lem:auto}. 
In general, a problem that can be solved efficiently is checking whether a given system of inequalities has no solution over the reals. Unfortunately, it is false that the systems of inequalities in $S$ have no solutions over the reals for the list $\Psi$ that we computed. For this reason, we use the fact that our variables are integers as follows.
Consider some inequality $p$ of the form $\mathrm{expression} \le \emph{value}$. The inequality $\lnot p$ would be $\mathrm{expression} > \emph{value}$, but since all the variables are integers, we instead write $\lnot p$ as $\mathrm{expression} \ge \emph{value} + 1$.
The second step of our automatic procedure consists in computing the set $S'$ of systems of inequalities obtained according to the described transformation of inequalities. The third step consists in verifying that all systems of inequalities in $S'$ have no solution over the reals, which in particular implies that they have no solutions over the integers.
By using computer tools, we applied this automatic procedure on the list $\Psi$ that we computed, and all systems of inequalities turned out to have no solution \cite{checker}.

\paragraph{An example.}
We provide an example by considering the following two lines (that are the first of case 1 and the fifth of case 5):
\begin{itemize}
	\item $\fL_1 = \bbempty^{d+1} \s \bbCX^{d} \s \bbACX^{\Delta - 2d - 1}$
	\item $\fL_2 = \bbempty \s \bbXY^{f} \s \bbAXYp^{d-f} \s \bbBXYp^{d - f} \s \bbABXYp^{\Delta-2d-1+f}$
\end{itemize}
The example that we are going to provide is for the case in which $\diagsup{\cdot}{\cdot}$ is taken on $\bbACX$ and $\bbXY$. 
Observe that $\fL_2$ is actually not just a configuration, but it is a \emph{set} of configurations (i.e., a line), since it depends on the parameter $f$.

\paragraph{The target lines.}
We are going to show how the automatic procedure proves that all configurations obtained by combining $\fL_1$ and $\fL_2$ are dominated, for some $k$ that may depend on how the combination is performed, by at least one of the following lines:
\begin{itemize}
	\item $\fT_1 = \bbempty^{d + 2} \s \bbCXY^{k} \s \bbACXYp^{d - k} \s \bbBCXYp^{d - k} \s \bbABCXYp^{\Delta - 3d - 2 + k}$
	\item $\fT_2 = \bbX^{d + 1} \s \bbAXYp^{d} \s \bbABXYp^{\Delta - 2d - 1}$
\end{itemize}
We are going to use $k = d-x_{2,1}-x_{2,2}$.

\paragraph{Restrictions on $\Delta$ and $d$.}From the definition of $\nodeconst_{\Pi_{\Delta}}$, we obtain the following set of inequalities, which we denote by $\mathcal{A}_1$.
\begin{align*}
	d &\ge 1 \\
	\Delta &\le 2d + 4  \\
	\Delta &\ge 2d + 3 \\
	\Delta &\ge 5
\end{align*}

\paragraph{Restrictions on the input lines.}
Configuration $\fL_2$ has some restrictions on its exponents. Let $\mathcal{A}_2$ be the set of such inequalities, that we report here.
\begin{align*}
	f &\ge 0 \\
	d - f & \ge 0\\
	\Delta - 2d - 1 + f &\ge 0
\end{align*}

\paragraph{The obtained line.}
By combining $\fL_1$ and $\fL2$ we obtain the following line:
\begin{align*}
	\fC &&=&&\diagsup{\bbACX}{\bbXY} \s & \diaginf{\bbempty}{\bbempty}^{x_{1,1}} \s \diaginf{\bbempty}{\bbXY}^{x_{1,2}} \s \diaginf{\bbempty}{\bbAXYp}^{x_{1,3}} \s \diaginf{\bbempty}{\bbBXYp}^{x_{1,4}} \\ 
	& && && \diaginf{\bbempty}{\bbABXYp}^{x_{1,5}} 
	 \s \diaginf{\bbCX}{\bbempty}^{x_{2,1}} \s \diaginf{\bbCX}{\bbXY}^{x_{2,2}} \s \diaginf{\bbCX}{\bbAXYp}^{x_{2,3}} \\
	 & && && \diaginf{\bbCX}{\bbBXYp}^{x_{2,4}} \s \diaginf{\bbCX}{\bbABXYp}^{x_{2,5}} 
	 \s \diaginf{\bbACX}{\bbempty}^{x_{3,1}} \s \diaginf{\bbACX}{\bbXY}^{x_{3,2}} \\
	 & && && \diaginf{\bbACX}{\bbAXYp}^{x_{3,3}} \s \diaginf{\bbACX}{\bbBXYp}^{x_{3,4}} \s \diaginf{\bbACX}{\bbABXYp}^{x_{3,5}} \\
	 &&=&&\bbX \s & \bbempty^{x_{1,1}} \s \bbXY^{x_{1,2}} \s \bbAXYp^{x_{1,3}} \s \bbBXYp^{x_{1,4}} \s \bbABXYp^{x_{1,5}} \\
	 & && &&  \bbCX^{x_{2,1}} \s \bbCXY^{x_{2,2}} \s \bbACXYp^{x_{2,3}} \s \bbBCXYp^{x_{2,4}} \s \bbABCXYp^{x_{2,5}} \\
	 & && &&  \bbACX^{x_{3,1}} \s \bbACXYp^{x_{3,2}} \s \bbACXYp^{x_{3,3}} \s \bbABCXYp^{x_{3,4}} \s \bbABCXYp^{x_{3,5}} 
\end{align*}
Note that the variable $x_{i,j}$ represents how many $\diaginf{\cdot}{\cdot}$ have been taken with the $i$th set in line $\fL_1$ and the $j$th set in line $\fL_2$.
The exponents of the obtained line satisfy the following constraints, and we call the set containing them $\mathcal{A}_3$.
\begin{align*}
 x_{i,j} &\ge 0 ~(\forall ~ i,j)\\
 x_{1,1} + x_{1,2} + x_{1,3} + x_{1,4} + x_{1,5} &= d + 1\\
 x_{2,1} + x_{2,2} + x_{2,3} + x_{2,4} + x_{2,5} &= d\\
 x_{3,1} + x_{3,2} + x_{3,3} + x_{3,4} + x_{3,5} &= \Delta -2d -2\\
 x_{1,1} + x_{2,1} + x_{3,1} &= 1\\
 x_{1,2} + x_{2,2} + x_{3,2} &= f - 1\\
 x_{1,3} + x_{2,3} + x_{3,3} &= d - f\\
 x_{1,4} + x_{2,4} + x_{3,4}  &= d - f\\
 x_{1,5} + x_{2,5} + x_{3,5}  &= \Delta - 2d -1 + f
\end{align*}
As an example, the equality $ x_{3,1} + x_{3,2} + x_{3,3} + x_{3,4} + x_{3,5} = \Delta -2d -2$ says that all possible $\diaginf{\bbACX}{\cdot}$ must be exactly $\Delta - 2d -2$, since the exponent of $\bbACX$ in $\fL_1$ is $\Delta -2d -1$ and one copy of $\bbACX$ has been used for computing $\diagsup{\bbACX}{\bbXY}$. The other equations are obtained in the same way.

\paragraph{Free variable on the target lines.}
Let us now consider the two target lines. Observe that the line $\fT_1$ has a parameter $k$. We are free to choose our preferred value of $k$, and we set it as $k = d-x_{3,2}-x_{3,3}$.
Let $\mathcal{A}_4$ be the set containing the following equation.
\begin{align*}
	&k = d-x_{3,2}-x_{3,3}
\end{align*}

\paragraph{Exponents on the target lines.}
In order to use line $\fT_1$ as a target, we need to satisfy its requirements, that is, the exponents must not be negative. For $\fT_2$ there are no such requirements, since its exponents have no free variables. We call $\mathcal{P}_{1,1}$ the inequalities that must be satisfied for targeting $\fT_1$, and these are as follows.
\begin{align*}
	&k \ge 0 \\
	&d - k \ge 0\\
	&\Delta - 3d + k -2 \ge 0\\
\end{align*}

\paragraph{A perfect matching between the obtained line and a target line.}
In order, for a combination of $\fL_1$ and $\fL_2$ to be dominated by $\fT_1$, there must exist a perfect matching between the labels of the combination and the labels of $\fT_1$.
As discussed in the proof of \Cref{lem:auto}, a sufficient condition for a perfect matching to exist is that, for every \emph{right-closed cut} $R$ of $\fT_1$, it holds that the number of labels of $\fC$ that can only be matched with elements of $R$ is at most the number of times the elements of $R$ appear in $\fT_1$. The right-closed cuts of $\fT_1$ are $\{\{\}, \allowbreak \{\bbempty\},\allowbreak \{\bbempty,\bbCXY\},\allowbreak \{\bbempty,\bbCXY,\bbACXYp\},\allowbreak \{\bbempty,\bbCXY,\bbBCXYp\},\allowbreak \{\bbempty,\bbCXY,\bbACXYp,\bbBCXYp\},\allowbreak \{\bbempty,\bbCXY,\bbACXYp,\bbBCXYp,\bbABCXYp\}\}$.

Let us consider the case $R = \{\bbempty\}$. The labels $S= \{\bbX , \bbempty  , \bbXY , \bbAXYp  , \bbBXYp  , \bbABXYp  ,\allowbreak \bbCX  ,\allowbreak \bbACX\}$ can only be matched with elements in $R$, and they appear in $\fC$ for $1 + x_{1,1} + x_{1,2} + x_{1,3} + x_{1,4} + x_{1,5} + x_{2,1} + x_{3,1}$ times, while $\bbempty$ appears in $\fT_1$ for $d+2$ times. Hence, in order for a perfect matching to exist, it must hold that  $1 + x_{1,1} + x_{1,2} + x_{1,3} + x_{1,4} + x_{1,5} + x_{2,1} + x_{3,1} \le d+2$. 

Another example is $R = \{\bbempty,\bbCXY\}$.  The labels $S= \{\bbX , \bbempty  , \bbXY , \bbAXYp  , \bbBXYp  , \bbABXYp  , \bbCX  ,\allowbreak \bbCXY,\allowbreak \bbACX\}$ can only be matched with elements in $R$, and they appear in $\fC$ for $1 + x_{1,1} + x_{1,2} + x_{1,3} + x_{1,4} + x_{1,5} + x_{2,1} + x_{2,2} + x_{3,1}$ times, while the labels in $R$ appear in $\fT_1$, in total, for $d+2 +k$ times. Hence, in order for a perfect matching to exist, it must hold that  $1 + x_{1,1} + x_{1,2} + x_{1,3} + x_{1,4} + x_{1,5} + x_{2,1} + x_{2,2} + x_{3,1} \le d+2 +k$. If we consider all right-closed cuts, we obtain the following inequalities, that we call $\mathcal{P}_{1,2}$. 
\begin{align*}
	&x_{1,1} + x_{1,2} + x_{1,3} + x_{1,4} + x_{1,5} + x_{2,1} + x_{3,1} + 1 \le d + 2\\
	&x_{1,1} + x_{1,2} + x_{1,3} + x_{1,4} + x_{1,5} + x_{2,1} + x_{2,2} + x_{3,1} + 1 \le d + k + 2\\
	&x_{1,1} + x_{1,2} + x_{1,3} + x_{1,4} + x_{1,5} + x_{2,1} + x_{2,2} + x_{2,3} + x_{3,1} + x_{3,2} + x_{3,3} + 1 \le 2d + 2\\
	&x_{1,1} + x_{1,2} + x_{1,3} + x_{1,4} + x_{1,5} + x_{2,1} + x_{2,2} + x_{2,4} + x_{3,1} + 1 \le 2d + 2\\
	&x_{1,1} +x_{1,2} + x_{1,3} + x_{1,4} + x_{1,5} + x_{2,1} + x_{2,2} + x_{2,3} + x_{2,4} + x_{3,1} + x_{3,2} + x_{3,3} + 1 \le 3d - k + 2\\
	& x_{1,1} +x_{1,2} + x_{1,3} + x_{1,4} + x_{1,5} + x_{2,1} + x_{2,2} + x_{2,3} + x_{2,4} + x_{2,5} \\
	&~~~~~~~~~~~~~~~~~~~~~~~~~~~~~~~~~ + x_{3,1} + x_{3,2} + x_{3,3} + x_{3,4} + x_{3,5} + 1 \le \Delta 
\end{align*}
Similarly, for $\fT_2$, we obtain the following inequalities, that we call $\mathcal{P}_{2}$. An interesting example here is the case $R = \{\}$. In fact, observe that the label $\bbempty$ of $\fC$ cannot be mapped to any label of $\fT_2$, and hence we get the inequality $x_{1,1} \le 0$.
\begin{align*}
	&x_{1,1} \le 0\\
	&x_{1,1} + x_{1,2} + x_{1,4} + x_{2,1} + x_{2,2} + x_{2,4} + x_{3,1} + 1 \le d + 1\\
	&x_{1,1} + x_{1,2} + x_{1,3} + x_{1,4} + x_{2,1} + x_{2,2} + x_{2,3} + x_{2,4} + x_{3,1} + x_{3,2} + x_{3,3} + 1 \le  2d + 1\\
	&x_{1,1} + x_{1,2} + x_{1,3} + x_{1,4} + x_{1,5} + x_{2,1} + x_{2,2} + x_{2,3} + x_{2,4} + x_{2,5}\\
	& ~~~~~~~~~~~~~~~~~~~~~~~~~~~~~~~~~  + x_{3,1} + x_{3,2} + x_{3,3} + x_{3,4} + x_{3,5} + 1 \le  
	\Delta
\end{align*}

\paragraph{Automatic checking.}
We define $\mathcal{P}_{1}$ as $\mathcal{P}_{1,1} \cup  \mathcal{P}_{1,2}$ and $\mathcal{A}$ as $\mathcal{A}_{1} \cup \mathcal{A}_{2} \cup \mathcal{A}_{3} \cup \mathcal{A}_{4}$. By abusing notation, by $\mathcal{A}$ we also denote the conjunction of all the inequalities contained in $\mathcal{A}$. We do the same for $\mathcal{P}_1$ and $\mathcal{P}_2$, that is, $\mathcal{P}_1 = p_{1,1} \land \ldots \land p_{1,9}$ (since $\mathcal{P}_1$ contains 9 inequalities) and $\mathcal{P}_2 = p_{2,1} \land \ldots \land p_{2,4}$.
The lines $\fT_1$ and $\fT_2$ dominate $\fC$ if the following statement holds.
\[
  \forall \Delta, f,x_{1,1},\ldots,x_{3,5} ~ (\mathcal{A} \implies \mathcal{P}_1 \lor \mathcal{P}_2)
\]
The statement says that, if our assumptions are true, then all the configurations given by the obtained line can be mapped in $\fT_1$ or in $\fT_2$. The statement is equivalent to
\begin{align*}
	& \lnot \exists \Delta, f,x_{1,1},\ldots,x_{3,5} ~ (\mathcal{A} \land (\lnot p_{1,1} \lor \ldots \lor \lnot p_{1,9}) \land (\lnot p_{2,1} \lor \ldots \lor \lnot p_{2,4}))) & \iff\\
	& \lnot \exists \Delta, f,x_{1,1},\ldots,x_{3,5} ~ ((\mathcal{A} \land \lnot p_{1,1} \land \lnot p_{2,1}) \lor \ldots \lor (\mathcal{A} \land \lnot p_{1,9} \land \lnot p_{2,4})) & \iff\\
	& \lnot (\exists \Delta, f,x_{1,1},\ldots,x_{3,5} ~ (\mathcal{A} \land \lnot p_{1,1} \land \lnot p_{2,1}) \lor \ldots \lor \exists \Delta, f,x_{1,1},\ldots,x_{3,5} ~ (\mathcal{A} \land \lnot p_{1,9} \land \lnot p_{2,4})) & \iff\\
	&  \lnot\exists \Delta, f,x_{1,1},\ldots,x_{3,5} ~ (\mathcal{A} \land \lnot p_{1,1} \land \lnot p_{2,1}) \land \ldots \land \lnot\exists \Delta, f,x_{1,1},\ldots,x_{3,5} ~ (\mathcal{A} \land \lnot p_{1,9} \land \lnot p_{2,4}) &
\end{align*}
The last statement implies that proving that $\fT_1$ and $\fT_2$ dominate $\fC$ can be reduced to show that many systems of inequalities have no solution. For an example, here we consider only the first system of inequalities (among the total of $36$ cases), that is, $\mathcal{A} \land \lnot p_{1,1} \land \lnot p_{2,1}$, and we manually show that it has no solution. Since our variables are integers, $ \lnot p_{1,1}$ and $\lnot p_{2,1}$ can be written as $	k \le -1$ and $x_{1,1} \ge 1$.
Hence, $\mathcal{A} \land \lnot p_{1,1} \land \lnot p_{2,1}$ can be written as the conjunction of the following inequalities.
\begin{align*}
		d &\ge 1 \\
	\Delta &\le 2d + 4  \\
	\Delta &\ge 2d + 3 \\
	\Delta &\ge 5\\
		f &\ge 0 \\
	d - f & \ge 0\\
	\Delta - 2d - 1 + f &\ge 0\\
		x_{i,j} &\ge 0 ~(\forall ~ i,j)\\
	 x_{1,1} + x_{1,2} + x_{1,3} + x_{1,4} + x_{1,5} &= d + 1\\
	x_{2,1} + x_{2,2} + x_{2,3} + x_{2,4} + x_{2,5} &= d\\
	x_{3,1} + x_{3,2} + x_{3,3} + x_{3,4} + x_{3,5} &= \Delta -2d -2\\
	x_{1,1} + x_{2,1} + x_{3,1} &= 1\\
	x_{1,2} + x_{2,2} + x_{3,2} &= f - 1\\
	x_{1,3} + x_{2,3} + x_{3,3} &= d - f\\
	x_{1,4} + x_{2,4} + x_{3,4}  &= d - f\\
	x_{1,5} + x_{2,5} + x_{3,5}  &= \Delta - 2d -1 + f\\
	k &= d-x_{3,2}-x_{3,3}\\
	k &\le -1 \\
	x_{1,1} &\ge 1
\end{align*}
We prove that this system of inequalities has no solution over the reals. By combining $k = d-x_{3,2}-x_{3,3}$ and $k \le -1$ we obtain that $x_{3,2}+x_{3,3} \ge d+1$. By combining 
	$x_{1,2} + x_{2,2} + x_{3,2} = f - 1$ and $x_{1,3} + x_{2,3} + x_{3,3} = d - f$ we obtain $x_{1,2} + x_{2,2} + x_{3,2} + x_{1,3} + x_{2,3} + x_{3,3}  = d -1$, that, since $x_{i,j} \ge 0$ for all $i,j$, implies $x_{3,2} + x_{3,3}  \le d -1$, which is a contradiction.

\section{Open Questions}\label{sec:open}
We conclude with some open questions. 

\paragraph{Defective coloring.}
Recent results showed that $(\bar{\Delta} + 1)$-edge coloring, where $\bar{\Delta}$ is the maximum degree of the line graph, can be solved in 
$O(\log^{12} \Delta + \log^* n)$ rounds \cite{Balliu0KO22a}. However, whether a polylogarithmic-in-$\Delta$ algorithm for computing a $(\Delta+1)$-vertex coloring exists, is a major open question. The edge coloring algorithm has been obtained by providing a subroutine that is able to $2$-color the edges with defect $(1+\varepsilon)\bar{\Delta}/2$, and understanding whether similar subroutines exist for vertex coloring is an interesting open question.
 \Cref{sec:2col} states that a similar result cannot be achieved for vertex coloring by using $2$ colors, and \Cref{sec:3col} states that even with $3$ colors this is not doable, since we would need a subroutine for computing a $3$-coloring with defect $(1+\varepsilon)\Delta /3$, but the lower bound that we provided implies that we cannot obtain a defect smaller than $\Delta/2 - O(1)$. 
  As a starting point for understanding defective colorings in general, we point out that our lower bound for the case of $3$ colors does not match the best known upper bound, which requires a defect of at least $\frac{2\Delta - 4}{3}$ \cite{BalliuHLOS19}. 
\begin{oq}
	For which values of $d$ is the $d$-defective $3$-coloring problem solvable in $O(f(\Delta) \cdot \log^* n)$ rounds, for some function $f$, in graphs of maximum degree $\Delta$?
\end{oq}

In general, it would be interesting to understand for what values of $d$, $c$ and $\Delta$, $d$-defective $c$-coloring is solvable in $O(f(\Delta) \cdot \log^* n)$ rounds, for some function $f$, in graphs of maximum degree $\Delta$. In particular, it is known that $d$-defective $c$-coloring can be solved in $O(\log^* n)$ rounds if $c = O((\frac{\Delta}{d+1})^2)$, while it is known to require $\Omega(\log_\Delta n)$ rounds if $c \le \frac{\Delta}{d+1}$, so there is a wide gap between the lower and the upper bound on the number of colors that makes the problem solvable in $O(f(\Delta) \cdot \log^* n)$ rounds. Since variants of defective coloring are at the core of many coloring algorithms, we believe that understanding defective coloring is an interesting open question.
\begin{oq}
	For which values of $d$, $c$, and $\Delta$, is the  $d$-defective $c$-coloring problem solvable in $O(f(\Delta) \cdot \log^* n)$ rounds, for some function $f$, in graphs of maximum degree $\Delta$?
\end{oq}

\paragraph{Fixed points.}
While our fixed point procedure is able to provide most of the fixed points present in the literature, there is one notable exception. In \cite{Balliu0KO23}, a fixed point relaxation for a problem called \emph{hypergraph colorful $(r-1)\Delta$-coloring} is shown, and the fixed point is obtained by first proving that a solution for hypergraph colorful $\Delta(r-1)$-coloring can be converted in $0$ rounds into a solution for the more standard hypergraph $\Delta$-coloring, and by then providing a fixed point relaxation for this latter problem. By running procedure $\fpp$ on hypergraph $\Delta$-coloring, we would indeed obtain the same fixed point relaxation shown in \cite{Balliu0KO23}. However, if we run it on 
hypergraph colorful $(r-1)\Delta$-coloring, it would fail (the obtained problem would be $0$-round-solvable). We would like to understand whether there exists a \emph{universal} procedure for finding fixed point relaxations.
\begin{oq}
	Assume that there exists a fixed point relaxation $\Pi'$ for a problem $\Pi$. Is there a procedure that is able to find $\Pi'$ automatically?
\end{oq}

\paragraph{Simpler proofs.}
While we have been able to provide a non-trivial fixed point relaxation for $\lfloor (\Delta - 3)/2 \rfloor$-defective $3$-coloring, the proof required to automate a case analysis by using computer tools. While such a result is very interesting, as it shows that proofs based on round elimination can sometimes be automated (and hence answering affirmatively Open Question 9 in \cite{hideandseek}), we would like to understand whether there is a simpler, shorter, and more natural proof. This could help in understanding more complicated cases, such as defective colorings with more colors.
\begin{oq}
	Is there a simple and compact proof for the fact that the fixed point relaxation that we provided for  $\lfloor (\Delta - 3)/2 \rfloor$-defective $3$-coloring is indeed a fixed point?
\end{oq}

\urlstyle{same}
\bibliographystyle{alpha}
\bibliography{fixed-point}

\end{document}